\newcolumntype{C}[1]{>{\centering\let\newline\\\arraybackslash\hspace{0pt}}m{#1}}
\newcolumntype{L}[1]{>{\raggedright\let\newline\\\arraybackslash\hspace{0pt}}m{#1}}
\Crefname{lemma}{Lemma}{Lemmas}
\Crefname{proposition}{Proposition}{Propositions}
\Crefname{definition}{Definition}{Definitions}
\Crefname{theorem}{Theorem}{Theorems}
\Crefname{conjecture}{Conjecture}{Conjectures}
\Crefname{corollary}{Corollary}{Corollaries}
\Crefname{example}{Example}{Examples}
\Crefname{section}{Section}{Sections}
\Crefname{subsection}{subsection}{subsections}
\Crefname{appendix}{Appendix}{Appendices}
\Crefname{figure}{Fig.}{Figs.}
\Crefname{equation}{eq.}{eqs.}
\Crefname{table}{Table}{Tables}
\Crefname{item}{Property}{Properties}
\Crefname{remark}{Remark}{Remarks}
\newtheorem{theorem}{Theorem}[section]
\newtheorem{lemma}[theorem]{Lemma}
\renewcommand{\ket}[1]{| #1 \rangle}
\newcommand{\1}{\mathds{1}}
\newcommand{\RR}{\mathds{R}}
\newcommand{\identity}{\mathds{1}}
\newcommand{\Eideal}{E^{\mathrm{ideal}}}
\definecolor{citegreen}{RGB}{0,165,0}
\renewcommand{\paragraph}[1]{\textbf{\textit{#1}}}
\title{Robust Lindbladian Estimation for Quantum Dynamics}
\author{Yinchen Liu}
\author{James R. Seddon}
\author{Tamara Kohler}
\author{Emilio Onorati}
\author{Toby Cubitt}
\affil{Phasecraft Ltd., London, UK}
\begin{document}

\date{10th July 2025}
\maketitle
\abstract{
We revisit the problem of fitting Lindbladian models to the outputs of quantum process tomography. A sequence of prior theoretical works approached the problem by considering whether there exists a Lindbladian generator close to a matrix logarithm of the tomographically estimated transfer matrix. This technique must take into account the non-uniqueness of the matrix logarithm, so that in general multiple branches of the logarithm must be checked. In contrast, all practical demonstrations of Lindbladian fitting on real experimental data have to our knowledge eschewed logarithm search, instead adopting direct numerical optimisation or ad-hoc approaches tailored to a particular experimental realisation. In our work, we introduce algorithmic improvements to logarithm search, demonstrating that it can be applied in practice to settings relevant for current quantum computing hardware. We additionally augment the task of Lindbladian fitting with techniques from gate set tomography to improve robustness against state preparation and measurement (SPAM) errors, which can otherwise obfuscate estimates of the model underlying the process of interest. We benchmark our techniques extensively using simulated tomographic data employing a range of realistic error models, before demonstrating their application to tomographic data collected from real superconducting-qubit hardware.}

\tableofcontents

\section{Introduction}
Quantum computers promise potential applications in the simulation of materials \cite{Feynman:1982aa,Lloyd_1996}, quantum chemistry ~\cite{QChemMcArdle_2020}, optimisation \cite{Moll_2018}, and fundamental physics \cite{Preskill_2018}. While proof-of-concept experiments \cite{Arute:2019aa,advantageZhong2020,IBMPEA} have increasingly provided evidence that real quantum hardware can yield results that are challenging to reproduce classically, to date there has been no demonstration of a real-world practical application of quantum computers that competes in accuracy and precision with the best classical methods, at a scale large enough that it could not be brute-force simulated by a conventional computer. The major roadblock to such large-scale demonstrations is the sensitivity of quantum devices to noise. This issue has been studied for a long time and is in principle resolved by the theory of quantum error correction~\cite{Shor1995EC,GottesmanThesis1997,kitaev1997quantum} and fault tolerance \cite{Shor1996FT,knill1998resilient,Gottesman1998FT,Preskill_1998FT,aharonov_2008,Campbell2017FT}. However, fault-tolerant quantum computing is known to require very large overheads for meaningful applications, both in terms of the number of qubits and quantum operations required~\cite{Kivlichan2020improvedfault,Lee2021FeMoCo}, and the huge throughput of classical data needed for the online decoding of syndrome measurements~\cite{decodingSkoric2023-ww}. The upshot is that fully scalable fault tolerance requires technical capabilities far beyond that presently available on existing quantum hardware.

In the meantime, the question of whether quantum advantage can be gleaned from near-term noisy devices remains open. Since noise in quantum devices is an inevitability, any such demonstration must necessarily involve the successful deployment of techniques to suppress or mitigate errors~\cite{dynamicaldecoupling2014,Temme_mitigation_2017,Ying_ZNE_2017,endo_practical_2018,mcclean_decoding_2020,bravyi_mitigating_2020,sun_mitigating_2020,
maciejewski_mitigation_2020,Czarnik_CDR_2021,cai_multi-exponential_2021,Zhang_2022coherentError,Cai2023errormitigation,IBMPEA}. In the absence of generic error correction, it is likely that a detailed understanding of noise processes on real hardware will be invaluable in tailoring error mitigation and suppression techniques so as to maximise the performance of the device at hand. Indeed, a number of leading error mitigation methods depend on a noise characterisation step as an integral part of the protocol ~\cite{bravyi_mitigating_2020,Czarnik_CDR_2021,IBMPEC,IBMPEA}. A large collection of noise characterisation methods exist, ranging from fast benchmarking protocols \cite{Knill_2008,Magesan_2011,Wallman_2018,helsen2020general} that provide figures of merit for the current operating status of a quantum device, to more costly tomographic methods that deliver fine-grained qualitative and quantitative information about the noise processes afflicting the device at the level of individual qubits and gates~\cite{QPTChuang_1997,Mohseni_2008,Blume:2013,GreenbaumGST,Nielsen2021gatesettomography}. Quantum process tomography (QPT)~\cite{QPTChuang_1997,Poyatos_QPT_1997} is a standard technique that in principle yields a full description of the completely positive trace-preserving (CPTP) map representing a noisy quantum channel. The quantum and classical computing cost of process tomography grows exponentially with system size, so it is tractable only for characterising few-qubit processes. It can nevertheless be a useful diagnostic tool on real devices, since typically the elementary operations available on present devices themselves act on at most two qubits at a time. One obstacle to the usefulness of QPT is that it is not robust against state preparation and measurement (SPAM) errors, since it implicitly assumes that the preparations and measurements available to the user are ideal. Gate set tomography (GST)~\cite{Blume:2013,GreenbaumGST,Nielsen2021gatesettomography,pyGSTi,Vinas2025} is an extension of QPT designed to deal with this problem. Rather than analysing individual gates in isolation, it attempts to self-consistently estimate SPAM errors and gate errors at the same time. Ultimately, both QPT and GST output a set of numerically represented CPTP map estimates. 

Since these raw tomographic snapshots can be difficult to interpret and act on, it is often more useful to build a model of the underlying physical process that generates the CPTP map. 
One approach tries to understand the channel as a solution to a quantum master equation~\cite{BreuerPettrucione}. In particular, it is often assumed that noisy processes on quantum devices are well approximated by memoryless, or Markovian, dynamics. Extracting the generator of a Markovian evolution -- which we refer to as the \emph{Lindbladian} --  delivers qualitative information about the noise that is otherwise obscured when considering the process in the channel picture.

A number of methods have been proposed and demonstrated for fitting Lindbladian models to full or partial tomographic data \cite{NMRestimationChilds2001, Quantum_optics_D_Ariano_2002, boulant2003robust, Markdynamics08,CEW09, Postmarkovian_Zhang_2022, Samach_2022, IBMPEC,dobrynin2024}. 
One of these techniques starts by taking the logarithm of the transfer matrix representation of the channel , first studied formally for non-degenerate transfer matrices in~\cite{Markdynamics08,CEW09} and extended to the degenerate case in \cite{Onorati2023fittingquantumnoise,onoratitimedependentmarkovian}. These methods must contend with the fact that the matrix logarithm is non-unique, and is sensitive to statistical errors in estimating the tomographic data. The initial theoretical works~\cite{Markdynamics08,CEW09} do not address SPAM or statistical errors at all. The direct precursor to the present work~\cite{Onorati2023fittingquantumnoise} proposed an implementable algorithm for fitting Lindbladian models to noisy tomographic data. However, it did not actively address the problem of SPAM, and the run-time of the algorithm as stated was extremely long for the two-qubit case, making it impractical for use on real devices. 

In this paper, we build on the method of \cite{Onorati2023fittingquantumnoise}, introducing algorithmic improvements to reduce the run-time and strengthen the robustness to statistical perturbation in practical cases. 
Additionally, we use ideas from gate set tomography \cite{Blume:2013,Nielsen2021gatesettomography,pyGSTi} to show that Lindbladian fitting techniques can also be made robust against realistic SPAM errors. We demonstrate the application of this approach to tomographic data collected from real noisy quantum computing hardware accessed via the cloud~\cite{IBMQ}. Our methods represent effective strategies to learn noise in real-world situations, requiring relatively few resources, both in terms of quantum experiments and classical post-processing time. They complement other Lindbladian estimation approaches~\cite{Samach_2022,StilckfrancaMarkovich,Olsacher_2025,pyGSTi} and can be used to cross-validate AI-based heuristic algorithms to study the interplay between Markovian and non-Markovian dynamics~\cite{Cemin24,datadrivenlatentdynamics}.

\subsection{Related Works}\label{sec:related_works}

Previous work on the task of Lindbladian estimation can broadly be divided into three strands: (i)~logarithm search methods, (ii)~direct numerical optimisation methods, and (iii)~ad-hoc methods tailored to particular experimental scenarios. 
Logarithm search methods \cite{Markdynamics08,CEW09,Onorati2023fittingquantumnoise} attempt to certify or estimate a Lindbladian model by taking the matrix logarithm of a transfer matrix representation of the target channel and attempting to fit a Lindbladian. These techniques must account for the fact that the complex matrix logarithm is not unique. Prior work in this strand has been theoretical, benchmarking on synthetic data with no testing on real hardware. In particular, the effects of statistical and SPAM errors on this technique have not been adequately explored or taken into account.

Prior practical demonstrations of Lindbladian estimation from noisy experimental data have tended to fall under the heading of direct numerical optimisation. The general approach is to define the problem in terms of an objective function that compares a parametrised Markovian channel with the tomographic data and pass this to an off-the-shelf numerical optimisation solver. 
A key observation is that the set of all $d^2\times d^2$ Lindbladian matrices can be exactly parametrised by $d^2(d^2-1)$ real variables. Furthermore, gradients of the objective function can often be obtained mechanically via automatic differentiation, and in such cases, gradient-based solvers can be used. Finally, a good initial guess required by the solvers is often available based on the experimenter's prior knowledge about the input channel. Prior works in this direction include \cite{boulant2003robust,Samach_2022,dobrynin2024,Pastori2022floquetLiouvillian,pyGSTi}. 

Approaches to Lindbladian estimation that fall outside these general methods tend to rely on some assumption about the system being investigated, so that the Lindbladian model is restricted in some way, and the number of parameters to estimate is reduced. Such techniques have been applied in NMR~\cite{NMRestimationChilds2001}, quantum optics~\cite{Quantum_optics_D_Ariano_2002}, and superconducting-qubit~\cite{Postmarkovian_Zhang_2022} settings for small systems. A method for studying larger systems under the assumption that the Markovian noise process is generated solely by local Pauli operators has also been studied ~\cite{IBMPEC,IBMPEA}. This method is scalable, but the assumption of a Pauli-Lindblad model is rather stringent. Finally, we note that the gate set tomography software package \emph{pyGSTi} \cite{pyGSTi} includes functionality to fit custom Lindblad models in the form of channels acting in series with the target gate set, using numerical optimisation methods different to those considered in the present work.

A more detailed discussion of prior work can be found in \Cref{sec:prior work}.

\subsection{Main Results}\label{sec:mainresults}
In the following sections, we study and give practical algorithms for Lindbladian fitting in two settings: (i) characterisation of an individual noisy process, neglecting SPAM errors, and  (ii)  simultaneous and self-consistent characterisation of a set of processes, allowing for arbitrary SPAM errors. 
We do not try to tackle these problems in full generality, but rather focus on input instances relevant to near-term quantum computing devices. More specifically, our goal is to fit Lindbladian noise models to noisy $2$-qubit quantum logic gates, where the noise is not arbitrarily strong. We build on the logarithmic search methods developed in \cite{Markdynamics08,CEW09,Onorati2023fittingquantumnoise} in three main respects.

\begin{enumerate}
\item We clarify the conditions under which good solutions can already be found by running convex optimisation over a few logarithm branches of the transfer matrix and show that this includes classes of practically relevant quantum channels larger than previously thought. In particular, this includes cases whose transfer matrix has degenerate spectrum. Instead, our new conditions merely require that the transfer matrix does not have eigenvalues close to being real negative. We call this simple algorithm the \emph{Convex Solve} method.

\item To tackle the case where some continuous degrees of freedom cannot be avoided, we develop a novel algorithm for fitting a Lindbladian to a noisy tomographic snapshot. The method leverages the fact that in practice, we can start with an initial best guess for the Markovian channel -- namely, the ideal gate that we program to be implemented by the quantum computer. At an intuitive level, we use the eigendecomposition of the best-guess Lindbladian to guide the construction of the matrix logarithm, and iteratively improve this best guess by projecting back to the set of Lindbladians. This leads to run-time reduced by orders of magnitude compared to~\cite{Onorati2023fittingquantumnoise} and makes the problem tractable for cases that were not for the algorithm designed therein. We call this the \emph{Alternating Projections} method.

\item We propose a protocol that synthesises Lindbladian fitting and gate set tomography~\cite{Nielsen2021gatesettomography} techniques by self-consistently alternating between a SPAM estimation step and a Lindbladian fitting step. For the Lindbladian fitting step, we use the Convex Solve and Alternative Projections methods from point~1 and~2 as subroutines in our implementation, but we note that the template for the protocol is generic and any method for fitting a Lindbladian to a tomographic channel estimate could be substituted. We call this algorithm \emph{Gate Set Flip-Flop}.
\end{enumerate}

\noindent The remainder of the paper is structured as follows. We provide important technical background in~\Cref{sec:background}, covering Markovian channels and Lindbladians, quantum process tomography, and gate set tomography. 
We describe our novel algorithms in detail in~\Cref{sec:algorithms}: we present in \Cref{subsec:TTE} our Convex Solve and Alternating Projections Lindbladian extraction algorithms and in \Cref{subsec:flip-flop} an augmented version incorporating gate set tomography. In \Cref{sec:Lindblad_synthetic,sec:flipflop} we benchmark the performance of the new methods extensively using synthetic data with realistic levels of statistical noise and a wide range of physically plausible noise types and strengths. Finally, in \Cref{sec:exp} we demonstrate the application of the algorithms to data obtained from real superconducting-qubit hardware. In particular we focus on characterising the noise associated with two-qubit entangling gates and idling processes, and assess whether and how the effective noise processes change when other gates are activated in parallel elsewhere on the device. 

\section{Background and Notation}\label{sec:background}
In this section, we give an overview of the necessary background and establish common notation used throughout this work. Throughout the rest of this manuscript, all norms $\lVert \cdot\rVert$ correspond to the Frobenius norm, defined by $\norm{M} = \sqrt{\sum_{j,k} \abs{m_{jk}}^2}$.

\subsection{Markovian Channels and Lindblad Generators} \label{subsec:Lindblad}

\newcommand{\Ket}[1]{| #1 \rangle\!\rangle}
\newcommand{\Bra}[1]{\langle\!\langle #1 |}
\newcommand{\BraKet}[2]{\langle\!\langle #1 | #2 \rangle\!\rangle}
\newcommand{\KetBra}[2]{\Ket{#1}\!\Bra{#2}}
The noisy dynamics that we wish to understand can be described by a quantum channel $\mathcal{E}$, i.e., a completely positive trace-preserving (CPTP) map, acting on $d\times d$ density matrices: $\rho \rightarrow \mathcal{E}(\rho)$. We can represent density matrices $\rho$ as vectors using stacked density matrix notation,
\begin{equation}
\Ket{\rho}_{j,k} = \BraKet{e_j,e_k}{\rho} \coloneq \bra{e_j}\rho\ket{e_k},
\label{eq:vec_rho}
\end{equation}
where $\ket{e_j} = (0,\cdots,0,1,0,\cdots,0)^T$ with 1 in the $j$-th position. A quantum channel $\mathcal{E}$ can then be represented by a $d^2 \times d^2$ matrix as follows
\begin{equation}\label{eq:transfer_matrix}
E_{(j,k),(l,m)} = \Bra{ e_j,e_k }E\Ket{e_l,e_m} \coloneq \Tr [\ketbra{e_k}{e_j} \mathcal{E}(\ketbra{e_l}{e_m})].
\end{equation}
The matrix $E$ is referred to as the \emph{transfer matrix} of $\mathcal{E}$ or the elementary basis representation of $\mathcal{E}$. For example, for $\mathcal{E}$ representing a noisy $2$-qubit quantum logic gate, $d=4$, so $E$ is a $16\times 16$ matrix. It is straightforward to verify using \Cref{eq:vec_rho,eq:transfer_matrix} that $E\Ket{\rho}=\Ket{\mathcal{E}(\rho)}$. Thus, in this representation, the action of the channel on the density matrix is given by matrix-vector multiplication, and the composition of channels corresponds to matrix-matrix multiplication. In practice, whenever $\mathcal{E}$ is expressed in the form
$$\mathcal{E}(\rho)=\sum_{k} A_k\rho B_k$$ for some complex matrices $\{A_k\}$ and $\{B_k\}$, we have
\begin{equation}
E=\sum_k A_k\otimes B_k^T.
\label{eq:Lindblad_matrix}
\end{equation}
In this picture, we can conveniently express the expectation value of hermitian observable $O$ on the state obtained by applying the channel $\mathcal{E}$ to initial state $\rho$, as
\begin{equation}
\Bra{O} E \Ket{\rho} = \Tr[O \mathcal{E}(\rho)],
\end{equation}
where the operator $O$ is vectorised in the same way as in~\Cref{eq:vec_rho}.

Throughout this work, we are interested in Markovian channels.
These are a subset of channels which are generated by Lindbladians, i.e. by generators of the form
\begin{equation}
\mathcal{L}(\rho) \coloneq i[\rho, H] + \sum_{\alpha}\gamma_{\alpha}\Big[J_\alpha \rho J^\dagger_\alpha - \frac{1}{2} \left(J_\alpha^\dagger J_\alpha \rho + \rho J_\alpha^\dagger J_\alpha \right) \Big]
\label{eq:Lindbladian_form}
\end{equation}
where $H$ is a Hamiltonian, each $J_\alpha$ is called a jump operator, and each $\gamma_\alpha$ is a real positive scalar \cite{lindblad1976}. In this canonical form, $H$ is traceless, all jump operators are traceless and normalised to Frobenius norm $1$, and $\tr(J_\alpha^\dagger J_{\alpha'})=0$ whenever $\alpha\neq\alpha'$. The first term in the Lindblad form describes the unitary part of the evolution, while the second term represents the dissipative part of the process. In this work, we restrict our analysis to time-independent Lindbladian generators for which the transfer matrix of the corresponding channel at time $t$ is given by $E(t) = e^{Lt}$ where $L$ is the elementary basis representation of $\mathcal{L}$.

The set of all Lindbladians in the elementary basis representation forms a closed convex cone with a simple known characterisation \cite{Markdynamics08,CEW09} through the $\Gamma$-involution, which is defined by linearly extending
\begin{equation}
(\KetBra{e_j,e_k}{e_l,e_m})^\Gamma \coloneq \KetBra{e_j,e_l}{e_k,e_m}.
\label{eq:gamma_involution}
\end{equation}
A matrix $L$ is a Lindbladian if and only if
\begin{enumerate}
\item $L^\Gamma$ is hermitian, i.e. the Choi matrix of $L$ is hermitian;

\item $L^\Gamma$ is \emph{conditionally completely positive}, i.e. $\omega_\perp L^\Gamma \omega_\perp \succeq 0$ where $\omega_\perp = (\identity-\KetBra{\omega}{\omega})$,
 and  $\Ket{\omega} \coloneq \frac{1}{\sqrt{d}}\sum_{j=1}^d \Ket{e_j,e_j}$;

\item $\Bra{\omega}L = 0$, corresponding to the trace-preserving property.
\end{enumerate}
For a small Hilbert space dimension like the $d=4$ case considered in this work, the problem of projecting a matrix to the set of all Lindbladians can be solved efficiently using convex optimisation. The projection problem is minimising $\lVert X-L\rVert_F$ subject to $L$ being a Lindbladian, with $X$ being the input and $L$ being the optimisation variable. In software, performing the projection amounts to setting up the problem using CVXPY \cite{diamond2016cvxpy} and then calling an off-the-shelf convex optimiser such as the SCS solver \cite{scs}. 

Given a Lindbladian $L$ in the elementary basis representation, it is possible to solve for a canonical decomposition comprised of a Hamiltonian $H$, jump operators $\{J_\alpha\}$, and real positive scalars $\{\gamma_\alpha\}$ such that the elementary basis representation of the Lindbladian form (see \Cref{eq:Lindbladian_form}) defined by $H$, $\{J_\alpha\}$, and $\{\gamma_\alpha\}$ is $L$. For completeness, we include a derivation of this fact in \Cref{app:lindblad_decomp}.

For $E=e^L$, $L$ is a matrix logarithm of $E$. The set of logarithms of $E$ is defined as $\log(E)=\{A:e^A=E\}$, and $\log(E)$ has infinite cardinality even for $1\times 1$ matrices since for every $\lambda\in\mathbb{C}$, $\lambda+2\pi k i$ is a logarithm of $e^{\lambda}$ for every integer $k$. Assume $E$ is diagonalisable with not necessarily unique complex eigenvalues $\mu_1,\ldots,\mu_{d^2}$. For every $j\in\{1,\ldots,d^2\}$, we can fix a logarithm $\lambda_j$ of $\mu_j$ so that $e^{\lambda_j}=\mu_j$ and $|\Im(\lambda_j)|\leq\pi$. If $\mu_j$ is not real negative, then the choice of $\lambda_j$ is unique. Then every $A\in\log(E)$ can be written in the form
\begin{equation}
A=\sum_{j=1}^{d^2}(\lambda_j+2m_j\pi i) \Ket{r_j}\Bra{l_j}
\label{eq:log_branch}
\end{equation}
where $m=(m_1,\ldots,m_{d^2})\in\mathbb{Z}^{d^2}$ is an integer vector and $\Ket{r_j}$ ($\Ket{l_j}$) is a unit right (left) eigenvector of $E$ corresponding to the eigenvalue $\mu_j$. Choosing an $m\in\mathbb{Z}^{d^2}$ corresponds to choosing a branch of the matrix logarithm. We call a branch satisfying $|\Im(\lambda_j)+2m_j\pi|\leq \pi$ for every $j\in\{1,\ldots,d^2\}$ a principal branch, and it is uniquely specified by $m=(0,\ldots,0)$ if $E$ has no real negative eigenvalues. When $\mu_1,\ldots,\mu_{d^2}$ are all distinct, $E$ has $d^2$ unique eigenspace projectors $\Ket{r_1}\Bra{l_1},\ldots,\Ket{r_{d^2}}\Bra{l_{d^2}}$, so $\log(E)$ is a countably infinite set whose discrete degrees of freedom are fully parametrised by the choice of $m$. 

The case where $E$ has degenerate eigenvalues is more intricate but not uncommon: this always occurs if $E$ corresponds to a unitary gate. A degenerate eigenspace of $E$ could split into different eigenspaces of $A$, and we call this phenomenon \emph{eigenspace splitting}. For instance, if $\mu_j=\mu_k$ for some $j\neq k$, then in specifying an $A\in\log(E)$, $\Ket{r_j}$ and $\Ket{r_k}$ could be any of uncountably many unit eigenvectors of $E$ corresponding to the eigenvalue $\mu_j=\mu_k$ while $\lambda_j+2m_j\pi i\neq \lambda_k+2m_k\pi i$ is possible. Therefore, there are continuous degrees of freedom in choosing an $A\in\log(E)$ if $E$ has degenerate eigenvalues. It is not hard to show that for a Lindbladian $L$, the condition $L^\Gamma$ is hermitian (see \Cref{eq:gamma_involution}) implies that all the complex eigenvalues of $L$ come in complex conjugate pairs \cite{Markdynamics08,CEW09,Onorati2023fittingquantumnoise}. As a result, the quintessential example of this eigenspace splitting phenomenon occurs for us when $-1$ is a $2$-fold degenerate eigenvalue of $E$, and this $2$-dimensional eigenspace gets split into two $1$-dimensional eigenspaces of $A$ corresponding to the eigenvalues $\pi i$ and $-\pi i$.

\subsection{Linear Inversion Quantum Process Tomography}
In quantum process tomography~\cite{QPTChuang_1997,Poyatos_QPT_1997}, a quantum channel is estimated by implementing the same process on a quantum device a large number of times on different input states and measuring each resulting output state in different bases. More concretely, let the $d^2\times d^2$ transfer matrix for the process of interest be denoted $E^{*}$, and let $\{\Ket{\rho_j}:j=1,\ldots,d^2\}$ and $\{\Bra{F_i}:i=1,\ldots,d^2\}$ be the chosen tomographically complete set of preparation states and measurement operators, respectively. Typically, $E^*$ is assumed to be the noisy implementation of some ideal process $E^{\mathrm{ideal}}$. Without loss of generality, we assume that each $F_i$ represents one outcome of a two-outcome POVM $\{F_i,\identity-F_i\}$. The quantum circuit that prepares state $\rho_j$, implements the target process and then carries out the measurement corresponding to $F_i$ is executed many times to obtain sufficient statistics to estimate the value of $\Bra{F_i}E^*\Ket{\rho_j} = \Tr[F_i \mathcal{E}^*(\rho_j)]$. This must be repeated for each pairing of measurement setting $F_i$ with preparation $\rho_j$. We can organise the chosen state preparations and measurements into matrices. The measurement matrix is given by a matrix in which each row is the effect of a possible measurement outcome
\begin{equation}
A^\text{ideal}=\begin{pmatrix}
\Bra{F_1} \\
\Bra{F_2} \\
\vdots \\
\Bra{F_i} \\
\vdots \\
\Bra{F_{d^2}}
\end{pmatrix}
\end{equation}
while the state preparation matrix is given by a matrix in which each column is one of the possible input states
\begin{equation}
B^{\text{ideal}}=\begin{pmatrix}
\Ket{\rho_1} & \Ket{\rho_2} & \cdots & \Ket{\rho_j} & \cdots & \Ket{\rho_{d^2}}\end{pmatrix}.
\end{equation}
In the case of an ideal process $E^{\mathrm{ideal}}$, the probabilities of obtaining outcome $F_i$ after preparing in state $\rho_j$ can be collected into a matrix
\begin{equation}
P^{\text{ideal}}=A^{\text{ideal}}E^\text{ideal}B^{\text{ideal}}.
\end{equation}
Note that $A^{\text{ideal}}$, $E^\text{ideal}$, and $B^{\text{ideal}}$ are all matrices chosen by the experimenter, so $P^\text{ideal}$ is known. However, in reality, the circuit elements used to implement $A^\text{ideal}$, $E^\text{ideal}$, and $B^\text{ideal}$ will be noisy. Thus, the true probabilities being estimated in quantum process tomography will be
\begin{equation}
P^*=A^*E^*B^*
\end{equation}
for some unknown $P^*$, $A^*$, $E^*$, and $B^*$ different from $P^\text{ideal}$, $A^{\text{ideal}}$, $E^{\text{ideal}}$, and $B^{\text{ideal}}$ respectively. In quantum process tomography, it is assumed that the state preparation and measurement settings are implemented exactly, whereas the true process $E^*$ is not known. Namely, $A^*=A^\text{ideal}$ and $B^*=B^\text{ideal}$. Then, the task is to estimate $E^*$ from $P^*$. There are a number of ways of doing this. In linear inversion tomography, $A^{\text{ideal}}$ and $B^{\text{ideal}}$ are chosen to be square and invertible, so under the assumption that $A^*=A^\text{ideal}$ and $B^*=B^\text{ideal}$, $E^*$ can be estimated via the identity
\begin{equation}
E^*=(A^{\text{ideal}})^{-1}P^*(B^{\text{ideal}})^{-1}.
\end{equation}
There are a number of more refined methods to estimate $E^*$ in standard quantum process tomography -- a shared feature of conventional approaches is the assumption that $A^*=A^{\text{ideal}}$ and $B^*=B^{\text{ideal}}$ \cite{Knee_PGDB2018, forestbenchmarking}.

In this work, we focus on $E^\text{ideal}$ being a $2$-qubit quantum logic gate, so $A^{\text{ideal}}$ and $B^{\text{ideal}}$ will each contain $16$ measurement and state preparation settings respectively. Measuring each setting independently therefore requires a suite of $16^2=256$ different circuits to estimate $P^*$. In practice, the circuit overhead can be reduced by parallelising measurement of compatible observables. For example, by measuring both qubits in the $Z$ basis, we can infer mean values for $Z\otimes \1$, $\1 \otimes Z$, and $Z \otimes Z$ using the same circuit. Using this scheme and measuring in the Pauli basis, the number of circuits can be reduced to 144. In a real experiment, since each matrix element $P^*_{ij}=\Bra{F_i}E^*\Ket{\rho_j}$ of $P^*$ is estimated by collecting statistics from a finite number of circuit executions (called tomographic shots), the experimental estimate for $P^*$ (and in turn the estimate for $E^*$) will inevitably differ from its true value, even when preparations and measurements are ideal. The amount of statistical error can be reduced by increasing the number of tomographic shots; however, doing so to a significant extent may be prohibitive on current quantum computing hardware. The practicality of taking a large number of shots is limited not just by the overall resource cost but also by the instability of noise processes on the device. In other words, the duration of the experiment increases with the number of shots taken, increasing the risk that $E^*$ can drift while data is being taken~\cite{Proctor_2020}. The number of shots can be tuned between low-precision tomography that captures a snapshot of the noise within a particular time window or higher-precision tomography that characterises the long-term average behaviour of the device. We note that since the set of Markovian channels is non-convex, the latter approach may yield non-Markovian channel estimates, even if the noise process at some particular time can be well-modelled by a time-independent Lindbladian. Statistical error can also be reduced by projecting the output of process tomography to CPTP, and there are known algorithms for doing so \cite{Mohseni_2008, Knee_PGDB2018, forestbenchmarking}.

\subsection{Gate Set Tomography}
In this section we give a brief overview of the simplest gate set tomography (GST) protocol, linear inversion GST \cite{Blume:2013}. A more comprehensive review including more sophisticated methods can be found in \cite{Nielsen2021gatesettomography}. In quantum process tomography (whether linear inversion tomography or some more sophisticated method), assuming $A^*=A^\text{ideal}$ and $B^*=B^\text{ideal}$ is a prerequisite to computing an estimate of $E^*$. Operationally, this corresponds to the assumption that there are no SPAM errors in the tomography experiment, and the experimenter knows the $A^{\text{ideal}}$ and $B^{\text{ideal}}$ matrices because they are chosen by the experimenter. This is a strong assumption that will never be satisfied exactly and may not even be approximately satisfied in near-term devices.

In linear inversion GST this assumption is replaced by two weaker assumptions:
\begin{itemize}
\item when we try to implement the (known) informationally complete state preparations $B^\text{ideal}$ and measurements $A^\text{ideal}$, the state preparations $B^*$ and measurements $A^*$ that are actually implemented (due to SPAM errors) are still informationally complete;
\item the SPAM errors are independent of which quantum process is being implemented, and remain fixed for a given preparation or measurement setting.
\end{itemize}
With these two assumptions, the GST method is as follows.
Choose a set of $k$ distinct quantum gates $E_1^{\text{ideal}},\ldots ,E_k^{\text{ideal}}$ (here each $E_i^\text{ideal}$ is the transfer matrix of a unitary gate) and together with the (instantaneous) identity gate, form a gate set
\begin{equation}
\mathcal{G}=\{\identity,E_1^{\text{ideal}},\ldots ,E_k^{\text{ideal}}\}.
\end{equation}
Then carry out quantum process tomography for every gate in $\mathcal{G}$. For each gate $i\in\{1,\ldots,k\}$, let $E_i^*$ denote the actual noisy gate implemented by the quantum device. This enables the construction of the matrices
\begin{equation}
P_i^* = A^*E_i^*B^*.
\end{equation}
Since $A^*$ and $B^*$ are unknown, it is not possible to reconstruct $E_i^*$ from $P_i^*$. However, by implementing the identity gate instantaneously, we can construct what is known as the \emph{Gram matrix}
\begin{equation}
g^*=A^*B^*.
\end{equation}
We can therefore construct an estimate 
\begin{equation}\label{eq:sim}
E_i=B(g^*)^{-1}P_i^*B^{-1}=B(B^*)^{-1}E_i^*B^*B^{-1}
\end{equation}
for $E_i^*$ up to some \emph{similarity transform} $B$. 

This may seem unfinished -- we do not know $B$, so how can we construct an estimate of $E_i^*$? One of the key insights of GST is that it is only ever possible to reconstruct each $E_i^*$ up to some similarity transform since these similarity transforms are unobservable. Thus, \Cref{eq:sim} is the best we can hope to achieve for each gate. To see this, notice that for every invertible $B$ and $A=g^*B^{-1}$,
$$A^*B^*=g^*=AB$$
and for every gate $i$,
$$A^*E_i^*B^*=A(B(B^*)^{-1}E_i^*B^*B^{-1})B=AE_iB.$$
Therefore, assuming $A$, $B$, and every $E_i$ are all physical, it is impossible to distinguish whether the true quantum process tomography experiments have occurred with measurement, preparation, and noisy gates being $A^*$, $B^*$, and $E_i^*$ respectively or $A$, $B$, and $E_i$ respectively since they both lead to identical Gram and probability matrices. This implies that the similarity transform $B$ is in fact just picking out a particular gauge. The different choices of gauge give equivalent gate sets, so when considering the entire gate set, any invertible (and physical) choice of matrix is as good as any other for $B$. However, the choice of gauge clearly has a significant effect when we are interested in characterising the individual gates within the gate set. If, for example, we take a perfect set of gates (with no errors) but choose a gauge that rotates between the different gates in the set, then this will imply that the gate fidelities are all very low and that there are large coherent errors in the device. However, this is merely an artefact of the chosen gauge. 

In gate set tomography the gauge is chosen to be the one that minimises the distance to the ideal gate set (in some metric - see \cite{Nielsen2021gatesettomography} for a discussion of different choices of metric).
More precisely, in GST the gauge is chosen to be a minimiser of the optimisation problem in some given norm $\norm{ \cdot }_{\ast}$
\begin{equation}
\min_B\left(\sum_{i=1}^k a_i\|B(g^*)^{-1}P_i^*B^{-1} - E_i^{\text{ideal}} \|_{\ast} +\sum_{j=1}^{d^2} b_j\|B_j - \Ket{\rho_j} \|_{\ast} +\sum_{i=1}^{d^2} c_i\|A_i  - \Bra{F_i} \|_{\ast}\right)
\end{equation}
where $A=gB^{-1}$, $B_j$ denotes the $j$-th column of $B$, $A_i$ denotes the $i$-th row of $A$, and $\{a_i\}$, $\{b_j\}$, and $\{c_i\}$ are non-negative weights. Intuitively, the weights should be chosen to be proportional to the believed fidelity of a gate, preparation, or measurement setting. For example, if $\Ket{\rho_1}$ corresponds to preparing a known high-fidelity $\ket{0}$ state and the other $\Ket{\rho_j}$'s are prepared from $\ket{0}$ by applying a short-depth circuit, it may be sensible to choose $b_1=1$ and $b_j=0$ for all $j\geq 2$. The simplest choices for the weights are $a_i=1$ for all $i\in\{1,\ldots,k\}$ and $b_j=c_j=0$ for all $j\in\{1,\ldots,d^2\}$, hence simplifying the gauge optimisation problem to
\begin{equation}
\min_B\sum_{i=1}^k\|B(g^*)^{-1}P_i^*B^{-1} - E_i^{\text{ideal}}\|_{\ast}.
\label{eq:gst_objective}
\end{equation}
In summary, in its simplest form, the input to GST is a set of matrices
\begin{equation}
\{g^*,P_1^*,\ldots,P_k^*\}=\{A^*B^*,A^*E_1^*B^*,\ldots,A^*E_k^*B^*\}
\end{equation}
obtained from $k+1$ runs of quantum process tomography, and the desired output is a matrix $B$ that minimises \Cref{eq:gst_objective}. Note that $B^*$ may not be an optimal solution for \Cref{eq:gst_objective} if $E_i^{\text{ideal}}\neq E_i^*$ for at least one gate $i$.

\section{Our Work}\label{sec:algorithms}

\subsection{The Lindbladian Fitting Problem}\label{subsec:TTE}

The input to the Lindbladian fitting problem is the transfer matrix $E$ of a CPTP map (recall \Cref{eq:transfer_matrix}), and the goal is to fit a Lindbladian model to $E$. More concretely, we wish to find a Lindbladian $L$ that minimises $\lVert e^L-E\rVert$.

In this work, we restrict ourselves to special cases of the Lindbladian fitting problem that satisfy two additional assumptions. Firstly, it is assumed that the input $E$ is close to some unknown Markovian channel $E^*=e^{L^*}$. That is, $\lVert E-E^*\rVert\leq c_1$ for some small constant $c_1$. We further assume that $E^*$ is not far from a known Markovian channel $E^{\text{ideal}}=e^{L^{\text{ideal}}}$. In fact, we assume $\lVert L^*-L^{\text{ideal}}\rVert\leq c_2$ for some small constant $c_2$ which is stronger than assuming $E^*$ is not far away from $E^{\text{ideal}}$. As an example, suppose $L$ is a Lindbladian generator for the $1$-qubit Pauli $Z$ gate, then $3L$ would be a Lindbladian generator for $Z^3$, which is identical to $Z$ as a unitary gate, but $\lVert L-3L\rVert$ is large. The algorithms we consider will search over the branches of the complex matrix logarithm $\log(E)=\{A:e^A=E\}$ in the outer most loop\footnote{Throughout this work, we do assume $E$ and $E^{\text{ideal}}$ are both diagonalisable.}. The bound $\lVert L^*-L^{\text{ideal}}\rVert\leq c_2$ implies a bound on the differences between the imaginary parts of the eigenvalues of $L^*$ and $L^{\text{ideal}}$, which in turn limits the number of physically relevant branches of $\log(E)$ to search over. The assumption $\lVert L^*-L^{\text{ideal}}\rVert\leq c_2$ is also intended to rule out cases like $L^{\text{ideal}}=0$, which generates the identity gate, while $L^*$ generates a far away unitary gate such as $X\otimes I$. 

In an experiment, 
\begin{itemize}
	\item $E$ would be the output of quantum process tomography (optionally projected to CPTP), 
	\item $E^{\text{ideal}}$ would correspond to an ideal gate, 
	\item $L^\text{ideal}$ is a known Lindbladian generator that the experimenter has chosen to execute, 
	\item and $E^*$ would be the Markovian channel closest to the true noisy gate being implemented in place of $E^{\text{ideal}}$. 
\end{itemize}

The difference between $E^*$ and $E^{\text{ideal}}$ captures the total amount of time-independent Markovian gate noise while the discrepancy between $E$ and $E^*$ accounts for possible non-Markovian or time-dependent Markovian gate noise as well as inevitable statistical error from performing process tomography. The two assumptions translate to the physical assumptions that the gate noise is approximately time-independent Markovian and not too strong (e.g. some noise turning $Z$ into $Z^3$). 
In this work, the two assumptions are stated heuristically, and we do not have formal bounds on the precise values of $c_1$ and $c_2$ required by our algorithm. Nonetheless, we believe our weak approximately Markovian gate noise assumption is satisfied when running process tomography on individual gates on today's quantum computing hardware. In \Cref{sec:Lindblad_synthetic}, we give numerical evidence that $c_2$ does not need to be too small.

\subsubsection{The Convex Solve Method} \label{sec:convexsolve}

To set the stage for the Alternating Projections method and emphasise the effect of statistical error, we first describe the simpler Convex Solve method (\Cref{alg:convexsolve}). Due to statistical error, the eigenvalues of $E$ are (almost surely) all distinct (regardless of what $E^{\text{ideal}}$ or $E^*$ are). This implies $\log(E)=\{A:e^A=E\}$ is a countably infinite set which we can enumerate over. The Convex Solve method then simply enumerates over each $A\in\log(E)$ and finds the Lindbladian $L$ closest to $A$. The latter task can be accomplished by solving a positive semidefinite program (recall \Cref{subsec:Lindblad}). Finally, the algorithm returns the $L$ ever obtained that achieves the smallest $\lVert e^L-E\rVert$. Again, by the assumption $\lVert L^*-L^{\text{ideal}}\rVert\leq c_2$, we can terminate the search over the branches of $\log(E)$ as soon as $\lVert A-L^{\text{ideal}}\rVert$ becomes too large (recall the $Z$ vs. $Z^3$ example). In practice, rather than explicitly checking $\lVert A-L^{\text{ideal}}\rVert$, we can restrict the search to a few low-lying branches of the matrix logarithm. We note that close variants of the Convex Solve method have been considered in prior works \cite{Markdynamics08,CEW09,Onorati2023fittingquantumnoise}, specifically for the case where the eigenvalues of $E^*$ are all distinct. As such, we do not claim that our main contribution here lies in proposing the Convex Solve method, but rather it is an improved understanding of the effectiveness of this strategy, which we explain next.

The simplest special case of the Convex Solve method simply returns the Lindbladian closest to a principal branch of $\log(E)$, which is unique as long as $E$ does not have real negative eigenvalues even when $E$ has degenerate eigenvalues. In \Cref{app:trivial}, we prove the following result, which applies in the absence of statistical error.
\begin{theorem}\label{thm:trivial}
Suppose there exists a Lindbladian $L^*$ such that $E^*=\exp(L^*)$, and 
\begin{equation}
\lVert L^*-L^{\text{ideal}}\rVert<\frac{\pi-\rho(L^\text{ideal})}{\kappa(V)},
\end{equation}
where $V$ is an invertible matrix that diagonalises $L^\text{ideal}$, $\kappa(V)$ is its condition number, and $\rho(L^\text{ideal})$ is the spectral radius of $L^\text{ideal}$. Then given $E=E^*$ as input, the Convex Solve method outputs an $L\in\mathcal{L}\cap\log(E)$.
\end{theorem}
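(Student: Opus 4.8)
The plan is to show that, under the displayed bound, the known generator $L^*$ is itself the unique principal branch logarithm of $E=E^*$. Granting this, the Convex Solve method — which (in particular) projects the principal branch logarithm of $E$ onto the closed convex cone $\mathcal{L}$ of Lindbladians — will, on that branch, project $L^*$ onto $\mathcal{L}$ and hence return $L^*$ itself, since $L^*$ is already a Lindbladian. This candidate attains $\lVert e^{L^*}-E\rVert = 0$, the smallest possible value of the objective; and since every candidate the algorithm records is by construction the output of a projection onto $\mathcal{L}$ (hence a Lindbladian), the best recorded candidate is a Lindbladian with $e^{L}=E$, i.e. an element of $\mathcal{L}\cap\log(E)$. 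So the whole problem reduces to identifying $L^*$ with the principal branch.

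The first step is to bound the spectrum of $L^*$. Since $L^{\text{ideal}} = VDV^{-1}$ is diagonalisable, I would apply the Bauer--Fike theorem to $L^* = L^{\text{ideal}} + (L^*-L^{\text{ideal}})$ (using $\lVert\cdot\rVert_2 \le \lVert\cdot\rVert$ to pass from the spectral to the Frobenius norm): every eigenvalue $\mu$ of $L^*$ satisfies $\min_{\nu\in\spec(L^{\text{ideal}})}|\mu-\nu| \le \kappa(V)\,\lVert L^*-L^{\text{ideal}}\rVert < \pi-\rho(L^{\text{ideal}})$. Hence $|\mu| \le |\nu| + |\mu-\nu| < \rho(L^{\text{ideal}}) + (\pi-\rho(L^{\text{ideal}})) = \pi$, so $\rho(L^*)<\pi$; in particular every eigenvalue $\lambda$ of $L^*$ has $|\Im(\lambda)| \le |\lambda| < \pi$.

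The second step identifies $L^*$ with the principal branch. Because $E$ is assumed diagonalisable and $L^*\in\log(E)$, $L^*$ is itself diagonalisable: writing $L^* = S+N$ in Jordan--Chevalley form ($S$ semisimple, $N$ nilpotent, $SN=NS$), $e^{L^*} = e^S e^N$ is the multiplicative Jordan decomposition of the diagonalisable matrix $E$, which forces the unipotent factor $e^N = \identity$ and thus $N=0$. So $L^* = W\diag(\lambda_j)W^{-1}$ with $E = W\diag(e^{\lambda_j})W^{-1}$; since $|\Im(\lambda_j)|<\pi$, no eigenvalue $e^{\lambda_j}$ of $E$ is a negative real number, and each $\lambda_j$ equals the principal scalar logarithm of $e^{\lambda_j}$. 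Hence $L^*$ has the form \eqref{eq:log_branch} with $m=(0,\dots,0)$, i.e. it is a principal branch logarithm of $E$; and since $E$ has no real negative eigenvalue, the principal branch is unique, so $L^*$ coincides with the principal branch logarithm processed by Convex Solve. Combined with the reduction above, this finishes the proof. I expect the main obstacle to be precisely this step in the degenerate case: because the theorem concerns the noiseless input $E=E^*$, the spectrum of $E$ need not be simple (unlike the generic statistically perturbed situation), so one must carefully justify both that $L^*$ is diagonalisable and that the ``eigenspace splitting'' phenomenon does not occur for the principal branch — here this holds because on a degenerate eigenspace the $m=0$ term contributes the principal scalar logarithm times the basis-independent spectral projector, so the principal branch logarithm is genuinely unique and equal to $L^*$. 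The Bauer--Fike estimate and the concluding projection argument are routine by comparison.
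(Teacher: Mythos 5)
Your proposal is correct and follows essentially the same route as the paper: a Bauer--Fike estimate showing every eigenvalue of $L^*$ has imaginary part strictly below $\pi$ in modulus (the paper's \Cref{lemma:trivial}), from which it follows that no eigenspace splitting can occur and $L^*$ is precisely the principal branch of $\log(E)$, so Convex Solve attains objective value zero with a Lindbladian output. Your Jordan--Chevalley argument for the diagonalisability of $L^*$ and the explicit basis-independence remark for the degenerate principal branch merely spell out details the paper's proof leaves implicit (it instead rules out splitting by noting that two eigenvalues of $L^*$ differing by $2\pi k i$, $|k|\geq 1$, would contradict the lemma).
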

This tells us that under the assumption of weak, time-independent Markovian noise, even when $E^*$ contains degenerate eigenvalues, the simplest special case of the Convex Solve method already solves the problem in the absence of statistical error (i.e. $E=E^*$), provided $|\Im(\lambda)|<\pi$ for every eigenvalue $\lambda$ of $L^{\text{ideal}}$. The last condition can be satisfied when the transfer matrix $\exp(L^{\text{ideal}})$ has no eigenvalues close to the negative real line, which holds with a meaningful gap for many important gates, including $\sqrt{X}\otimes I$, $T\otimes I$, and most notably, the identity gate $I\otimes I$. Of course, in reality, statistical error will always be present, but empirically we observe that the result is surprisingly robust to shot noise. Our synthetic testing in \Cref{sec:Lindblad_synthetic} suggests that the Convex Solve method also already works at the level of statistical noise induced by $10^4$ tomographic shots for these gates.

The Convex Solve method is not the whole story because the transfer matrices $E^{\text{ideal}}$ of an important family of gates that we would like to characterise do contain real negative eigenvalues, so under the weak noise assumption, $E^*$ and $E$ will have eigenvalues close to being real negative. This family includes CNOT, ISWAP, and the tensor product of Pauli and Hadamard gates. Under a realistic level of statistical noise induced by $10^4$ tomographic shots, we observe that for gates from this family, both $\lVert L-A\rVert$ and $\lVert e^L-E\rVert$ could be significantly larger than $\lVert E-E^*\rVert$ for every closest Lindbladian $L$ to every $A\in\log(E)$ (see \Cref{sec:Lindblad_synthetic} for a CNOT example). Hence, the Convex Solve method fails on these inputs. 

\begin{algorithm}\small
\SetKwInOut{Input}{input}
\SetKwInOut{Output}{output}
\SetKw{KwMin}{minimise}
\SetKw{KwSubjTo}{subject to}
\Input{Estimated (diagonalisable) transfer matrix $E$; max branch number $m_{\text{max}}$}
\Output{Closest Lindbladian generator $L_{\text{est}}$ to all checked branches of logarithm}
\caption{Convex Solve Method}\label{alg:convexsolve}
Compute (some) set of eigenvalues $\{\mu_j\}_j$ and right $\{\Ket{r_j}\}_j$ and left  $\{\Bra{l_j}\}_j$ eigenvectors of $E$

\For{$ j = 1$ \KwTo $d^2$}{$\lambda_j \gets$ principal log of $\mu_j$}

\For { $m \in \{-m_{\text{max}},\ldots,0,\ldots, m_{\text{max}} \}^{d^2}$}{
	$A_{m} \leftarrow \sum_{j=1}^{d^2}(\lambda_j + 2 \pi i m_j )\Ket{r_j} \Bra{l_j}$
    
    Solve the convex optimisation problem \\
	\hspace{1.13cm} \KwMin \hspace{0.5cm} $\norm{L - A_{m}}$  \\
	\hspace{1cm} \KwSubjTo \hspace{0.5cm}$L^{\Gamma}$ Hermitian \\
	 \hspace{3.6cm} $\omega_\perp L^\Gamma \omega_\perp \succeq 0$ \\
	 \hspace{3.6cm} $\Bra{\omega}L = 0$ \\
     and let $L_m$ be the returned optimal solution
     
     $\Delta_{m} \leftarrow \norm{e^{L_m} - E}$
}
$L_{\text{est}} \gets L_{m_{\text{opt}}}$ such that $m_{\text{opt}}$ is the branch that minimises $\Delta_{m}$

\KwRet $L_{\text{est}}$
\end{algorithm}

\subsubsection{The Alternating Projections Method}\label{sec:alternating_projection}
Now, we are ready to describe the Alternating Projections method. The algorithm given in \cite{Onorati2023fittingquantumnoise} operates under similar assumptions but is too slow in practice for the $2$-qubit $d=4$ case. In this work, we build upon the understanding developed in \cite{Onorati2023fittingquantumnoise} and give a new algorithm that has significantly better performance in terms of both accuracy and runtime. The insight in the new algorithm is that we don't have to solve the Lindbladian fitting problem ``blind'' -- we have knowledge about what a ``good'' initial guess is (because we know what gate we tried to implement). We can start from the Lindbladian generator of the ideal gate $L^\text{ideal}$. 

Here, we sketch the main idea of the algorithm (\Cref{alg:APinformal}) and defer a complete technical description of our implementation to \Cref{app:TTEAPtech}. The input to the algorithm is $E$ and an initial best-guess Lindbladian $L_0$. The default choice sets $L_0=L^{\text{ideal}}$, but a better estimate, if available, could be used instead. Let $\mathcal{L}$ denote the set of all Lindbladians and define $\log(E)=\{A:e^A=E\}$. The goal is to compute
\begin{equation}
\min_{L\in\mathcal{L}}\text{ }\lVert e^L-E\rVert=\min_{L\in\mathcal{L},A\in\log(E)}\lVert e^L-e^A\rVert
\label{eq:Lindblad_fitting_objective}
\end{equation}
and output an optimal Lindbladian $L$. We approximate the latter minimisation problem by the problem
\begin{equation}
\min_{L\in\mathcal{L},A\in\log(E)}\lVert L-A\rVert.
\label{eq:dist_two_sets}
\end{equation}
Note that if $E$ is Markovian, i.e. $E=E^*$, then the optimal values of \Cref{eq:Lindblad_fitting_objective} and \Cref{eq:dist_two_sets} are both zero, and the solution set is precisely $\mathcal{L}\cap\log(E)$. If both $\mathcal{L}$ and $\log(E)$ were closed convex sets, then the problem in \Cref{eq:dist_two_sets} can be solved using alternating projections which start from a given starting point and alternately project onto each of the two sets until convergence. In our case, $\mathcal{L}$ is a closed convex cone and projecting onto it can be accomplished by solving a positive semidefinite program (recall \Cref{subsec:Lindblad}). However, the set $\log(E)$ is not convex, and we do not know how to solve the problem $\min\lVert A-L_0\rVert$ subject to $A\in\log(E)$ directly. To this end, we devise a procedure to construct an $A\in\log(E)$ that approximately minimises $\lVert A-L_0\rVert$, inspired by the assumption that $\lVert E-e^{L_0}\rVert$ is small to begin with. Suppose $E$ has $n\leq d^2$ distinct eigenvalues associated with $n$ unique eigenspace projectors $\Pi_1,\ldots\Pi_n$. Let $v_1,\ldots, v_{d^2}$ be a set of eigenvectors of $L_0$. For every pair of eigenvector $v_j$ and projector $\Pi_k$, we compute the cost $\lVert v_j-\Pi_kv_j\rVert$. We then solve a minimum-cost maximum-flow problem to assign each eigenvector $v_j$ to a best-fit eigenspace $\Pi_k$. A matrix $\bar{A}\in\log(E)$ is reassembled by pairing up the log of the eigenvalues of $E$ (in some branch) and the eigenvectors $\Pi_kv_j$. The algorithm then projects $\bar{A}$ onto $\mathcal{L}$ to get a Lindbladian $\bar{L}$ and checks whether 
\begin{equation}
\lVert e^{\bar{L}}-E\rVert<\lVert e^{L_0}-E\rVert.
\label{eq:continue}
\end{equation}
If yes, then we update the current best-guess Lindbladian to be $\bar{L}$ and repeat from $\bar{L}$.
\begin{algorithm}\small
 \SetKwInOut{Input}{input}
 \SetKwInOut{Output}{output}
 \SetKw{KwMin}{minimise}
 \SetKw{KwSubjTo}{subject to}
 \SetKw{KwBreak}{break}
 \Input{Estimated transfer matrix $E$; initial guess Lindbladian $L_0$; precision $\beta\geq0$; maximum search depth $T$}
 \Output{Estimated Lindbladian $L^{\text{est}}$ such that $\exp L^{\text{est}} \approx E $}
Compute eigendecomposition of input tomographic transfer matrix, $E = \sum_j \mu_j \Ket{r_j} \Bra{l_j}$\\
Cluster tomographic eigenvalues into approximate eigenspaces $S_k$ according to precision $\beta$\\
Compute principal log eigenvalues $\lambda_j$ such that $e^{\lambda_j}=\mu_j$\\
\For{branches $m$ and random perturbations $\tilde{L}_0$}{
Compute log eigenvalues of tomographic data according to the chosen branch, $\hat{\lambda}_j = \lambda_j + 2 \pi i m_j$\\
\For{$t=1$ \KwTo $T$}{
Get eigenvalues $\sigma_j$ and eigenvectors $\Ket{\tilde{v}_j}$ of current Lindbladian model $\tilde{L}_{t-1}$\\
Match and project model eigenvectors $\Ket{\tilde{v}_j}$ onto tomographic subspaces $S_k$\\
Pair log-eigenvalues $\hat{\lambda}_{j'}$ with projected eigenvectors $\Pi_k \Ket{\tilde{v}_j}$, by matching to the associated model eigenvalues $\sigma_j$, so as  to construct approximate matrix logarithm $\bar{A}_m$\\
Run convex optimisation to project $\bar{A}_m$ back onto the set of Lindbladians, outputting an updated model $\tilde{L}_t$
}
Append $\tilde{L}_T$ to list of candidate solutions $\mathbb{L}$
}
\KwRet $L \in \mathbb{L}$ that minimises $\norm{e^L - E}$
\caption{Alternating Projections (informal)}\label{alg:APinformal}
\end{algorithm}

\subsubsection{Eigenvalue Clustering}

Since the eigenvectors of $E$ tend to differ non-trivially from the eigenvectors of $E^*$ under realistic levels of statistical error, we adopt the eigenvalue clustering preprocessing step proposed in \cite{Onorati2023fittingquantumnoise}. That is, the user needs to specify a precision parameter $\beta$ so that the eigenvalues of $E$ differing by less than $\beta$ are treated as being identical and their eigenspaces are merged into a single degenerate eigenspace. Note that this causes the matrix $\bar{A}$ constructed during the algorithm to belong to some enlarged set $\widetilde{\log}(E)=\{A:e^A\approx E\}$ as opposed to $\log(E)$ where the $\approx$ is controlled by the precision parameter $\beta$. Note that if $\beta=0$, then $\widetilde{\log}(E)=\log(E)$. Also, notice that our Alternating Projections method reduces to the Convex Solve method if we set the precision parameter $\beta=0$. Hence, the Alternating Projections method includes the Convex Solve method as a special case.

Consider a case where $L^{\text{ideal}}$ has a $k$-fold degenerate eigenspace $\Pi$ that gets split into two (possibly still degenerate) eigenspaces $\Pi_1$ and $\Pi_2$ of $L^*$ due to gate noise. When our algorithm computes a set of eigenvectors $v_1,\ldots,v_k$ for $\Pi$, there is no guarantee which vectors in $\Pi$ will the eigen-solver return. Indeed, $L^{\text{ideal}}$ is indifferent to the choices of $v_1,\ldots,v_k$ as all of them lead to the same eigenspace projector. However, it is possible that $v_1,\ldots,v_k$ are all far away from $\Pi_1$ or $\Pi_2$. For example, consider $\Pi=\mathbb{C}^2$, $v_1=\ket{0}$, $v_2=\ket{1}$, $\Pi_1=\text{span}\{\ket{+}\}$, and $\Pi_2=\text{span}\{\ket{-}\}$. This will not cause trouble if $\lVert E-E^*\rVert$ is sufficiently small to allow us to choose $\beta$ small enough so that the eigenvalues of $E$ corresponding to $\Pi_1$ and $\Pi_2$ are treated as distinct. However, in the parameter regimes feasible today, it is likely that the algorithm would need to choose a large $\beta$, which results in the merging of $\Pi_1$ and $\Pi_2$, so that the matrix $\bar{A}\in \widetilde{\log}(E)$ constructed is not too far away from $\mathcal{L}$. In other words, choosing a large $\beta$ blurs the goal of finding vectors close to $\Pi_1$ and $\Pi_2$. To address this issue, instead of initialising at precisely $L^{\text{ideal}}$, the algorithm tries to roughly guess the correct ``directions'' by starting at random perturbations $\tilde{L}^{\text{ideal}}$ of $L^{\text{ideal}}$. We empirically observe that for the $1$-qubit case, a handful of uniformly random perturbed starts would suffice. However, for the $2$-qubit case, the search space becomes too vast to cover by a small number of uniformly random perturbations. As a heuristic solution, we perturb diagonally
\begin{equation}
\tilde{L}^{\text{ideal}}=L^{\text{ideal}}+D
\end{equation}
and diagonally ``along the $X$-axis''
\begin{equation}
\tilde{L}^{\text{ideal}}=L^{\text{ideal}}+H^{\otimes 4}DH^{\otimes 4}
\end{equation}
for random diagonal matrices $D$ with small norms.

Due to the mismatch between \Cref{eq:Lindblad_fitting_objective} and \Cref{eq:dist_two_sets} and the fact that $\log(E)$ (or for that matter $\widetilde{\log}(E)$) is non-convex, standard convergence guarantees for alternating projections methods do not apply to our algorithm. For the Markovian $E=E^*$ case, the arguments given in \Cref{app:trivial} directly translate to the Alternating Projections method when choosing the precision parameter $\beta=0$. We leave proving stronger formal guarantees for our algorithm as future work.

\subsubsection{Criteria for Success}\label{sec:criteria}

We have implemented our algorithm in software and tested extensively on synthetic data obtained using simulated quantum process tomography. We focus on the case where the gate noise is Markovian, so $E^*$ corresponds to the ground truth channel while $E$ differs from $E^*$ solely due to statistical error. In this case, on the one hand, it is almost certain that
\begin{equation}
\min_{L\in\mathcal{L}}\text{ }\lVert e^L-E\rVert<\lVert E^*-E\rVert
\end{equation}
while on the other hand, it is pointless to search for a Lindbladian $L$ whose objective value $\lVert e^L-E\rVert$ is much smaller than $\lVert E^*-E\rVert$. We introduce two criteria for positive retrieval of Lindbladians in our testing:

	\begin{itemize}[leftmargin=*]
		\item \textbf{Success 1}: The output Lindbladian \( L \) satisfies 
		\[
		\lVert e^L - E \rVert \leq \lVert E - E^* \rVert
		\]
		
		\item \textbf{Success 2}: The output Lindbladian \( L \) satisfies 
		\[
		\lVert e^L - E^* \rVert \leq \lVert E - E^* \rVert
		\]
		This is a more stringent success criterion, where the output Lindbladian \( L \) closely fits the ground truth transfer matrix \( E^* \).
	\end{itemize}

\noindent We view passing Success 2 as a bonus since the algorithm is not intended to remove statistical error from $E$. Instead, we would realistically expect $\lVert e^L-E^*\rVert\approx\lVert E-E^*\rVert$ and passing Success 1 guarantees at a bare minimum that $\lVert e^L-E^*\rVert\leq\lVert e^L-E\rVert+\lVert E-E^*\rVert\leq 2\lVert E-E^*\rVert$. We tested our Alternating Projections method on $600$ test cases split across $30$ different gate and gate noise combinations, all under a realistic level of statistical error induced by $10^4$ tomographic shots. Our algorithm achieved success rates of $100\%$ and $61.4\%$ w.r.t. Success 1 and Success 2 respectively. See \Cref{tab:synthetic_aggregate} for more detail and \Cref{sec:Lindblad_synthetic} for additional test results.

\subsection{The Gate Set Lindbladian Fitting Problem}\label{subsec:flip-flop}

\subsubsection{Removing SPAM Errors Under Ideal Statistics}
We first describe the gate set Lindbladian fitting problem in the theoretical setting where there is no statistical error and all gate noises are Markovian. In this case, the input to the gate set Lindbladian fitting problem consists of a set of $k+1$ matrices
\begin{equation}
\{g^*,P_1^*,\ldots,P_k^*\}=\{A^*B^*,A^*E^*_1B^*,\ldots,A^*E^*_kB^*\}
\end{equation}
where each $E^*_i=e^{L_i^*}$ is the transfer matrix of a Markovian channel representing a chosen unitary gate $E^{\text{ideal}}_i=e^{L_i^{\text{ideal}}}$ perturbed by some small gate noise, each row of $A^*$ encodes a 2-outcome POVM, and each column of $B^*$ is a vectorised density matrix. It is assumed that $A^*$ and $B^*$ originate from effecting some SPAM errors to some chosen ideal measurement (2-outcome POVMs) settings $A^{\text{ideal}}$ and preparation (density matrices) settings $B^{\text{ideal}}$ respectively. Note that the SPAM errors are not assumed to be Markovian. The Gram matrix $g^*=A^*B^*$ is assumed to be invertible. By left multiplying $(g^*)^{-1}$ to each $P_i^*$, we obtain a set of matrices
\begin{equation}
\{(B^*)^{-1}E_1^*B^*,\ldots,(B^*)^{-1}E_k^*B^*\},
\end{equation}
one for each gate in the gate set. A solution to the gate set Lindbladian fitting problem consists of, and hence the goal is to find, a physical $B$ and Lindbladians $L_1,\ldots,L_k$ such that
\begin{equation}
B(g^*)^{-1}P_i^*B^{-1}=B(B^*)^{-1}E_i^*B^*B^{-1}=e^{L_i}
\end{equation}
for every gate $i\in\{1,\ldots,k\}$ and $A=g^*B^{-1}$ is physical. We define $A$ to be physical if every row of $A$ encodes a $2$-outcome POVM while $B$ is physical if $B$ is invertible and every column of $B$ is a vectorised density matrix. Let $(B,L_1,\ldots,L_k)$ be a solution and for every gate $i$, define $E_i=e^{L_i}$. While it is clear that $(B^*,L_1^*,\ldots,L_k^*)$ is a solution, it is important to note that $(B,L_1,\ldots,L_k)$ need not be equal to $(B^*,L_1^*,\ldots, L_k^*)$. In fact, given the input
\begin{equation}
\{A^*B^*,A^*E^*_1B^*,\ldots,A^*E^*_kB^*\},
\end{equation}
it is impossible to distinguish whether the underlying physical process is generated by $(A^*,B^*,L_1^*,\ldots,L_k^*)$ or $(A,B,L_1,\ldots,L_k)$ since $AB=A^*B^*$
and for every gate $i\in\{1,\ldots,k\}$,
\begin{equation}
AE_iB=A(B(B^*)^{-1}E_i^*B^*B^{-1})B=A^*E_i^*B^*.
\end{equation}
In other words, $(A^*,B^*,L_1^*,\ldots,L_k^*)$ and $(A,B,L_1,\ldots,L_k)$ generate the same input. Nevertheless, under the same weak noise assumptions that we made in \Cref{subsec:TTE}, $B^*$ is not far away from $B^{\text{ideal}}$ and $E_i^*$ is not far away from $E_i^{\text{ideal}}$ for every gate $i$. Thus, we would prefer solutions closer to $(B^{\text{ideal}},L_1^{\text{ideal}},\ldots,L_k^{\text{ideal}})$.

The Lindbladian fitting problem is precisely captured by the optimisation problem of minimising the objective function
\begin{equation}
f(B,L_1,\ldots,L_k)=\sum_{i=1}^k\left\lVert B(g^*)^{-1}P_i^* B^{-1}-e^{L_i}\right\rVert
\label{eq:gauge_obj_fun}
\end{equation}
subject to the constraints that $L_1,\ldots,L_k$ are Lindbladians, $B$ is physical, and $A=g^*B^{-1}$ is physical. Clearly, $(B,L_1,\ldots,L_k)$ is a solution to the gate set Lindbladian fitting problem if and only if $(B,L_1,\ldots,L_k)$ is feasible and $f(B,L_1,\ldots,L_k)=0$. Comparing to standard gauge optimisation in GST, note that the objective function \Cref{eq:gauge_obj_fun} is similar to \Cref{eq:gst_objective}, except replacing each $E_i^\text{ideal}$ with $e^{L_i}$, and that $B^*$ \textit{is} an optimal gauge for \Cref{eq:gauge_obj_fun}.

\subsubsection{Accepting Statistical Error}
Now consider the realistic scenario where the input matrices are not precisely
\begin{equation}
\{g^*,P_1^*,\ldots,P_k^*\}
\end{equation}
but are some nearby matrices
\begin{equation}
\{\tilde{g},\tilde{P_1},\ldots,\tilde{P_k}\}
\end{equation}
due to the inevitable statistical error in performing quantum process tomography. In this case, there is unlikely to be a physical $B$ that makes $B\tilde{g}^{-1}\tilde{P_i}B^{-1}$ exactly Markovian for any gate $i$. Thus, the goal of the gate set Lindbladian fitting problem needs to be relaxed to accommodate statistical error. To this end, we relax the goal to finding a near-physical $B$ such that $A=\tilde{g}B^{-1}$ is near-physical and $B\tilde{g}^{-1}\tilde{P_i}B^{-1}$ is close to being Markovian for every gate $i$. The amount of slack allowed for the physical constraints is an input parameter selected by the user and should be chosen based on the number of tomographic shots used to estimate the Gram matrix $\tilde{g}$. We model this problem by the optimisation problem minimise
\begin{equation}
f_{\max}(B,L_1,\ldots,L_k)=\max_{i=1,\ldots,k}\lVert B\tilde{g}^{-1}\tilde{P}_iB^{-1}-e^{L_i}\rVert
\label{eq:f_max_def}
\end{equation}
subject to the constraints that $L_1,\ldots,L_k$ are Lindbladians, $B$ is near-physical, and $A=\tilde{g}B^{-1}$ is near-physical. We minimise the max over the gate set in the definition of $f_{\max}$ as opposed to the sum in $f$ to enforce the requirement of finding a single gauge $B$ that simultaneously fits all gates in the gate set. Under the sum formulation, a gauge $B$ that fits some of the gates in the gate set extremely well but the remaining gates poorly may have a small objective value. Such a $B$ clearly does not explain the true physical process and hence is undesirable. Adding more gates to the gate set may alleviate this problem. However, in a real experiment, it is common for the experimenter to wanting to characterise only a particular gate, for example CNOT, but GST is performed on a gate set consisting of CNOT and other auxiliary gates chosen solely for the purpose of fitting the SPAM errors. Thus, it is also undesirable to require a large gate set. The max formulation in \Cref{eq:f_max_def} tries to address this trade-off. Furthermore, since some gates could be noisier than others in the gate set, substituting $\max$ in the objective function of standard gauge optimisation (see \Cref{eq:gst_objective}) incorrectly incentivises pulling gate errors in the poorest-fit gate into SPAM errors. However, the aforementioned problem does not appear in the formulation of \Cref{eq:f_max_def} since the gate errors are self-consistently fitted jointly with SPAM errors.

Applying gradient-based local minimisation algorithms to optimise \Cref{eq:f_max_def} is tricky since the max function is not differentiable everywhere. One common workaround is to approximate the max function by a smooth approximation, and this is the approach we take. The LogSumExp (LSE) function is defined as
\begin{equation}
\text{LSE}(x_1,\ldots,x_k)=\ln(e^{x_1}+\cdots+e^{x_k}), \label{eq:LSE}
\end{equation}
and it is easy to verify that for every $t>0$,
\begin{equation}
\max\{x_1,\ldots,x_k\}\leq\frac{1}{t}\text{LSE}(tx_1,\ldots,tx_k)\leq\max\{x_1,\ldots,x_k\}+\frac{\ln(k)}{t}.
\end{equation}
Moreover, the LSE function is monotonically increasing in all of its inputs. We fix the scaling factor $t$ to be some constant that depends on $k$ and the number of tomographic shots. Our final objective function for the gate set Lindbladian fitting problem is the following approximation of \Cref{eq:f_max_def} based on the LSE function:
\begin{equation}
h(B,L_1,\ldots,L_k)=\frac{1}{t}\text{LSE}(t\lVert B\tilde{g}^{-1}\tilde{P}_1B^{-1}-e^{L_1}\rVert,\ldots,t\lVert B\tilde{g}^{-1}\tilde{P}_kB^{-1}-e^{L_k}\rVert).
\label{eq:h_def}
\end{equation}

\subsubsection{The Gate Set Flip-Flop Algorithm}
The idea of the Gate Set Flip-Flop algorithm is to minimise \Cref{eq:h_def} via local minimisation by alternating between minimising $B$ with $L_1,\ldots,L_k$ fixed and minimising $L_1,\ldots,L_k$ with $B$ fixed; see \Cref{alg:flipflop} for pseudocode. More specifically, when $L_1,\ldots,L_k$ are fixed, we minimise the (restricted) objective function over $B$ subject to physical constraints on $A$ and $B$ using gradient-based methods. When $B$ is fixed, minimising $L_1,\ldots,L_k$ amounts to fitting a Lindbladian to each $E_i=B\tilde{g}^{-1}\tilde{P}_iB^{-1}$, which we can handle using either the Convex Solve or the Alternating Projections methods described in \Cref{subsec:TTE}. Note that we adopt $L_i$ at the current iteration of Gate Set Flip-Flop as the input best-guess Lindbladian for the next run of Alternating Projections. Since $E_i$ being Markovian for all $i\in\{1,\ldots,k\}$ correspond to a solution with objective value zero, which is essentially impossible in the presence of statistical error, the $E_i$'s will be non-Markovian throughout the algorithm.

\begin{algorithm}\small
\caption{Gate Set Flip-Flop}
{\label{alg:flipflop}}
\SetKwInOut{Input}{input}
\SetKwInOut{Output}{output}
\SetKwFunction{KwLindbladFit}{LindbladFit}
\SetKw{KwSubjTo}{subject to}
\SetKw{Kwminimise}{minimise}
\Input{Tomographic measurement data $(\tilde{g},\tilde{P}_1,\ldots,\tilde{P}_k)$ where $\tilde{g}$ is the measured Gram matrix and $\tilde{P}_i$ is the data for the $i$-th quantum process; initial-guess Lindbladians for each gate $(L_1^{\text{ideal}},\ldots,L_k^{\text{ideal}})$ following the same indexing as for tomographic data; initial guess $B^{\text{ideal}}$ for tomographic preparation settings; stopping condition $S$}
\Output{Estimated Lindbladians $(L_1,\ldots,L_k)$, preparation settings $B$, and measurement settings $A$.}
$B \gets B^{\text{ideal}}$, $L_1\gets  L_1^{\text{ideal}}$, \ldots, $L_k\gets L_k^{\text{ideal}}$

\While{Stopping condition $S$ not met}{
\tcp{Here we optimise the Lindbladians first. This order can be swapped, as discussed in the main text.}

\For{$i = 1$ \KwTo $k$}{
$E_i \gets B \tilde{g}^{-1} \tilde{P}_i B^{-1}$

$L_i \gets $ \KwLindbladFit{$E_i$, $L_i$}

\tcp{LindbladFit is a function that takes as input a tomographic channel estimate (and optionally a best-guess Lindbladian) and outputs an estimate for a generating Lindbladian.}
}
$B\gets$ \Kwminimise $h(B,L_1,\ldots,L_k)$ \KwSubjTo $B$ and $A=gB^{-1}$ physical \\
where $h$ is given by \Cref{eq:h_def}
}
\Return{$B,L_1,\ldots,L_k$}
\end{algorithm}

The initial guess used to seed the optimisation is $(B^{\text{ideal}},L_1^{\text{ideal}},\ldots,L_k^{\text{ideal}})$. This is a sensible choice because we seek solutions in the vicinity of $(B^{\text{ideal}},L_1^{\text{ideal}},\ldots,L_k^{\text{ideal}})$ in accordance with the weak noise assumption. In addition, we need to choose between starting with minimising $B$ or $L_1,\ldots,L_k$. Minimising $B$ first against the ideal gates $E_i^\text{ideal}$ is similar to gauge optimisation in standard GST, except with physical constraints on $B$ and a different objective function. For each gate $i$, minimising $L_i$ first against $B^{\text{ideal}}\tilde{g}^{-1}\tilde{P}_i(B^\text{ideal})^{-1}$ resembles linear inversion process tomography where SPAM errors are assumed to be non-existent. The choice that is more preferable depends on whether we have prior knowledge about the noise characteristics in the input. If we believe that state preparation errors are stronger than gate noises, then minimising $B$ first could perform the best. Conversely, if we have the prior knowledge that state preparation errors are considerably weaker than gate noises, then minimising the Lindbladians first could be preferable.

Recall the fact that the LSE function is monotonically increasing in each of its inputs, meaning that the objective values of the iterates generated by Gate Set Flip-Flop form a monotonically non-increasing sequence that is bounded from below. Thus, the sequence of objective values necessarily converges to a fixed point. However, by the nature of local minimisation, the Gate Set Flip-Flop algorithm is not guaranteed to find an optimal solution. 

The main difference between the gate set Lindbladian fitting problem and standard gauge optimisation in GST (see \Cref{eq:gst_objective}) lies in replacing each $E_i^{\text{ideal}}$ with $e^{L_i}$ in the objective function. Operationally speaking, during each iteration of Gate Set Flip-Flop, the gauge $B$ is minimised w.r.t. a set of more accurate estimates of the noisy gates than the ideal gates $E_i^{\text{ideal}}$. This can improve the accuracy of the estimated SPAM errors encoded in $B$ by reducing the amount of gate noise erroneously pulled into SPAM errors.

\FloatBarrier

\section{Synthetic Testing of Alternating Projections and Convex Solve} \label{sec:Lindblad_synthetic}

We first describe how we generate our test cases. We choose a $2$-qubit gate and Markovian gate noise combination, for example, CNOT with amplitude damping noise. This specifies the ideal and ground truth transfer matrices $E^\text{ideal}$ and $E^*$ respectively. We then feed $E^*$ through simulated quantum process tomography to obtain a matrix $\tilde{E}$. Finally, $\tilde{E}$ is projected to CPTP to generate an input transfer matrix $E$. Multiple runs of process tomography on the same $E^*$ will generate different $\tilde{E}$ and $E$ instances since the instantiation of statistical error in each run will be different. Since the gate noise is Markovian, the difference between $E$ and $E^*$ is solely due to statistical error. In this section, all inputs are generated using $10^4$ tomographic shots which is also the setting we use for our tests on real quantum hardware. We emphasise that during synthetic testing, both $E^{\text{ideal}}$ and $E^*$ are chosen and known.

Before presenting aggregate test results, we first highlight the difference between Convex Solve and Alternating Projections using a CNOT gate example. For CNOT gates, it is possible that at the level of statistical error induced by $10^4$ tomographic shots and under weak gate noises, the closest Lindbladian to every physically relevant $A\in\log(E)=\{A:e^A=E\}$ exponentiates to a channel very far away from $E$. \Cref{fig:stat_error_a,fig:stat_error_b,fig:stat_error_c} present a synthetic noisy CNOT gate example where (1)~the Lindbladian fitting problem can be solved exactly using Convex Solve when there is no statistical error in the input transfer matrix, (2)~the Convex Solve method estimates none of the Hamiltonian, the $\gamma$, or the jump operator accurately when the input suffers from a realistic level of statistical error, and (3)~Alternating Projections outputs a Lindbladian that estimates all of the Hamiltonian, the $\gamma$, and the jump operator accurately for the same statistically noisy input. For the Alternating Projections method to work for this example, it is necessary to choose the precision parameter $\beta>0$ to merge all the eigenvalues of $E$ close to being real negative into one degenerate eigenspace, hence projecting to an enlarged search space $\widetilde{\log}(E)$. All Lindbladian canonical decompositions are computed using the procedure described at the end of \Cref{subsec:Lindblad}.

For our synthetic benchmark, we focus on the two success criteria defined in \Cref{sec:criteria}. Recall that we say a test run passes Success 1 if $\lVert e^L-E\rVert\leq\lVert E-E^*\rVert$ holds, i.e., whether the output Lindbladian $L$ closely fits the input transfer matrix $E$. Since $\lVert e^L-E\rVert$ is the objective function the algorithm explicitly seeks to minimise, $\lVert E-E^*\rVert$ is the objective value attained by the ground truth Lindbladian $L^*$ besides also quantifying the amount of statistical error in the input. Success 2, the second more stringent success criterion, checks for $\lVert e^L-E^*\rVert\leq\lVert E-E^*\rVert$, i.e., whether the output Lindbladian $L$ closely fits the ground truth transfer matrix $E^*$. We view passing Success~2 as a bonus since the algorithm is not intended to remove statistical error from $E$ -- we would realistically expect $\lVert e^L-E^*\rVert\approx\lVert E-E^*\rVert$. Passing Success~1 guarantees at a bare minimum that $\lVert e^L-E^*\rVert\leq 2\lVert E-E^*\rVert$. See \Cref{tab:synthetic_aggregate} for a table of aggregate results and see \Cref{fig:CNOT_distribution} for a histogram of the distribution of the $\lVert e^L-E\rVert$ and $\lVert e^L-E^*\rVert$ values for the 20 instances of CNOT gate with coherent $X$ and dephasing noise tested.

By examining the canonical decompositions of Lindbladians returned by our algorithm, we found cases where our algorithm's predicted jump operators visibly differed from the ground truth jump operators even when the output passed Success~2 (see \Cref{fig:synthetic_bad_jump} for an example). Strictly speaking, this does not mean that the algorithm has failed since the algorithm's goal is to find \textit{a} Lindbladian that approximately generates the input dynamics, and the output Lindbladian $L$ does attain a small $\lVert e^L-E\rVert$ value.

\begin{figure}[t]
    \centering
    \includegraphics[width=\textwidth]{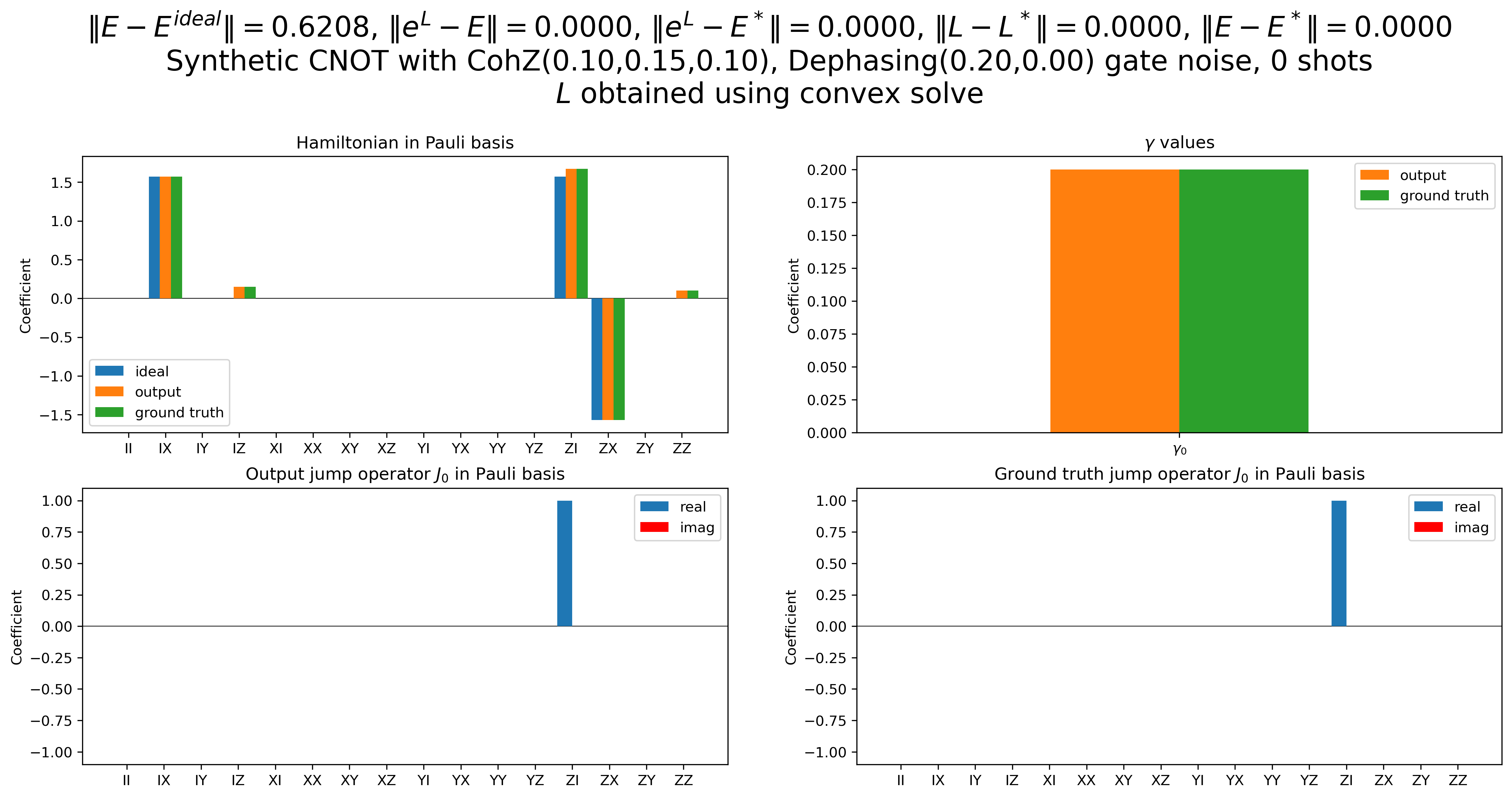}
    \caption{CNOT gate with coherent $ZI$, $IZ$, and $ZZ$ errors and dissipative $ZI$ error. $E^\text{ideal}$ is the transfer matrix of the ideal noiseless CNOT gate, and $\lVert E-E^{\text{ideal}}\rVert$ quantifies the total amount of gate and statistical noises in the input. In this example, there is no statistical error in the input $E$, so $\lVert E-E^*\rVert=0$. An exhaustive search over the branches of $\log(E)$ finds a Lindbladian $L$ satisfying $\lVert e^L-E\rVert=\lVert e^L-E^*\rVert=0$ up to numerical errors. The canonical decompositions are computed using the procedure described at the end of \Cref{subsec:Lindblad}.}
    \label{fig:stat_error_a}
\end{figure}

\begin{figure}[t]
    \centering
    \includegraphics[width=\textwidth]{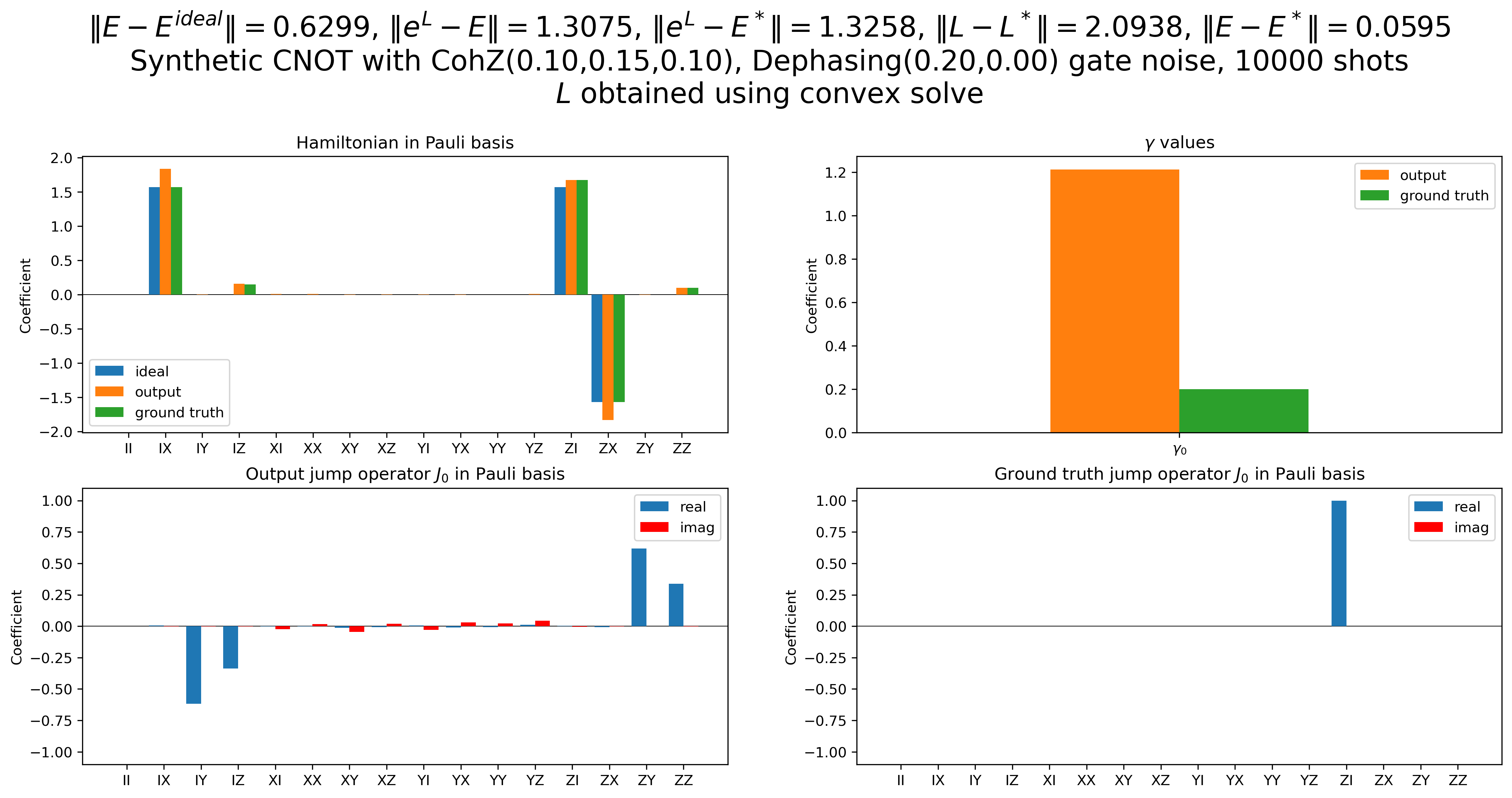}
    \caption{For the same noisy CNOT gate considered in \Cref{fig:stat_error_a}, at the level of statistical noise induced by $10^4$ shots, an exhaustive search over the branches of $\log(E)$ is no longer able to find a Lindbladian that exponentiates closely to the input. The output Lindbladian $L$ fails to capture accurately any of the Hamiltonian, the $\gamma$, or the jump operator. Notice that $e^L$ is further away from the input $E$ than the zero-knowledge estimate $E^{\text{ideal}}$ where $E^{\text{ideal}}$ is the transfer matrix of the ideal noiseless CNOT gate.}
    \label{fig:stat_error_b}
\end{figure}

\begin{figure}[t]
    \centering
    \includegraphics[width=\textwidth]{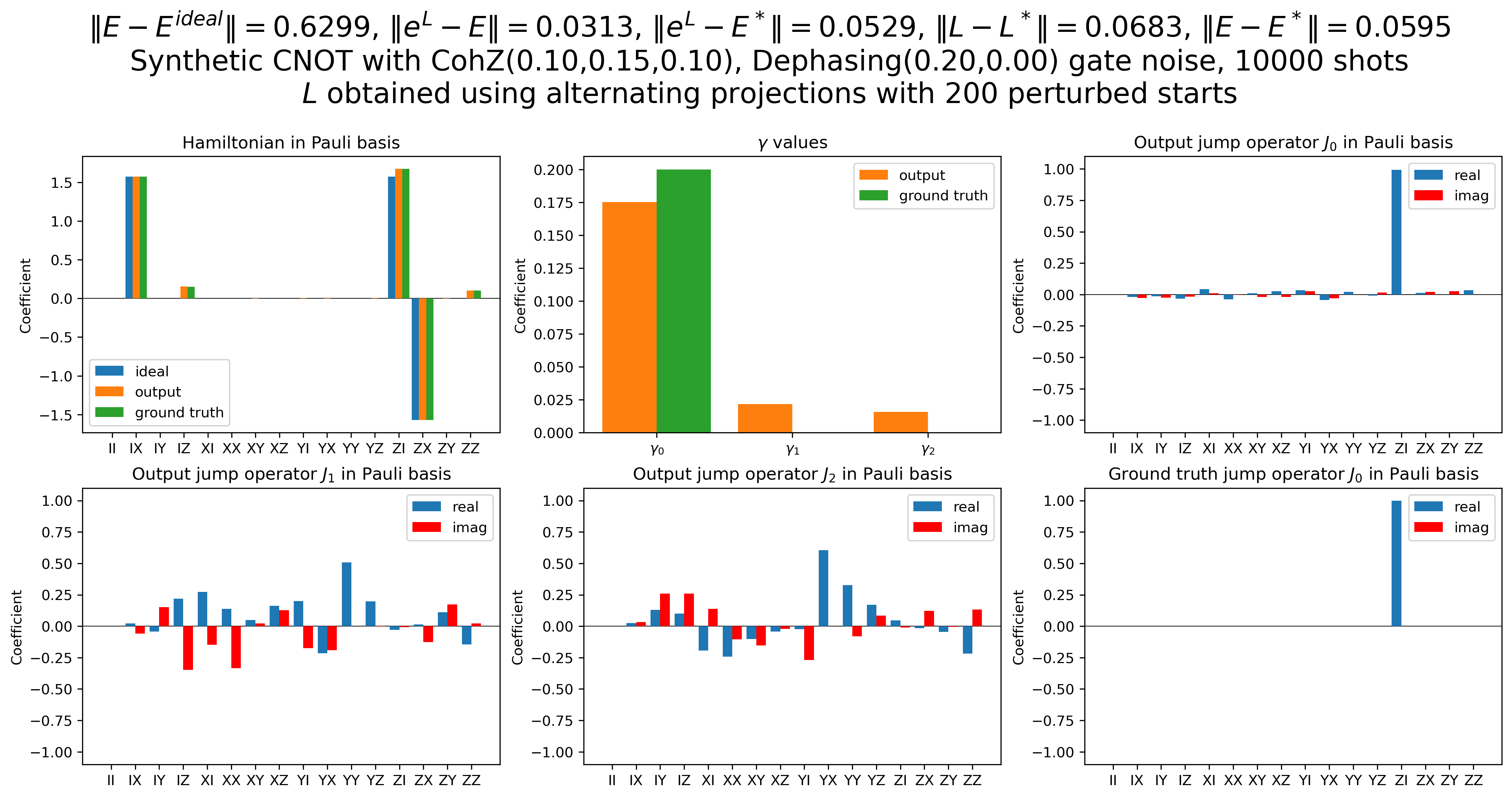}
    \caption{For the same input $E$ considered in \Cref{fig:stat_error_b}, our algorithm finds a Lindbladian that exponentiates very closely to the input. The output Lindbladian $L$ captures all of the Hamiltonian, $\gamma_0$, and the jump operator $J_0$ accurately. While the algorithm erroneously predicts the jump operators $J_1$ and $J_2$ due to statistical error, their corresponding $\gamma_1$ and $\gamma_2$ coefficients are small.}
    \label{fig:stat_error_c}
\end{figure}

\begin{table}[t]
\small
    \centering
    \begin{tabular}{c|c|c|c|c|c}
    \multicolumn{6}{c}{Synthetic testing of the Alternating Projections method}\\ \hline
         Success 1, 2 & Overrotation & Coh$X$ & Bitflip & Coh$X$ Dephasing & Coh$X$ AmpDamp Bitflip\\ \hline
        CNOT & 20, 12 & 20, 20 & 20, 20 & 20, 0 & 20, 2\\ \hline
        ISWAP & 20, 8 & 20, 16 & 20, 17 & 20, 8 & 20, 2\\ \hline
        $X\otimes H$ & 20, 13 & 20, 18 & 20, 19 & 20, 16 & 20, 20\\ \hline

        \multicolumn{6}{c}{\vspace{-0.2cm}}\\

        Success 1, 2 & AmpDamp & Coh$Z$ & Dephasing & Coh$Z$ Bitflip & Coh$Z$ AmpDamp Dephasing \\ \hline
        CNOT & 20, 6 & 20, 20 & 20, 19 & 20, 5 & 20, 8\\ \hline
        ISWAP & 20, 10 & 20, 17 & 20, 20 & 20, 9 & 20, 20\\ \hline
        $X\otimes H$& 20, 20 & 20, 17 & 20, 16 & 20, 1 & 20, 6\\ \hline
    \end{tabular}

    \vspace{0.4cm}
    
    \begin{tabular}{c|c|c|c|c|c}
    \multicolumn{6}{c}{Synthetic testing of the Convex Solve method}\\ \hline
         Success 1, 2 & Overrotation & Coh$X$ & Bitflip & Coh$X$ Dephasing & Coh$X$ AmpDamp Bitflip\\ \hline
        $\sqrt{X}\otimes I$ & 20, 16 & 20, 18 & 20, 12 & 20, 20 & 20, 20\\ \hline
        $T\otimes I$ & 20, 20 & 20, 20 & 20, 20 & 20, 20 & 20, 20\\ \hline
        $I\otimes I$ & /, / & 20, 20 & 20, 20 & 20, 20 & 20, 20\\ \hline

        \multicolumn{6}{c}{\vspace{-0.2cm}}\\

        Success 1, 2 & AmpDamp & Coh$Z$ & Dephasing & Coh$Z$ Bitflip & Coh$Z$ AmpDamp Dephasing \\ \hline
        $\sqrt{X}\otimes I$ & 20, 20 & 20, 19 & 20, 20 & 20, 19 & 20, 20\\ \hline
        $T\otimes I$ & 20, 20 & 20, 20 & 20, 20 & 20, 20 & 20, 20\\ \hline
        $I\otimes I$ & 20, 20 & 20, 20 & 20, 20 & 20, 20 & 20, 20\\ \hline
    \end{tabular}
   
    \caption{Each gate and noise model combination is tested on $20$ input instances generated using simulated quantum process tomography with $10^4$ shots. Each entry records the number of runs (out of $20$) that passed the Success 1, 2 criteria $\lVert e^L-E\rVert\leq \lVert E-E^*\rVert$, $\lVert e^L-E^*\rVert\leq \lVert E-E^*\rVert$. When using Alternating Projections, the algorithm tries 500 perturbed starts. The Alternating Projections method consistently succeeds with respect to the Success 1 criterion for all the gate and noise combinations tested. The combined success rates are $600/600=100\%$ and $385/600=61.4\%$ for Success criteria 1 and 2 respectively. The Convex Solve method also consistently succeeds for all the cases tested, in particular the idling gate $I\otimes I$. The noise strengths $\lVert L^*-L^{\text{ideal}}\rVert$ range from $0.089$ to $0.355$.}
    \label{tab:synthetic_aggregate}
\end{table}

\begin{figure}[t]
\centering
    \includegraphics[scale=0.8]{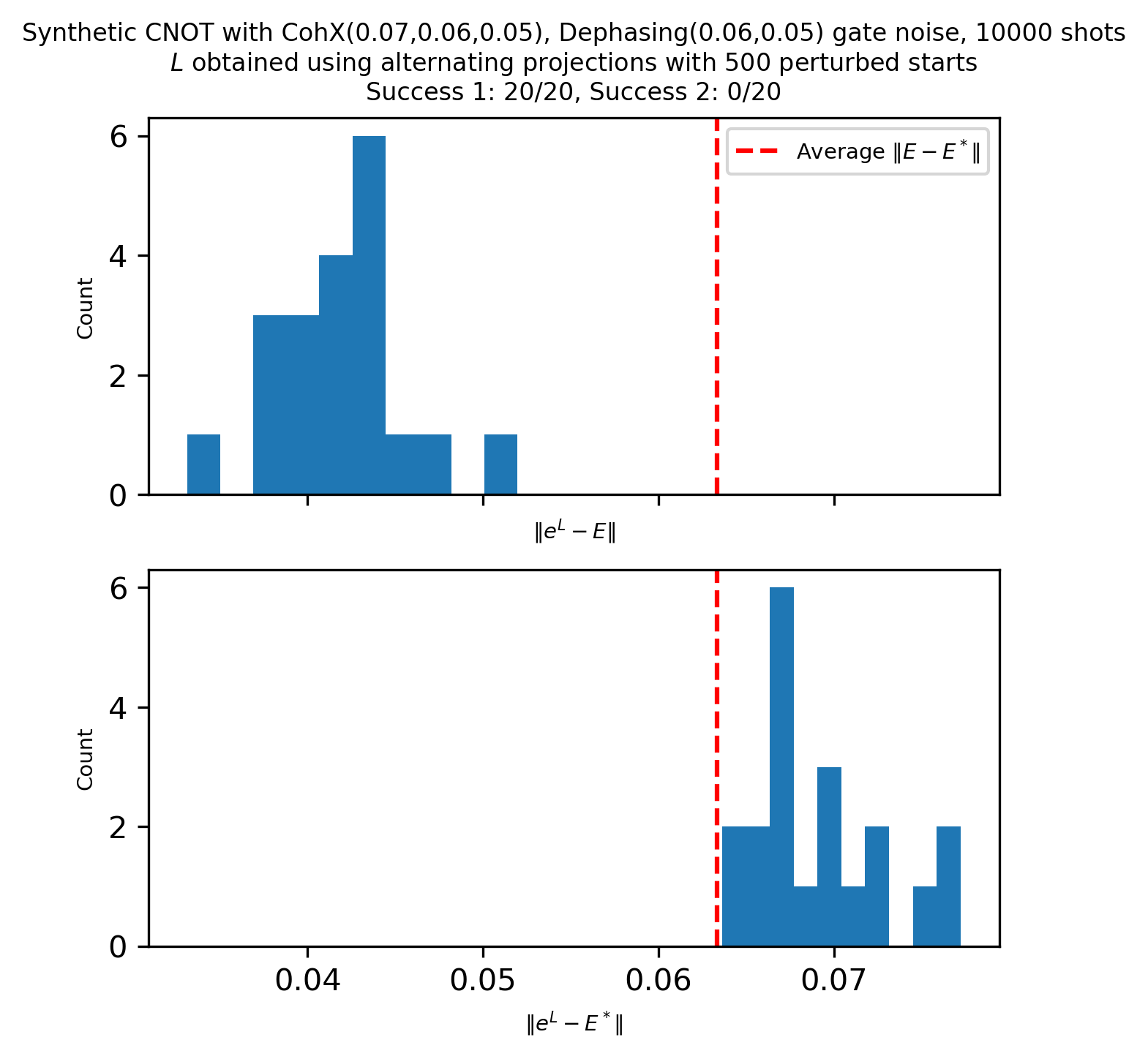}
    \caption{Distribution of the $\lVert e^L-E\rVert$ (top) and $\lVert e^L-E^*\rVert$ (bottom) values for the 20 instances of CNOT with coherent $X$ and dephasing noise tested. The algorithm is overfitting statistical error to some extent.}
    \label{fig:CNOT_distribution}
\end{figure}

\begin{figure}[t]
\centering
    \includegraphics[width=\textwidth]{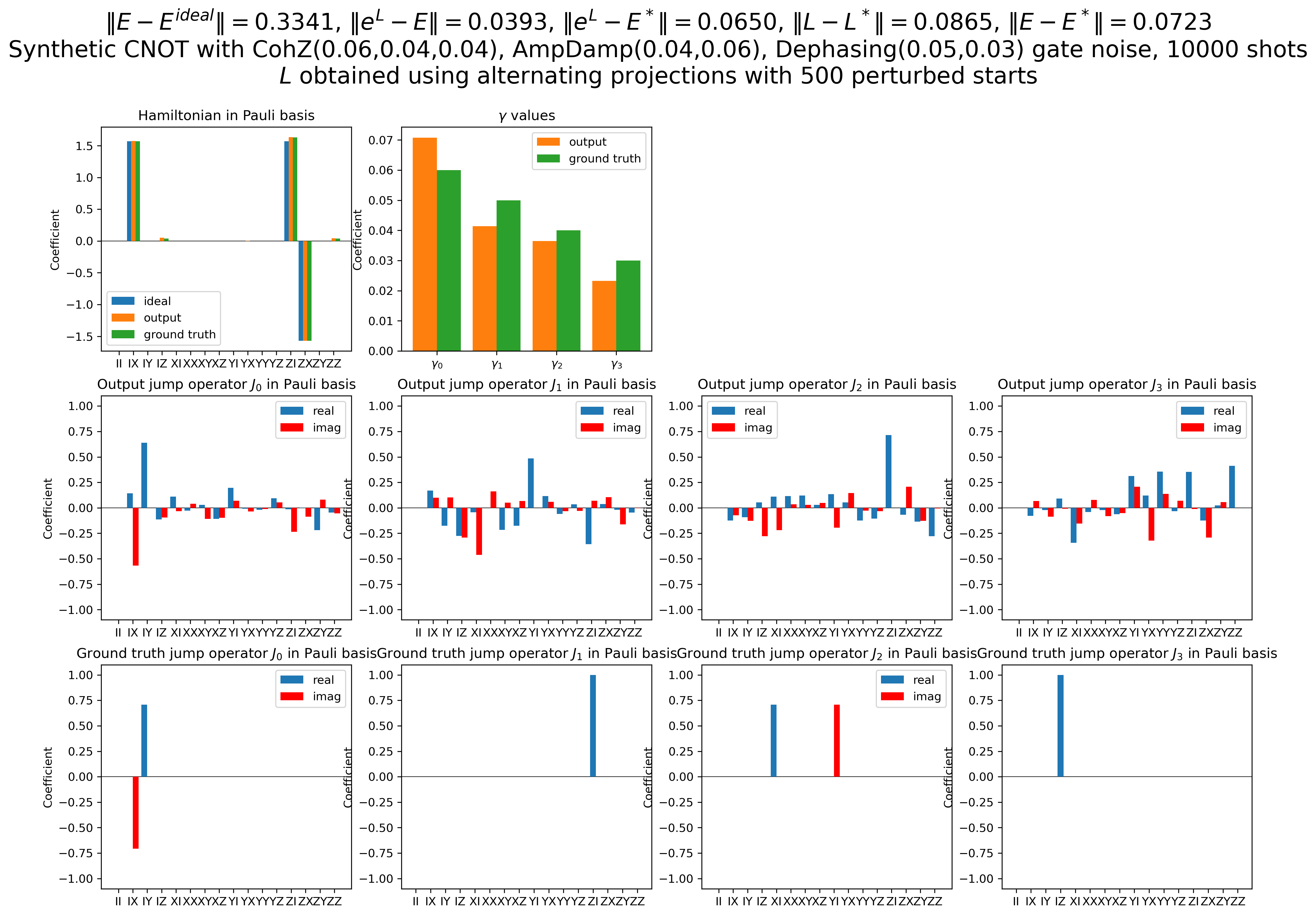}
    \caption{Canonical decomposition of a CNOT with coherent $Z$, amplitude damping, and dephasing noise instance. The algorithm succeeded on the input according to both of our success criteria. The Lindbladian found by the algorithm matches the ground truth Hamiltonian accurately and the $\gamma$ values are well-estimated. However, the predicted jump operators are difficult to interpret.}
    \label{fig:synthetic_bad_jump}
\end{figure}

\begin{figure}[t]
\centering
\subfloat[]{\includegraphics[width=.5\textwidth]{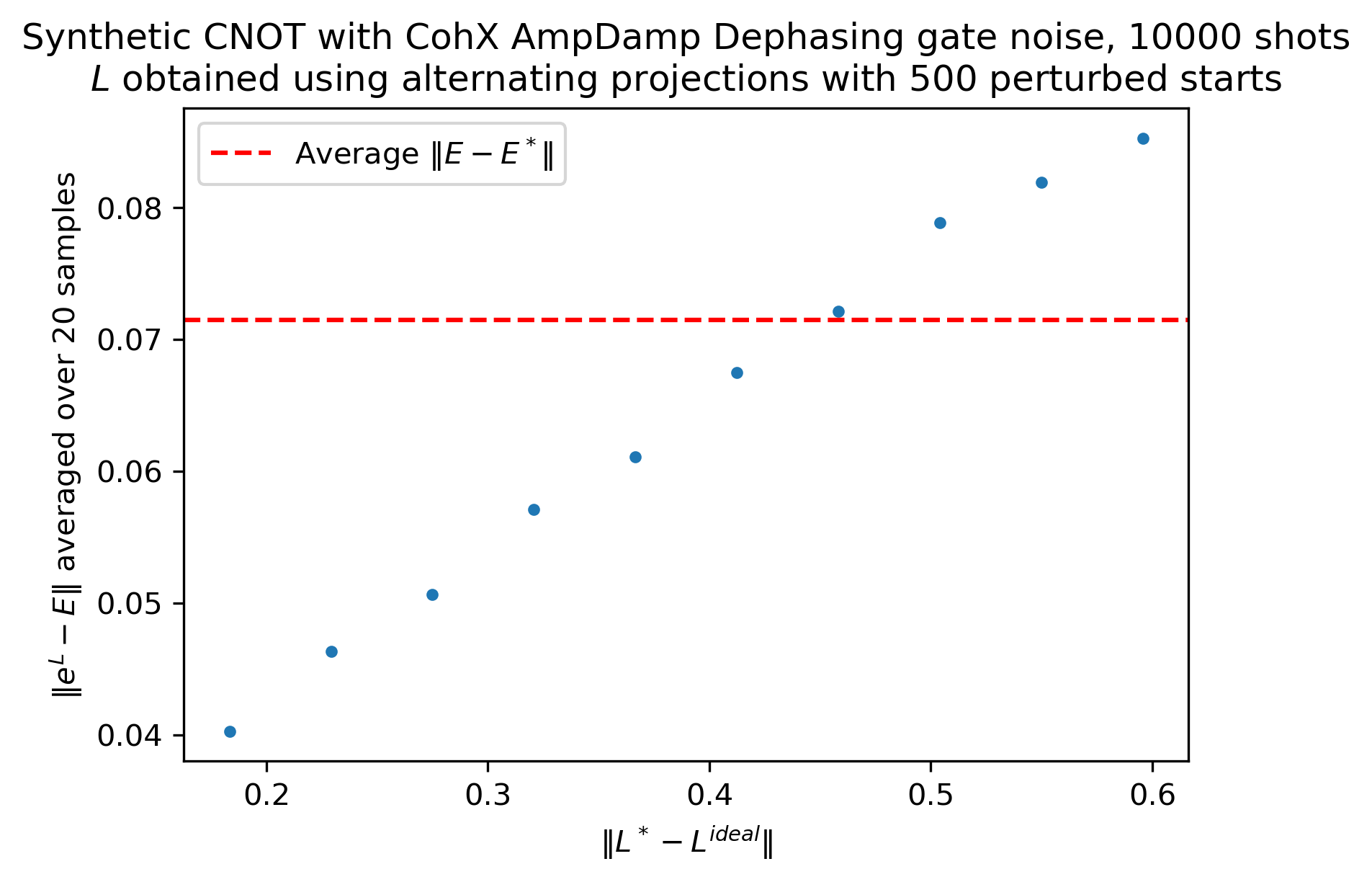}}
\subfloat[]{\includegraphics[width=.5\textwidth]{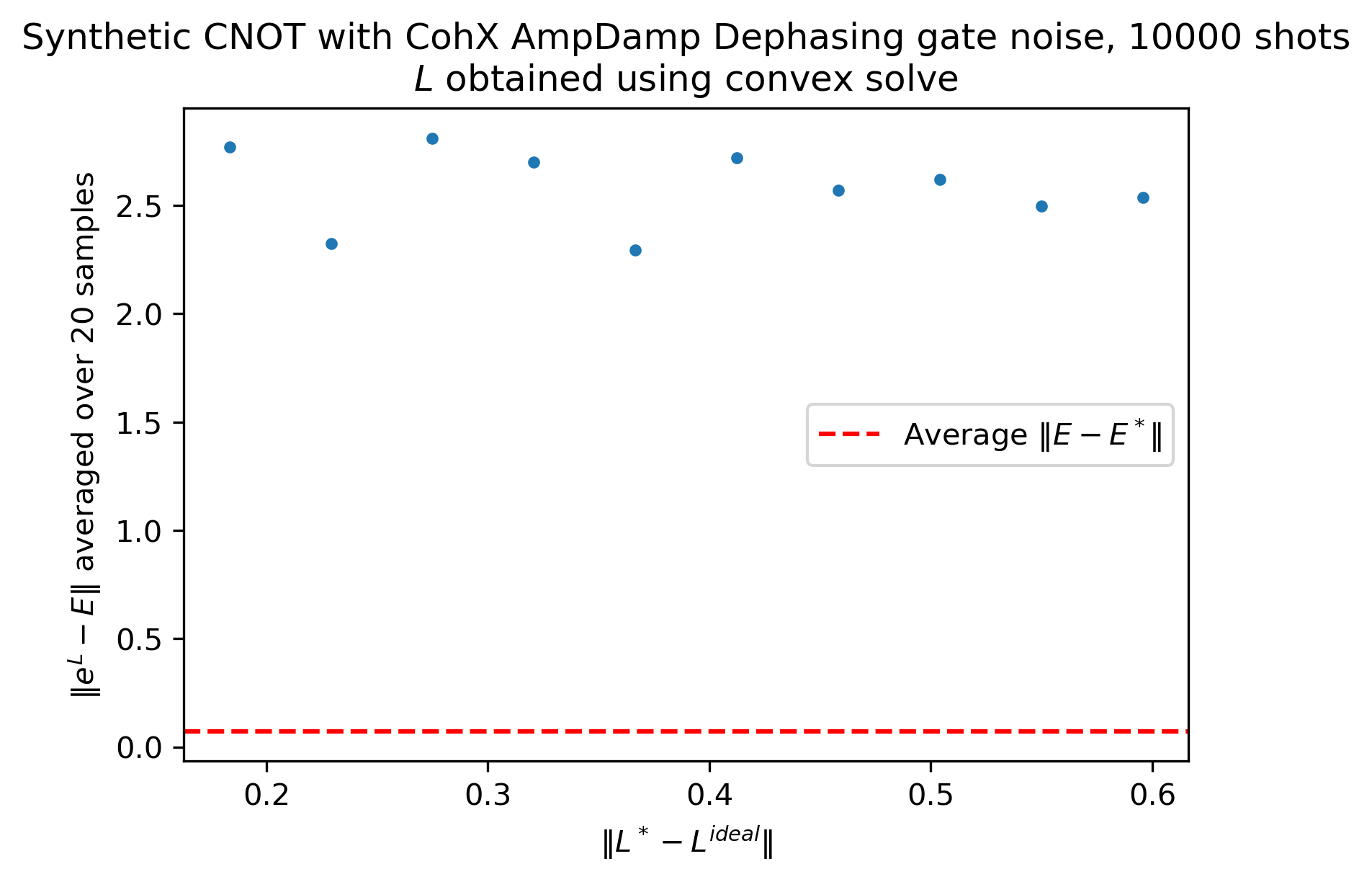}}\\
\subfloat[]{\includegraphics[width=.5\textwidth]{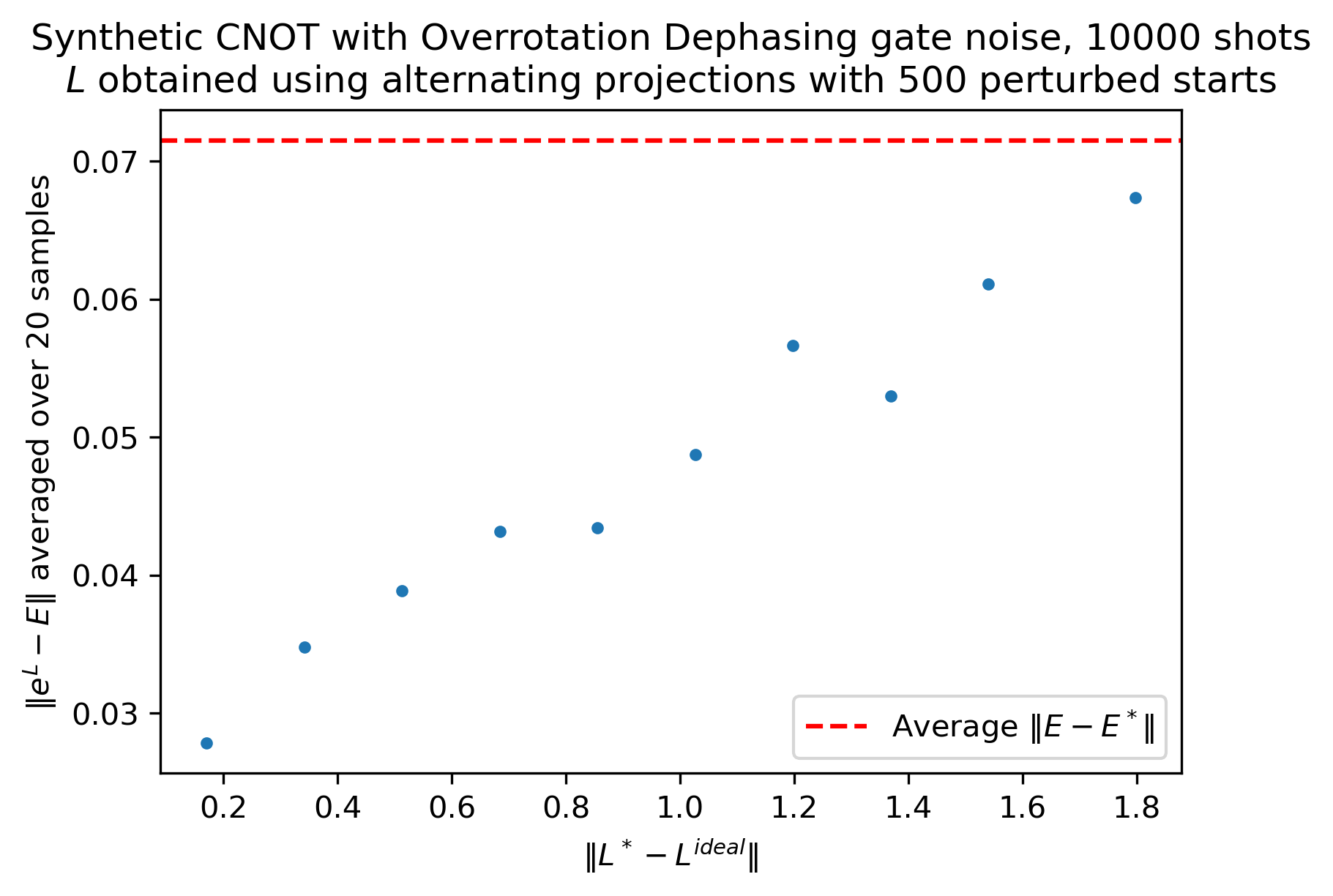}}
\subfloat[]{\includegraphics[width=.5\textwidth]{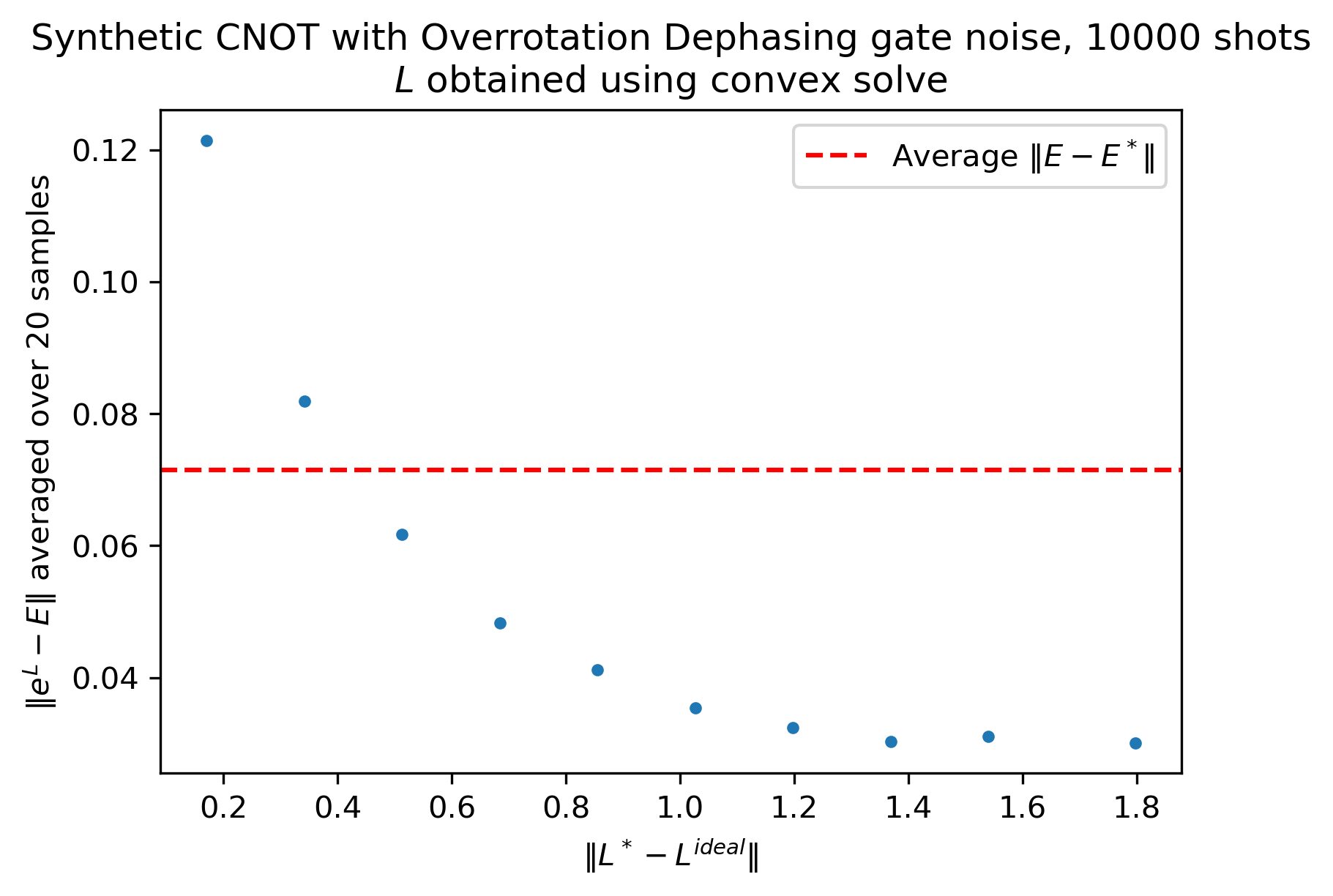}}
\caption{CNOT with increasing strengths of (a)(b) coherent $X$, amplitude damping, and dephasing gate noise and (c)(d) overrotation and dephasing gate noise. (a) The Alternating Projections method passes Success 1 for this test case until $\lVert L^*-L^{\text{ideal}}\rVert$ reaches roughly $0.45$. (b) Even as the noise becomes stronger, the inputs $E$ always have eigenvalues close to being real negative, and the Convex Solve method consistently fails. (c) For this noise model, the Alternating Projections method succeeds for the entire range of noise strengths tested. (d) As the noise becomes stronger, $E$ stops having eigenvalues close to being real negative, and the Convex Solve method starts working.}
\label{fig:CNOT_c2}
\end{figure}

Next, we investigate the weak noise assumption $\lVert L^*-L^{\text{ideal}}\rVert\leq c_2$ made near the beginning of \Cref{subsec:Lindblad}. How small does $c_2$ need to be to ensure that the Alternating Projections method succeeds? To put the $\lVert L^*-L^{\text{ideal}}\rVert$ values in perspective, we can evaluate the average gate fidelity between $\Eideal$ (which corresponds to a unitary gate) and $E^*$ using the formula
\begin{equation}
F_{\mathrm{avg}} (\Eideal,E^*) = \frac{\frac{1}{d}\Tr[(\Eideal)^\dagger E^*] + 1}{d + 1}\label{eq:avggatefidelity}
\end{equation}
adapted from~\cite{horodecki1999general,nielsen2002simple} for transfer matrices where $d=4$ is the dimension of the system. In \Cref{fig:CNOT_c2} (a), we consider running the Alternating Projections algorithm on CNOT with increasing strengths of coherent $X$, amplitude damping, and dephasing noise. The results indicate that our algorithm succeeds w.r.t. the Success 1 criterion for this specific gate and noise model combination as long as $\lVert L^*-L^{\text{ideal}}\rVert\leq 0.45$, at which point $F_{\mathrm{avg}} (\Eideal,E^*)$ is roughly $94\%$. For this gate and noise model combination, the inputs $E$ always have eigenvalues close to being real negative. Hence, we expect the Convex Solve method to fail on this test case, which is confirmed in \Cref{fig:CNOT_c2} (b). In \Cref{fig:CNOT_c2} (c) and (d), we repeat the same experiment on CNOT with overrotation and dephasing noise. The Alternating Projections method's performance is less sensitive to the noise strength in this case, and it succeeds even when $\lVert L^*-L^{\text{ideal}}\rVert=1.8$, which translates to an average gate fidelity of $83.5\%$ between $E^{\text{ideal}}$ and $E^*$. For CNOT with overrotation and dephasing noise, the $-1$ eigenvalues of $E^{\text{ideal}}$ steadily rotate away from the real negative axis as the noise strength increases, and the Convex Solve method begins working when the average gate fidelity between $E^{\text{ideal}}$ and $E^*$ drops below roughly $96\%$. 

Lastly, we examine to what extent does the Convex Solve method solve the Lindbladian fitting problem with statistical error when $E^{\text{ideal}}$ corresponds to the identity gate $I\otimes I$. In \Cref{fig:II_trivial}, we plot the results of running the Convex Solve method on noisy identity gates with statistical error with various noise models and increasing noise strengths where the average gate fidelities between $E^\text{ideal}$ and $E^*$ range from $99.9\%$ to $44.6\%$. For each noise model and noise strength combination, we run the Convex Solve method on 20 input instances generated using the procedure described at the beginning of this section. The obtained $\lVert e^L-E\rVert$ values are averaged to calculate each data point. Our results indicate that the Convex Solve method succeeds consistently in the parameter regimes tested. 

\begin{figure}[t]
\centering
    \includegraphics[width=\textwidth]{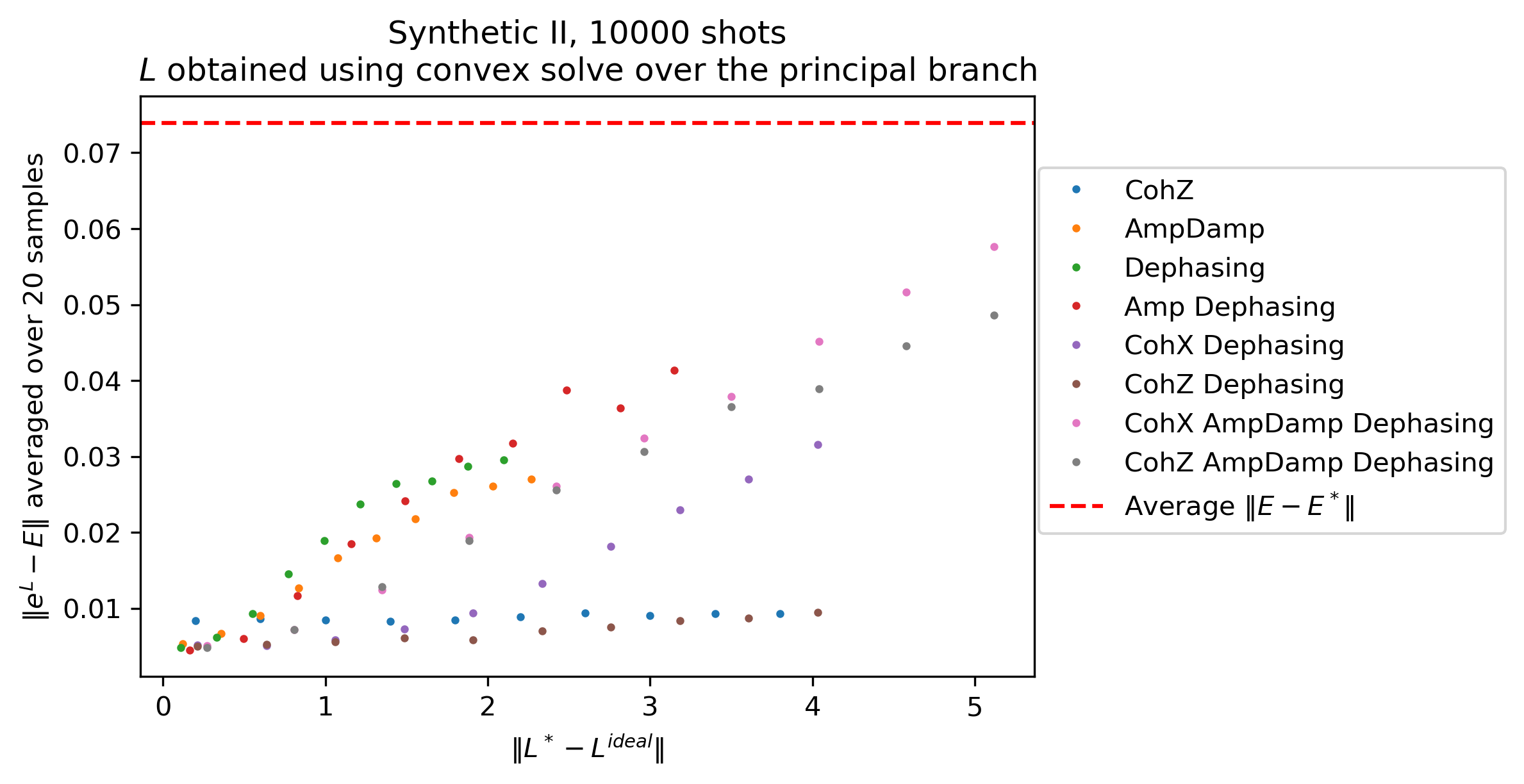}
    \caption{Each noise model is tested with 10 levels of increasing noise strengths. Each data point corresponds to the average $\lVert e^L-E\rVert$ value of 20 input instances. Across all the data points, the average gate fidelities between $E^\text{ideal}$ and $E^*$ range from $99.9\%$ to $44.6\%$.}
    \label{fig:II_trivial}
\end{figure}

Our implementation runs significantly faster than the algorithm implemented in prior work \cite{Onorati2023fittingquantumnoise}. For comparison, the same ISWAP analysis reported in \cite{Onorati2023fittingquantumnoise}, which took two weeks to run, can be performed in under two minutes using our implementation. The speedups mainly come from parallelising the search over the branches of complex logarithm, skipping over clearly unphysical branches entirely, and while in a correct branch, the Alternating Projections method requiring fewer calls to the convex optimiser to find a good fitting Lindbladian. We use the SCS package \cite{scs} to solve the convex optimisation problem of projecting a matrix to a Lindbladian. Our current implementation spends by far the most CPU time in the convex optimisation step in the inner most loop.

\section{Synthetic Testing of Gate Set Flip-Flop}\label{sec:flipflop}
In this section, we report synthetic test results for the Gate Set Flip-Flop algorithm. We first describe how we generate our test cases. The gate set consists of the six $2$-qubit gates $\{\text{CNOT},\sqrt{X}\otimes I,I\otimes\sqrt{X},T\otimes I,I\otimes T,\text{ISWAP}\}$. A SPAM error is chosen randomly from amplitude damping, bitflip, incoherent $Y$, and dephasing for each of the $16$ state preparation settings and $16$ measurement settings. For state preparation errors, the noise strength coefficient $\gamma$ for each jump operator in \Cref{eq:Lindbladian_form}  is drawn from the normal distribution with mean $0.07$ and standard deviation $0.007$. For measurement errors, the $\gamma$ value for each jump operator is sampled randomly from the normal distribution with mean $0.12$ and standard deviation $0.012$. In effect, we try to model a quantum device whose measurements are noisier than state preparations. For each gate in the gate set, a gate noise with normally distributed noise strengths is randomly chosen from overrotation with bitflip, overrotation with dephasing, coherent $Z$ with amplitude damping and dephasing, coherent $Z$ with bitflip, and coherent $X$ with dephasing. For every gate in the gate set, a simulated quantum process tomography experiment is carried out with $10^4$ shots while the Gram matrix is estimated using $10^5$ shots. We randomly generated 1000 test cases using the settings described above. 

For each test case, the Gate Set Flip-Flop algorithm is run for three iterations on the objective function \Cref{eq:h_def}. We start by minimising the Lindbladians first since the state preparations errors are weaker than the gate noises. The Lindbladian fitting problem is solved using the Alternating Projections method for the CNOT and ISWAP gates and the Convex Solve method for the other gates. Each Alternating Projections run tries 400 perturbed starts. The constrained minimisation w.r.t. $B$ is performed using the SLSQP algorithm \cite{SLSQP} implemented in NLopt \cite{NLopt}. The physical constraints on the eigenvalues and trace of each column of $B$ and the eigenvalues of each row of $A$ are enforced with a slack of $0.001$. 

Analogous to the criterion Success~1 defined in \Cref{sec:criteria}, we consider a run of the Gate Set Flip-Flop algorithm to be successful if the algorithm's output $(B,L_1,\ldots,L_k)$ satisfies $f_{\max}(B,L_1,\ldots,L_k)\leq f_{\max}(B^*,L_1^*,\ldots,L_k^*)$ -- see \Cref{eq:f_max_def} -- where $B^*$ encodes the ground truth state preparation settings and $L_1^*,\ldots,L_k^*$ are the ground truth Lindbladians for the noisy gates. In \Cref{fig:flip_flop_bulk}, we plot the distributions of $f_{\max}(B,L_1,\ldots,L_k)$ versus $f_{\max}(B^*,L_1^*,\ldots,L_k^*)$ values for the 1000 randomly generated test cases after the zeroth, first, second, and third iterations. Note that $f_{\max}(B,L_1,\ldots,L_k)$ after the zeroth iteration is just $f_{\max}(B^{\text{ideal}},L_1^\text{ideal},\ldots,L_k^\text{ideal})$. 
Each flip-flop iteration consists of one round of gauge optimisation and one round of Lindbladian fitting. We observe that our Flip-Flop algorithm succeeds consistently after three iterations. It failed on just a handful of test cases because the slack for the physical constraints was too tight for these inputs, causing the SLSQP optimiser to get stuck at $B^{\text{ideal}}$. Raising the slack to a larger value such as $0.005$ resolves the issue.

\begin{figure}[t]
\centering
    \includegraphics[width=\textwidth]{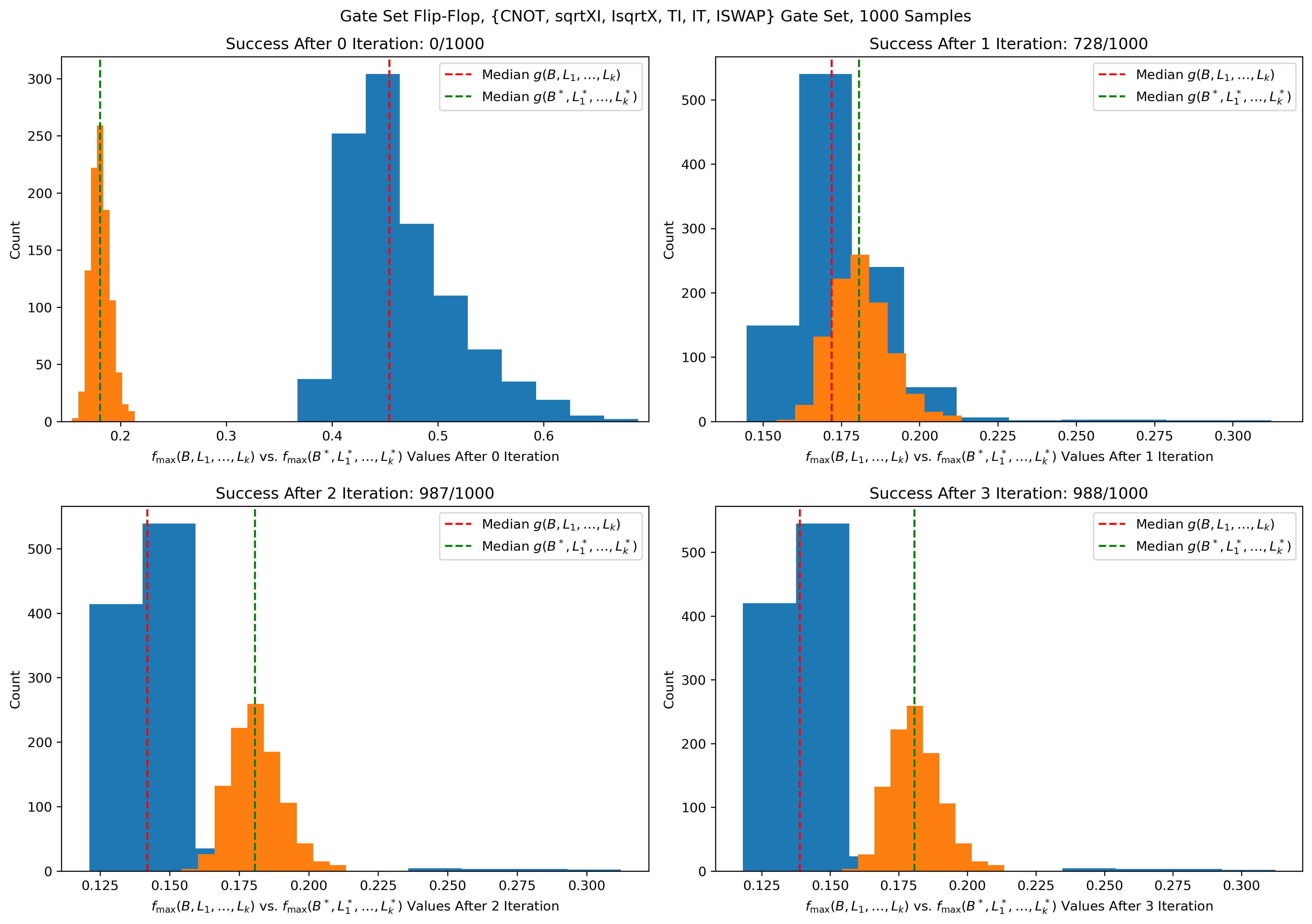}
    \caption{Synthetic test results for the Gate Set Flip-Flop algorithm on 1000 randomly generated test cases. The blue histograms plot the $f_{\max}(B,L_1,\ldots,L_k)$ values while the orange histograms plot the $f_{\max}(B^*,L_1^*,\ldots,L_k^*)$ values. A run of the algorithm is deemed successful if $f_{\max}(B,L_1,\ldots,L_k)\leq f_{\max}(B^*,L_1^*,\ldots,L_k^*)$.}
    \label{fig:flip_flop_bulk}
\end{figure}

\FloatBarrier

\newpage

\section{Analysis of Experimental Data} \label{sec:exp}
In this section we present example results produced by our algorithms when run on data obtained from real quantum computing hardware. The tomography circuits were executed on devices available via the cloud-based IBM Quantum platform~\cite{IBMQ} using the \textit{Qiskit} software development kit~\cite{Qiskit}. Before describing the demonstration in detail, we note that in general there can be several reasons why noisy gates might not be exactly generated by a time-independent Lindbladian. 
For one, the process may be more accurately modelled by a time-dependent Lindbladian, since some gates, particularly those involving two-qubit interactions, may be implemented via a composite pulse sequence \cite{QiskitPulse_Alexander_2020}. Even for simple pulses, the amplitude must be ramped on and off in some finite time, so the true Lindbladian of the process must be time-dependent. Also, the drift in the relative strength of noise components can yield to apparent non-Markovianity in tomographic data, since this can lead to a distortion in the eigenvalue structure of the process. Finally, there may of course be genuine non-Markovian noise effects, for example, if there is an unwanted coupling to a defect acting as a two-level system \cite{TLS_Burnett_2014,TLS_de_Graaf_2020}. Nevertheless, a noiseless unitary operation is always compatible with a time-independent Lindbladian, so in the case where noise is relatively weak, an effective time-independent Lindbladian may be useful as a model for quantitatively and qualitatively understanding the noise profile for a given target process. Moreover, if our algorithm does not return a Lindbladian that exponentiates close to the tomographic data, it provides evidence of plausible non-Markovian noise effects present in the device.

\subsection{Gate Set Analysis on \textit{ibm\_perth}}
We first show results obtained from the 7-qubit device known as \textit{ibm\_perth}. Gate set tomography was carried out for the set 
\begin{align}
\mathcal{G} = &\{\1\otimes \1, \text{CNOT}, R_{ZX}(0.5), R_{ZX}(0.5)^3, R_{ZX}(0.5)^5, \\
                      & \quad R_{ZZ}(0.5), R_{ZZ}(0.5)^3, R_{ZZ}(0.5)^5, T\otimes\1, \1\otimes T, \sqrt{X}\otimes\1, \1\otimes \sqrt{X}\},
\end{align}
implemented on the connected qubits labelled $3$ and $5$ using IBM's indexing. As suggested above, there are different factors explaining why we might expect the ground truth to depart from a time-independent Markovian process, and we exemplify this with this gate set. On this device, the CNOT gate was implemented as an echoed cross-resonance gate~\cite{Rigetti2010FullyMU,Chow_2011,QiskitPulse_Alexander_2020}, composed with shorter single-qubit pulses -- in effect an $R_{ZX}(\pi/2)$ gate with an internal dynamical decoupling sequence to reduce coherent error and uncontrolled couplings with other subsystems, where 
\begin{equation}
R_{ZX}(\theta) = \exp[-i \frac{\theta}{2} Z \otimes X  ]. 
\end{equation}
The CNOT was provided as a native gate on \textit{ibm\_perth}, meaning that the device has been calibrated to maximise the fidelity of this gate. However,  \textit{Qiskit} enables implementation of custom pulse sequences, and an $R_{ZX}$ gate can be implemented for arbitrary angles by extracting the pulse sequence calibrated for the $R_{ZX}(\theta)$ rotation underlying the CNOT gate and modifying the pulse duration (or amplitude) accordingly. In the gate set $\mathcal{G}$, we contrast this with $ R_{ZZ}(\theta) =  \exp[-i (\theta/2) Z \otimes Z ]$ gates which we decompose as a $Z$-rotation sandwiched by two CNOT gates. The single-qubit $\sqrt{X}$ and $T$ gates are implemented by a simple pulse and a change of frame respectively, and are mainly included in the set to assist with gauge optimisation. In each case, the noiseless process is unitary and can therefore be generated by a time-independent Lindbladian with no dissipative part. In reality, the tomographed channel could depart from this model to varying degrees depending on the gate. We may consider this a stress test of our algorithms, as it can be the case that there is no time-independent Lindbladian model that exponentiates close to the data. For our tomographically complete set of initial states, we prepare all two-qubit tensor products of four of the single-qubit Pauli eigenstates $\{\ket{0},\ket{1},\ket{+},\ket{+_i}\}$, defined in the ideal case as
\begin{equation}
\ket{0} = \begin{bmatrix}
1 \\ 
0
\end{bmatrix},
\quad 
\ket{1} = \begin{bmatrix}
0 \\ 
1
\end{bmatrix},
\quad
\ket{+} = \frac{1}{\sqrt{2}}\begin{bmatrix}
1 \\ 
1
\end{bmatrix},
\quad
\ket{+_i} = \frac{1}{\sqrt{2}}\begin{bmatrix}
1 \\ 
i
\end{bmatrix}.
\end{equation} 

In practice, on the IBM devices studied, the standard fiducial state on all qubits is $\ket{0}$, prepared via a non-unitary process. The other states in our set can then be prepared by a sequence of single-qubit gates. For our measurement settings, we measure all two-qubit Pauli observables and take our measurement operators to be the projector onto the $+1$ eigenspace of each Pauli operator. The standard measurement on IBM devices is the $Z$ measurement, so to measure other Pauli operators we use single-qubit gates to rotate into the appropriate frame. To collect the tomographic data, we parallelise Pauli measurements, so we run 144 independent circuits per gate in the set, at $10^4$ shots per circuit. All circuits are shuffled together in a random order, to average out any drift and avoid correlations with real time within the data. After initial processing, the tomographic snapshot for the $R_{ZZ}(0.5)^5$ process was found to have an eigenvalue structure incompatible with a time-independent Markovian process so was excluded from subsequent analysis. No further error mitigation is applied to the data, and the raw tomographic matrices are taken as input for the Gate Set Flip-Flop algorithm. 
In~\Cref{app:trivial} we analytically prove that in the absence of statistical error, when gate noise is sufficiently weak, and the ideal Lindbladian has no eigenvalues close to the real negative axis, a good solution can always be found by Convex Solve in the principal branch alone. This principle was also borne out in our numerical testing of synthetic data with simulated shot noise in~\Cref{sec:Lindblad_synthetic} (see~\Cref{fig:II_trivial}). To illustrate that this principle also works in practice, during the Lindbladian fitting step, we use this simple method for all gates except CNOT and $R_{ZX}(0.5)^5 = R_{ZX}(2.5)$. The latter two gates respectively have $-1$ eigenvalues and eigenvalues close to the real negative axis, so for these we use the Alternating Projections method. 

In~\Cref{fig:gatesetfittingperth}, we show figures of merit for the full gate set (excluding $R_{ZZ}(0.5)^5$) over iterations of the flip-flop algorithm. In this case, Gate Set Flip-Flop was initialised by first running the Lindbladian fit before any gauge optimisation. The solution is seen to be well converged by 8 iterations. In~\Cref{tab:perthgatesetdistances} we compare the distances between the data $E$, as evaluated in the final gauge, and the ideal transfer matrices $\Eideal$ and the Markovian estimate $\exp(L)$ respectively. In all cases the estimated Lindbladian gives at least as good a fit as assuming the ideal, and we find the fit to be significantly improved for processes where we expect the most gate noise, namely those involving repeated application of two-qubit gates. In contrast, the single-qubit gates only show a slightly improved fit, in keeping with the belief that single-qubit gates have good fidelity on modern superconducting-qubit devices. The remaining distance $\norm{E-\exp(L)}$ between the data and the estimated process is likely at least in part due to statistical noise and residual SPAM errors that were not filtered by the algorithm.

Since the ideal process $\Eideal$ for each gate is unitary, we can use the following formula to evaluate the average gate fidelity for the estimated noisy Markovian process at each iteration, adapted from~\cite{horodecki1999general,nielsen2002simple} for transfer matrices
\begin{equation}
F_{\mathrm{avg}} (\Eideal,e^L) = \frac{\frac{1}{d}\Tr[(\Eideal)^\dagger e^L] + 1}{d + 1}\label{eq:expAvgFid}
\end{equation}
where $d=4$ is the dimension of the system.

In~\Cref{fig:RZX1,fig:RZX3,fig:RZX5} we show the estimated Lindbladian decomposition for the $R_{ZX}(0.5)^k$ process for $k\in\{1,3,5\}$, comparing initial and final estimates. Note that for an ideal gate we would have $R_{ZX}(0.5)^k = R_{ZX}(k/2)$, and in the channel picture $\exp(L)^k = \exp(k L)$ for any time-independent Lindbladian $L$. By normalising the Lindbladians with respect to the number of applications of the gate, we can test the consistency of the estimated Markovian process for each sequence. In~\Cref{tab:RZXdistances} we show the distances between the time-normalised Lindbladians for each pair of processes, as well as the distances between the exponentiated transfer matrices. We also note that the Hamiltonians in~\Cref{fig:RZX1,fig:RZX3,fig:RZX5} consistently show an overrotation about the $ZX$ axis.

\newcommand\Tstrut{\rule{0pt}{3.6ex}}       
\newcommand\Bstrut{\rule[-2.5ex]{0pt}{0pt}} 
\newcommand{\TBstrut}{\Tstrut\Bstrut} 
\begin{table}[htbp]
\centering
	\begin{tabular}{c|c|c}
         Process     & $\norm*{E - \Eideal}$ & $\norm{E-\exp(L)}$ \\
\hline
CNOT & $0.42$ & $0.33$ \\
$R_{ZX}(0.5)$ & $0.31$ & $0.25$ \\
$R_{ZX}(0.5)^3$ & $0.51$ & $0.30$ \\
$R_{ZX}(0.5)^5$ & $0.66$ & $0.36$ \\
$R_{ZZ}(0.5)$ & $0.44$ & $0.28$ \\
$R_{ZZ}(0.5)^3$ & $0.79$ & $0.34$ \\
$T\otimes \1$ & $0.25$ & $0.22$ \\
$\1\otimes T$ & $0.29$ & $0.26$ \\
$\sqrt{X}\otimes \1$ & $0.38$ & $0.31$ \\
$\1 \otimes \sqrt{X} $ & $0.34$ & $0.29$ \\
        \end{tabular}\caption{Distances between the tomographic data and corresponding ideal transfer matrices $\Eideal$ and estimated Markovian channels $\exp(L)$, respectively, from gate set analysis on \textit{ibm\_perth} qubits 3 and 5. For each process $P^*$, the tomographically estimated transfer matrix estimate $E = B (g^*)^{-1} P^* B^{-1}$ is evaluated in the final gauge $B$ obtained from the flip-flop algorithm. We find that in all cases, the Markovian channel estimate gives a better fit to the data than assuming the gate is ideal.}
\label{tab:perthgatesetdistances}
\end{table}

\begin{table}[htbp]
\centering
	\begin{tabular}{c|c|c|c}
		$j$ & $k$ & $\norm{\frac{L_j}{j} - \frac{L_k}{k}}$ & $\norm{\exp(\frac{L_j}{j}) - \exp(\frac{L_k}{k})}$ \TBstrut \\ \hline
  		$1$ & $3$ & $0.097$ & $0.094$  \\
		$1$ & $5$ & $0.12$ & $0.12$ \\
		$3$ & $5$ & $0.081$ & $0.079$
	\end{tabular}

\caption{Distances between time-normalised Lindbladians and the exponentiated transfer matrices, where $L_k$ is the estimated Lindbladian for the noisy implementation of the pulse gate sequence $R_{ZX}(0.5)^k$ on \textit{ibm\_perth}. }\label{tab:RZXdistances}
\end{table}

\begin{figure}[htbp]
\centering
\begin{subfigure}{0.92\linewidth}
    \includegraphics[width=0.6\textwidth,trim={0.2cm 0 0 0},clip]{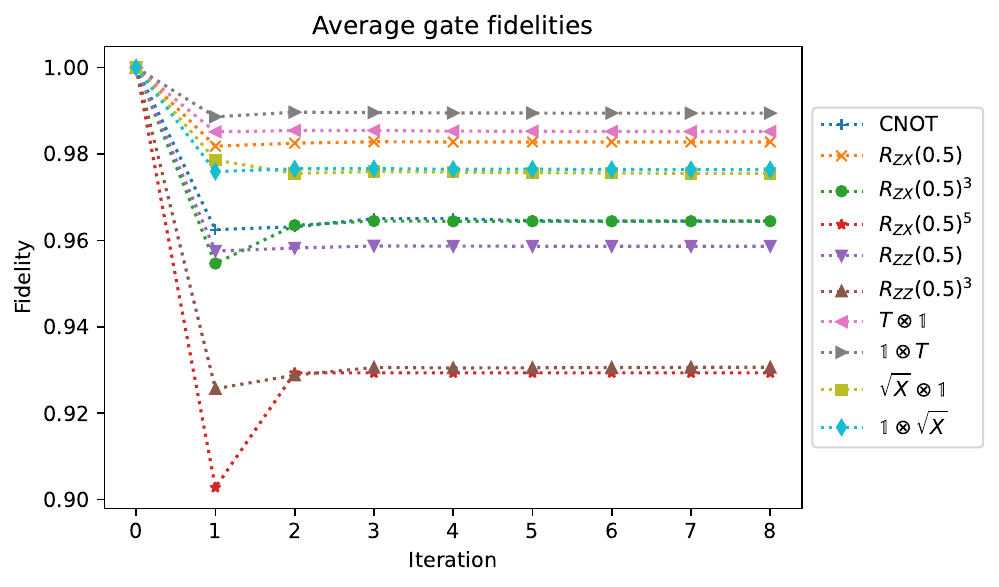}
\end{subfigure}
\begin{subfigure}{0.48\linewidth}
    \includegraphics[width=\linewidth,trim={0 0 5cm 0},clip]{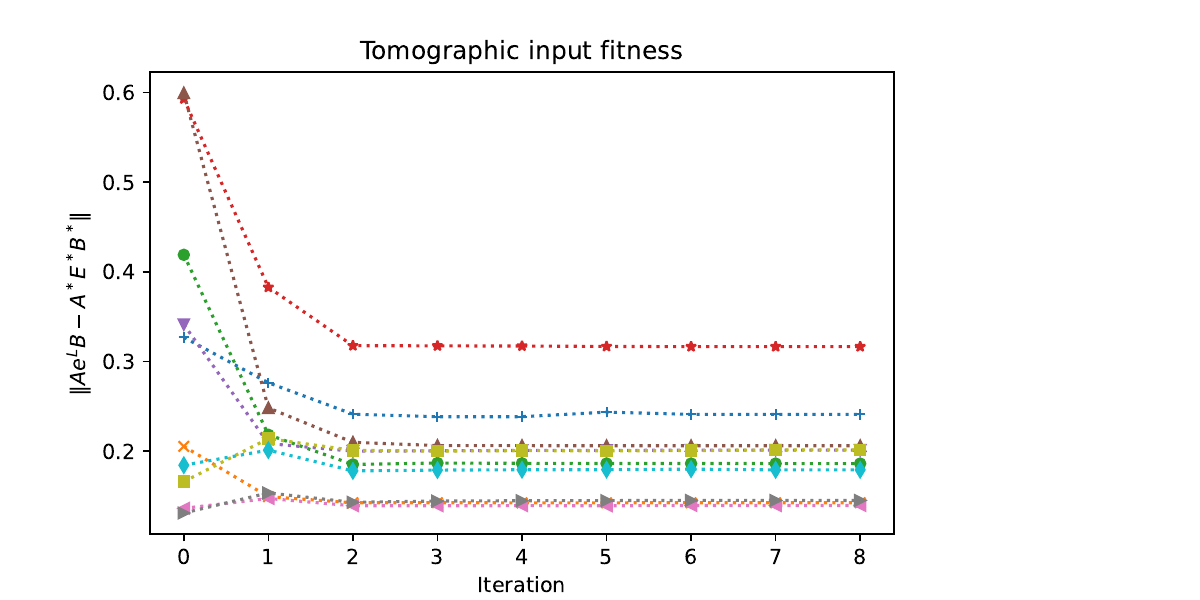}
\end{subfigure}
\begin{subfigure}{0.48\linewidth}
    \includegraphics[width=\linewidth]{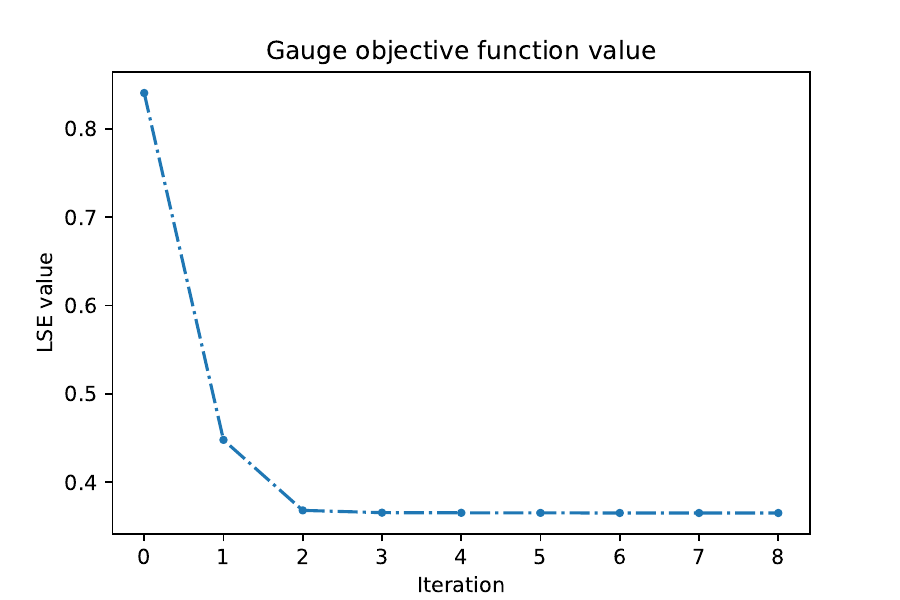}
\end{subfigure}
\caption{Gate Set Flip-Flop figures of merit for data collected from \textit{ibm\_perth}. (Top panel) The average gate fidelity between the ideal gate and the estimated Markovian process at each iteration, computed using~\Cref{eq:expAvgFid}. (Bottom left panel) Distance between raw tomographic data and the equivalent data for the estimated Markovian process assuming the current gauge. (Bottom right panel) Gauge objective function as defined in ~\Cref{eq:h_def}.}\label{fig:gatesetfittingperth}
\end{figure}

\begin{figure}
\centering
\includegraphics[width=\linewidth,trim={0 0 0 0},clip]{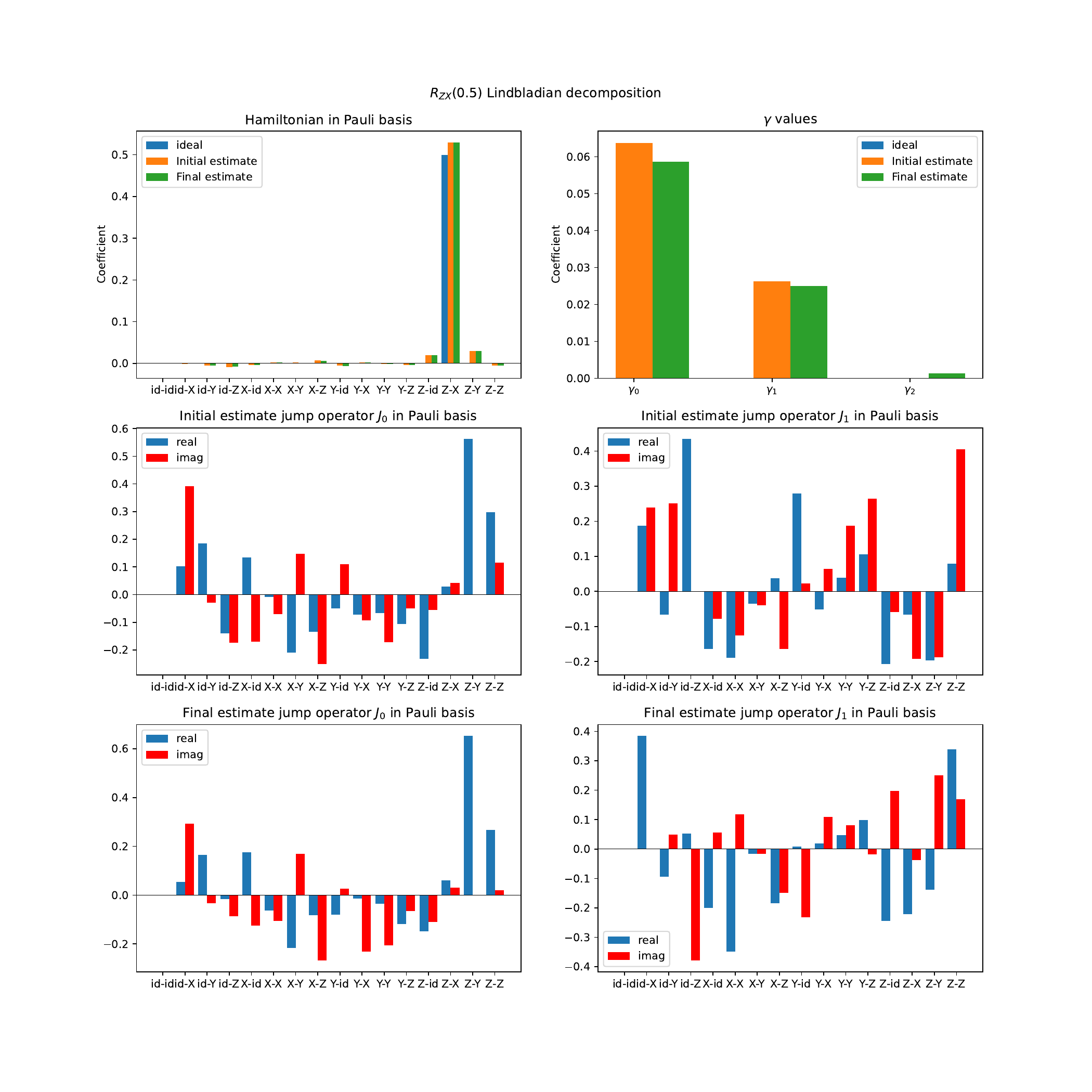}
\caption{Canonical decomposition (see~\Cref{subsec:Lindblad}) of the fitted Lindbladian for the $R_{ZX}(0.5)$ process applied to qubit pair $(3,5)$ on \textit{ibm\_perth}. We compare the ideal unitary process with the initial estimate prior to any gauge optimisation, and the final estimate after convergence of the Gate Set Flip-Flop algorithm. Pauli decompositions are given only for the two dominant jump operators.}\label{fig:RZX1}
\end{figure}
\begin{figure}
\centering
\includegraphics[width=\linewidth,trim={0 0 0 0},clip]{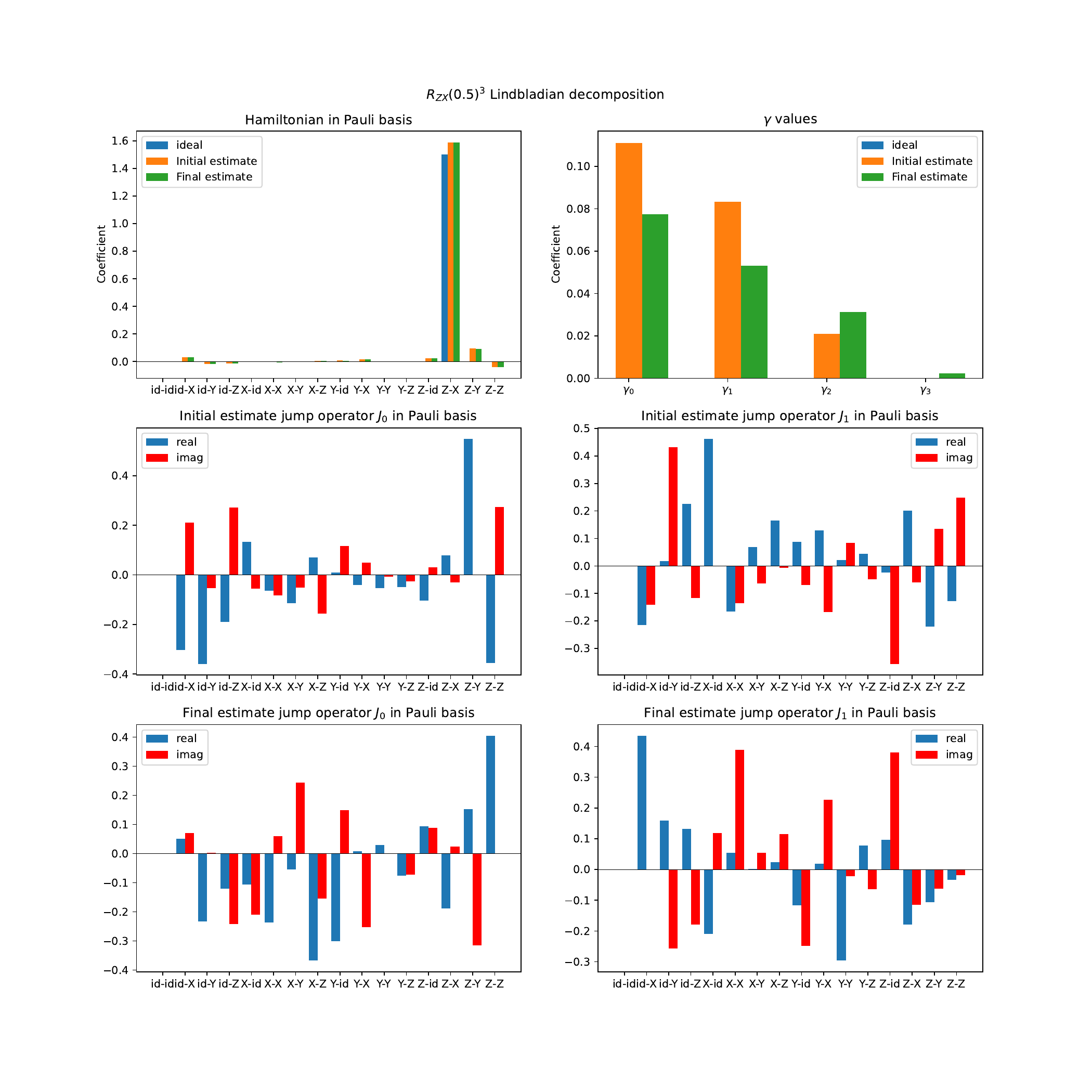}
\caption{Canonical decomposition (see~\Cref{subsec:Lindblad}) of the fitted Lindbladian for the $R_{ZX}(0.5)^3 = R_{ZX}(1.5)$ process applied to qubit pair $(3,5)$ on \textit{ibm\_perth}. The ideal process is compared with the initial estimate before gauge optimisation, and the final fitted Lindbladian after convergence of Gate Set Flip-Flop. Pauli decompositions are given only for the two dominant jump operators.}\label{fig:RZX3}
\end{figure}
\begin{figure}
\centering
\includegraphics[width=\linewidth,trim={0 0 0 0},clip]{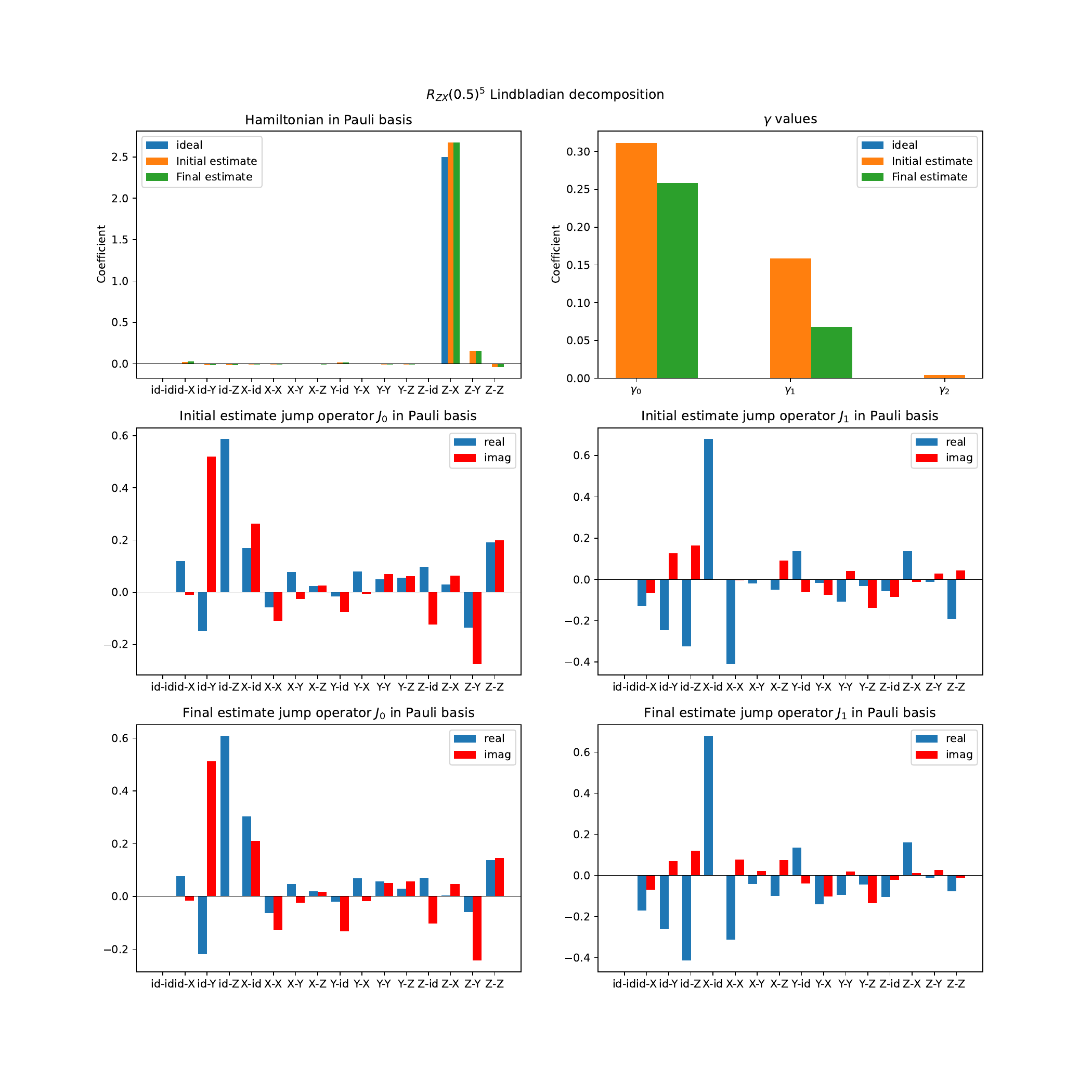}
\caption{Canonical decomposition (see~\Cref{subsec:Lindblad}) of the fitted Lindbladian for the $R_{ZX}(0.5)^5 = R_{ZX}(2.5)$ process applied to qubit pair $(3,5)$ on \textit{ibm\_perth}. We compare the ideal unitary process with fitted estimates before and after running the Gate Set Flip-Flop routine until converged. Pauli decompositions are given only for the two dominant jump operators.}\label{fig:RZX5}
\end{figure}
\FloatBarrier

\newpage

\subsection{Parallel Gate Cross-talk Experiment on \textit{ibm\_cairo}}
\newcommand{\CXsingle}{\text{CNOT}_{\mathrm{single}}}
\newcommand{\CXpar}{\text{CNOT}_{\mathrm{parallel}}}
\begin{figure}[hbtp]
\centering
\includegraphics[width=\linewidth,trim={2cm 5cm 4cm 4cm},clip]{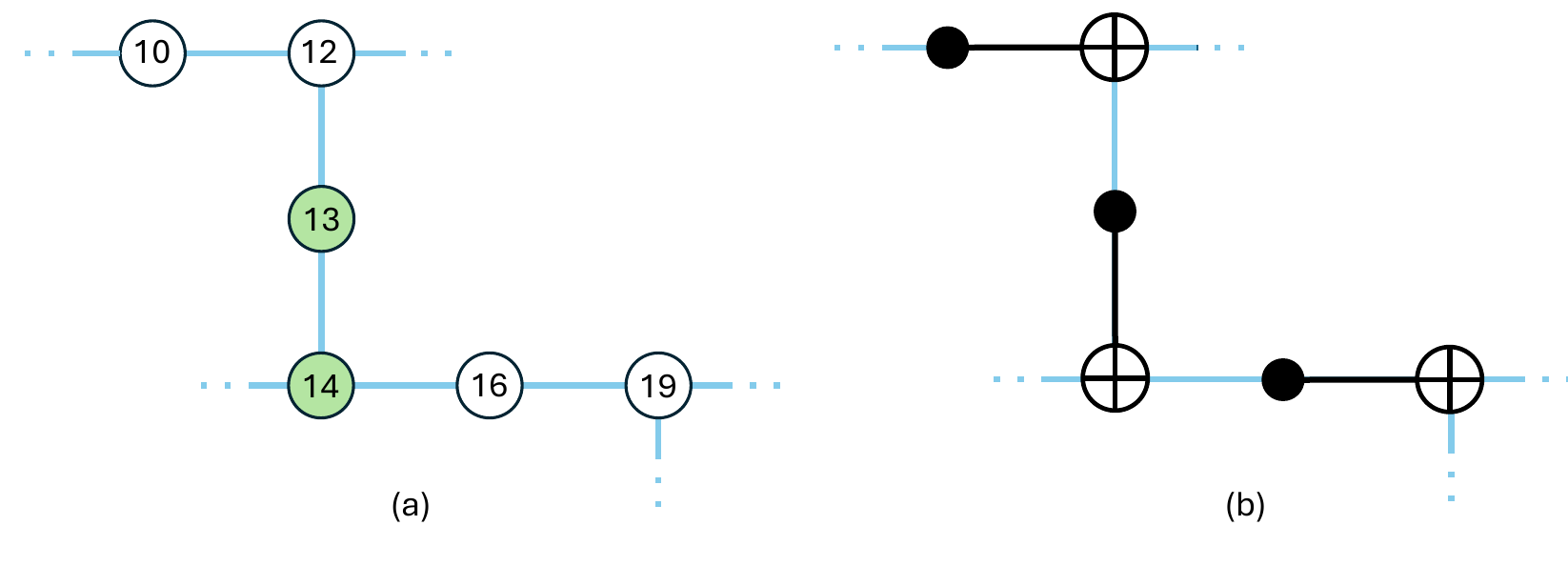}
\caption{(a) Partial qubit layout on \textit{ibm\_cairo}. Numbered vertices show qubit locations, edges show connections where CNOT gates are available.  Dashed lines show connections to qubits not used in the experiment. The qubit pair targeted for gate set tomography is shown shaded green. (b) Placement of CNOT gates during tomography of the $\CXpar$ process.}\label{fig:cairolayout}
\end{figure}
For this second set of data, we focus on the analysis of a particular gate, and in particular consider its susceptibility to cross-talk. We perform gate set tomography of the set
\begin{equation}
\mathcal{G} = \{ \CXsingle, \CXpar, T \otimes \1, \1 \otimes T, \sqrt{X} \otimes \1, \1 \otimes \sqrt{X} \}
\end{equation}
where $\CXsingle$ and $\CXpar$ are CNOT gates executed either in isolation or in parallel with CNOT gates executed on other qubit pairs. The data was collected from IBM's 27-qubit device \textit{ibm\_cairo}. Topology of the six-qubit patch used for the tomography is shown in~\Cref{fig:cairolayout}, along with the placement of two-qubit gates for the $\CXpar$ process. As in the previous demonstration, the single-qubit gates are provided to aid gauge optimisation, shots were taken at a rate of $10^4$ per setting, circuits were executed in random order, and readout mitigation was not applied to the data before running the algorithm. For this gate set, Convex Solve on the principal branch was again used to fit Lindbladians for all processes except the two CNOT processes, for which Alternating Projections was used. We illustrate the convergence of the solution in~\Cref{fig:gatesetfittingcairo}. Again, we observe good convergence within around 10 iterations. Comparing with~\Cref{fig:gatesetfittingperth} we see that we are able to obtain a closer fit to a set of time-independent Lindbladians, and the estimated Markovian processes generally have better fidelity. Final distances from the data are shown in~\Cref{tab:cairodistances}, where again we see only minimal improvement in fit for single-qubit gates, which are expected to already be close to ideal, with more significant improvement for the noisy two-qubit processes. We note that the difference between the tomographic input fitness for the converged solutions for single and parallel CNOTs is relatively small, suggesting that in this scenario cross-talk does not induce a dramatic increase in apparent non-Markovianity. On the other hand, we see that the algorithm suggests a reduced average gate fidelity for the parallel CNOT, and we observe qualitative differences in the estimated Lindbladian decompositions (\Cref{fig:singleparalleldecomps}). In particular, we see increased overrotation about the $Z \otimes \1$ axis for the parallel case, and an increase in the strength of dissipative noise.

For completeness, in~\Cref{fig:singleinitfinalCairo,fig:parallelinitfinalCairo} we compare the initial and final estimates for each implementation of the CNOT. We notice that in both cases, the Hamiltonian part of the Lindbladian remains relatively stable, but the final choice of gauge output by the flip-flop algorithm yields an apparent reduction in the strength of the dissipative part.

\begin{table}[htbp]
\centering
	\begin{tabular}{c|c|c}
         Process     & $\norm*{E - \Eideal}$ & $\norm{E-\exp(L)}$ \\
\hline
$\CXsingle$ & $0.23$ & $0.18$ \\
$T\otimes \1$ & $0.19$ & $0.18$ \\
$\1\otimes T$ & $0.20$ & $0.17$ \\
$\CXpar$ & $0.37$ & $0.17$ \\
$\sqrt{X}\otimes \1$ & $0.20$ & $0.17$ \\
$\1 \otimes \sqrt{X} $ & $0.19$ & $0.17$ \\
        \end{tabular}
\caption{Distances between the tomographic channel estimates and corresponding ideal transfer matrices $\Eideal$ and estimated Markovian channels $\exp(L)$, respectively, from gate cross-talk analysis on \textit{ibm\_cairo}, qubits. For each process $P^*$, the tomographically estimated transfer matrix estimate $E = B (g^*)^{-1} P^* B^{-1}$ is evaluated in the final gauge $B$ obtained from the flip-flop algorithm.}
\label{tab:cairodistances}
\end{table}

\begin{figure}[htbp]
\centering
\begin{subfigure}{0.92\linewidth}
    \includegraphics[width=0.62\textwidth,trim={0.2cm 0 0 0},clip]{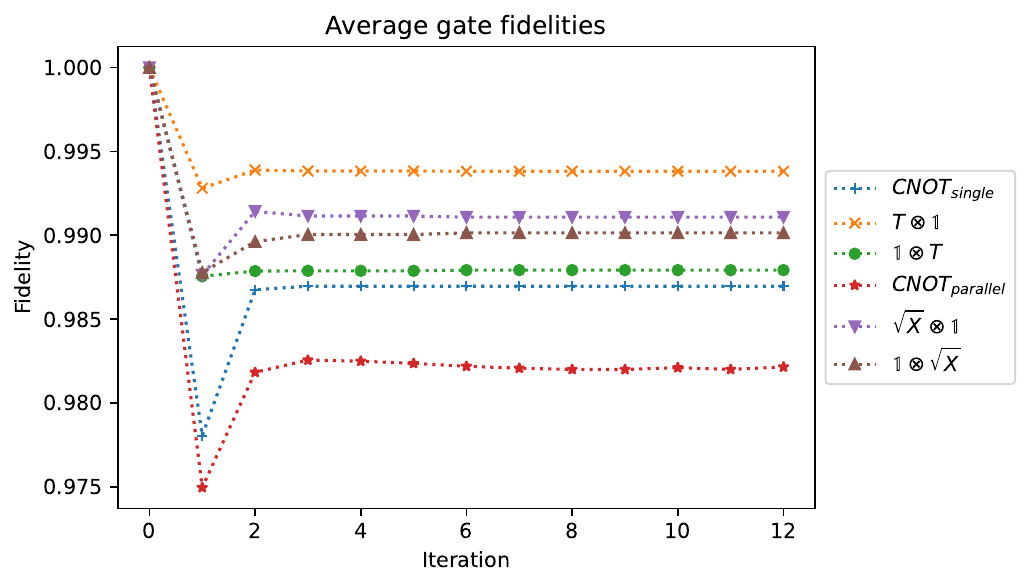}
\end{subfigure}
\begin{subfigure}{0.48\linewidth}
    \includegraphics[width=\linewidth,trim={0 0 5cm 0},clip]{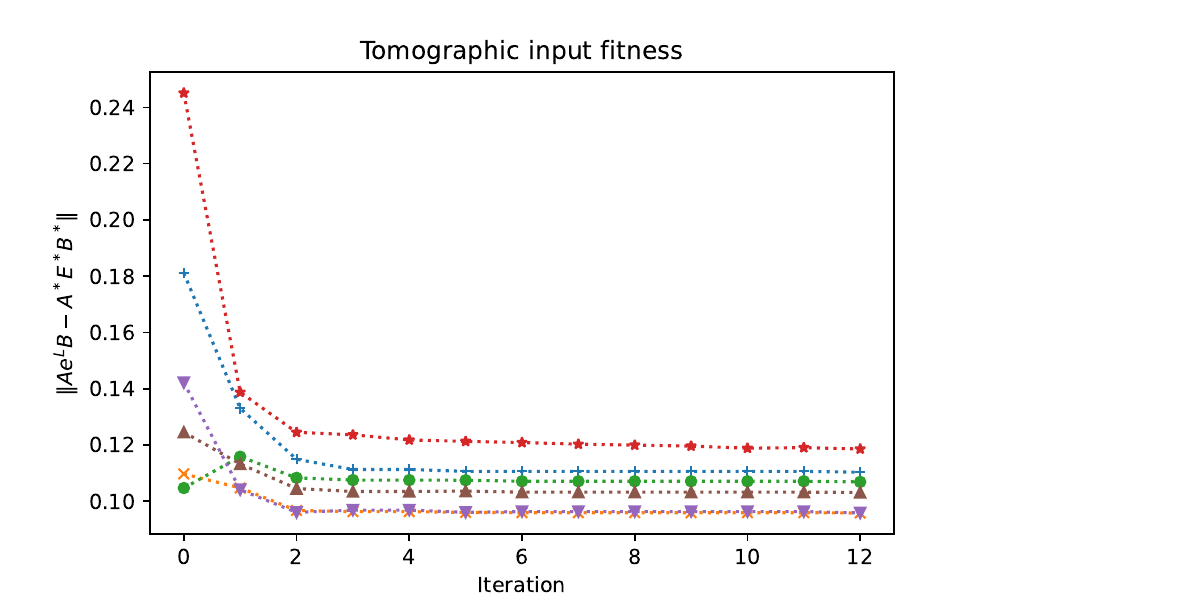}
\end{subfigure}
\begin{subfigure}{0.48\linewidth}
    \includegraphics[width=\linewidth]{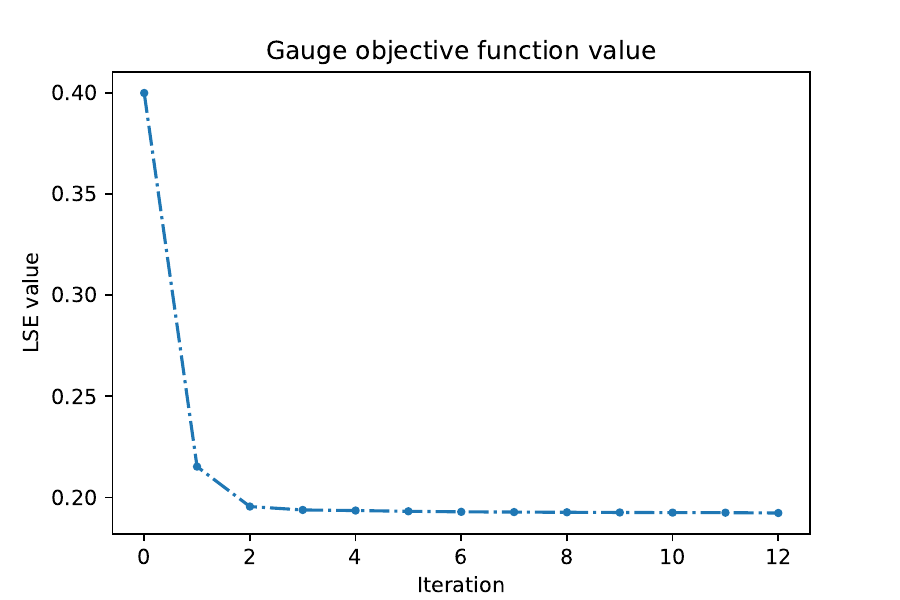}
\end{subfigure}
\caption{Gate Set Flip-Flop figures of merit for data collected from \textit{ibm\_cairo}, tomographing qubits 13 and 14. (Top panel) The average gate fidelity between the ideal gate and the estimated Markovian process at each iteration, computed using~\Cref{eq:expAvgFid}. (Bottom left panel) Distance between raw tomographic data and the equivalent data for the estimated Markovian process assuming the current gauge. (Bottom right panel) Gauge objective function as defined in ~\Cref{eq:h_def}.}\label{fig:gatesetfittingcairo}
\end{figure}
\begin{figure}[htbp]
\centering
\includegraphics[width=\linewidth,trim={0 0 0 0},clip]{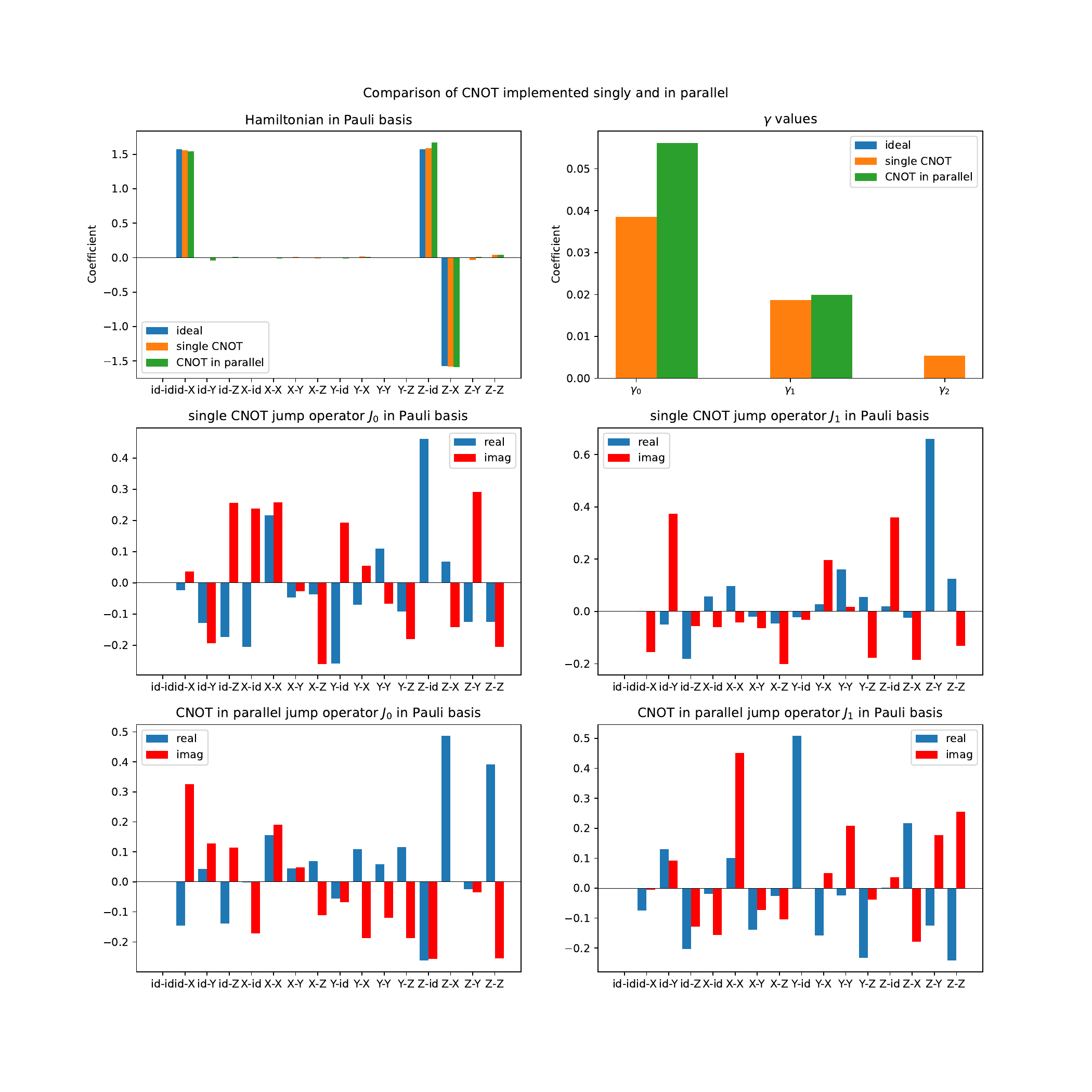}
\caption{Here we compare the canonical decompositions (see~\Cref{subsec:Lindblad}) of the the ideal unitary CNOT gate with the fitted Lindbladians after Gate Set Flip-Flop for the noisy CNOT gate implemented on qubit pair $(13,14)$ on \textit{ibm\_cairo}. Note that the estimated decomposition for the noisy CNOT differs depending on whether the CNOT was implemented in isolation while the rest of the device was left idle (orange) or in parallel with CNOT gates implemented elsewhere.}\label{fig:singleparalleldecomps}
\end{figure}
\begin{figure}[htbp]
\centering
\includegraphics[width=\linewidth,trim={0 0 0 0},clip]{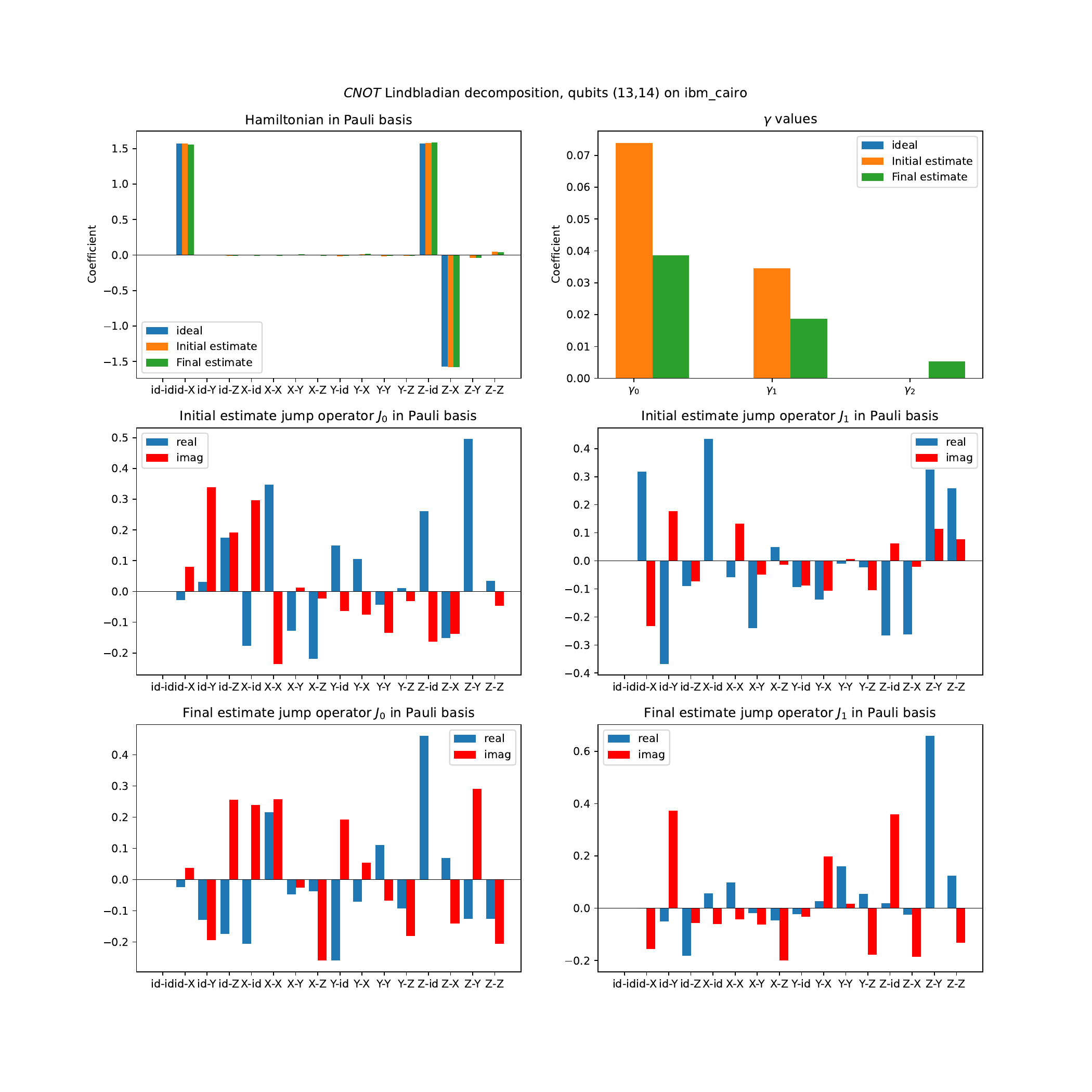}
\caption{The estimated canonical decomposition (\Cref{subsec:Lindblad}) of the estimated Lindbladian generators for the CNOT implemented on on qubit pair $(13,14)$ on \textit{ibm\_cairo}. Here we compare the ideal unitary process with experimental estimates before and after Gate Set Flip-Flop, for the case where the gate was implemented on $(13,14)$ alone while the rest of the device was left idle.}\label{fig:singleinitfinalCairo}
\end{figure}
\begin{figure}[htbp]
\centering
\includegraphics[width=\linewidth,trim={0 0 0 0},clip]{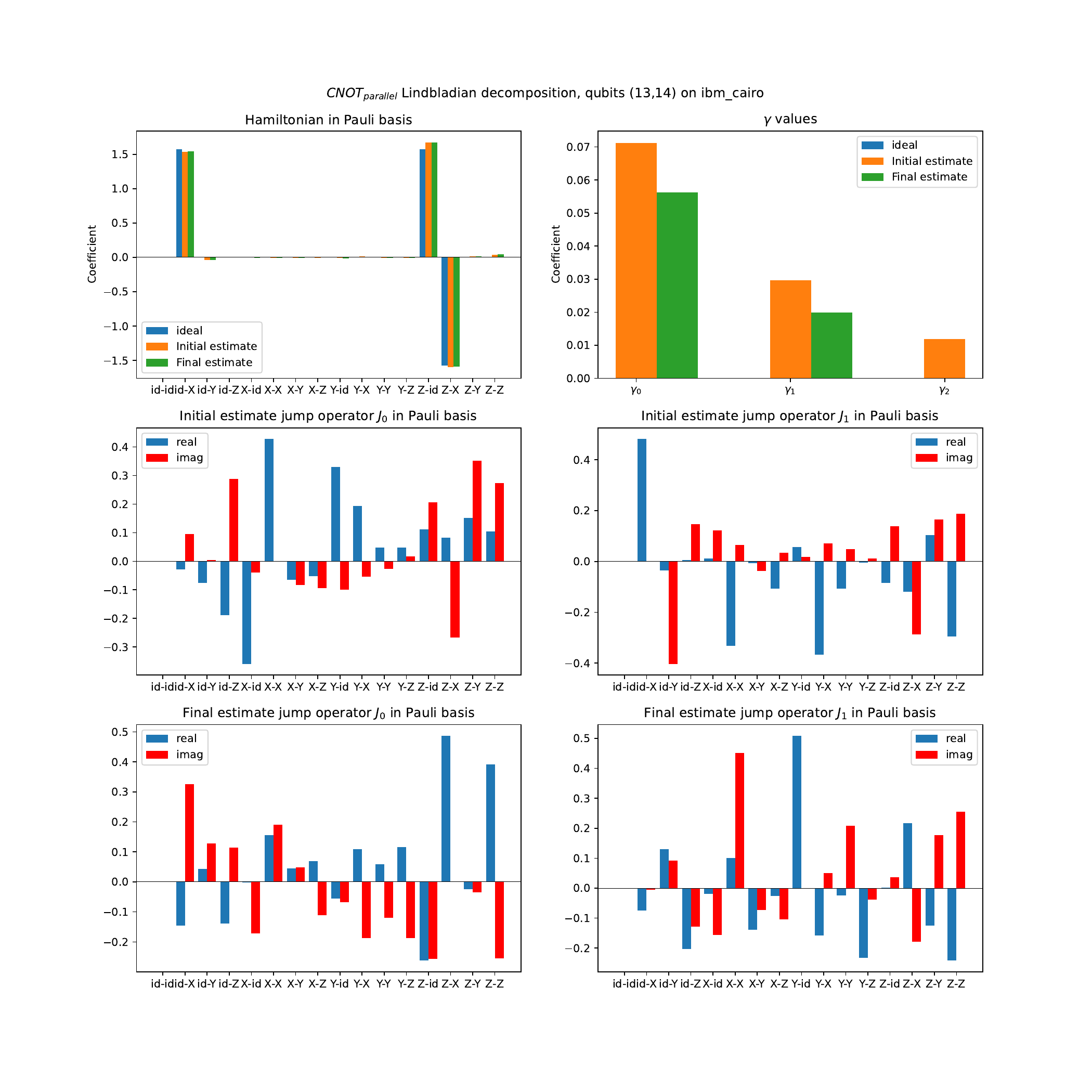}
\caption{Estimated canonical decomposition (\Cref{subsec:Lindblad}) of the estimated Lindbladian generators for the CNOT implemented on on qubit pair $(13,14)$ on \textit{ibm\_cairo}. Here we compare the ideal unitary process with experimental estimates before and after Gate Set Flip-Flop, for the case where the gate was implemented in parallel with CNOT gates being applied simultaneously to pairs $(10,12)$ and $(16,19)$.}\label{fig:parallelinitfinalCairo}
\end{figure}
\FloatBarrier

\newpage
\subsection{Idling Cross-talk Experiment on \textit{ibm\_perth}}
Here we investigate the noisy idling process on pairs of qubits. We take as the gate set,
\newcommand{\delay}[1]{\mathcal{D}_{#1}}
\begin{equation}
\mathcal{G} = \{\text{CNOT},  T \otimes \1, \1 \otimes T, \sqrt{X} \otimes \1, \1 \otimes \sqrt{X},
                         \delay{}, \delay{}^5,\delay{X},\delay{X}^5,\delay{\text{CNOT}},\delay{\text{CNOT}}^5 \}
\end{equation}
where $\delay{}$ is the two-qubit channel obtained by sending a delay instruction to the quantum device. We used a delay length of $380\,\mathrm{ns}$ to approximately match the time taken to implement a CNOT gate on the device. $\delay{X}$ are $\delay{\text{CNOT}}$ are obtained by sending the same length of delay to the target qubit pair, while either $X$ or CNOT gates, respectively, are applied to nearby qubits. In this section, we give a demonstration of our algorithm for two different pairs on the \textit{ibm\_perth} device, namely the pairs labelled $(0,1)$  and $(0,5)$. In~\Cref{fig:idlingPerthLayout} we show the device layout and highlight which qubits are targeted for tomography, and where the CNOT is applied in the case of the $\delay{\text{CNOT}}$ process. For the $\delay{X}$ processes, $X$ gates are applied to all qubits not targeted for tomography. For example, when characterising the $\delay{X}$ idling process on qubit pair $(0,1)$, the $X$ gate is applied to qubits $2$, $3$, $4$, $5$ and $6$. For the $(0,1)$ pair, the CNOT used to aid gauge optimisation is available on the device as an elementary gate. However for the gate set targeting $(0,5)$ , a direct connection is not available. We instead implement the CNOT for this pair using the circuit decomposition shown in~\Cref{fig:cnotdecomp}. Once again, we use the Alternating Projections method for fitting the CNOT, and Convex Solve on the principal branch for all others. We show the convergence for pairs $(0,1)$ and $(0,5)$ in~\Cref{fig:idling01merit,fig:idling05merit}. We remark that the approach of using Convex Solve on a single branch is still successful in achieving convergence to a good set of solutions, despite the fact that several of the idling processes turn out to be very far from the identity channel as evidenced in~\Cref{tab:perthidlingdistances}.  The distances of the $1900\,\mathrm{ns}$ idling processes from the ideal were observed to be much larger than any seen for the two-qubit gates analysed, whereas distances to the estimated Markovian channel were comparable to other cases. The Lindbladian decompositions are plotted in~\Cref{fig:01idlingLindblads,fig:05idlingLindblads}. We find that the noise differs significantly in character depending on the qubit choice and the gates applied in parallel. 

Remarkably, for both pairs, the noise is reduced dramatically when $X$ gates are applied to the other qubits, compared to the case where all qubits are left to idle. One possible explanation is that repeatedly applying $X$ pulses to neighbouring qubits might act as a rudimentary dynamical decoupling sequence \cite{dynamicaldecoupling2014}. For the $(0,1)$ pair, repeatedly applying a CNOT on qubits $3$ and $5$ appears to induce the build up of a coherent $Z$ error on qubit $1$, while both the $\delay{}$ and $\delay{\text{CNOT}}$ processes exhibit a $ZZ$ coupling between the target qubits. In contrast, the more separated $(0,5)$ pair shows no significant $ZZ$ coupling. Meanwhile application of CNOTs on the pair $(1,3)$, situated between the target qubits $(0,5)$, appears to reduce the effective coherent $Z$ error on qubit $5$ while increasing it on qubit $0$.
\begin{figure}[hbtp]
\centering
\includegraphics[width=\linewidth,trim={3cm 6cm 3cm 4cm},clip]{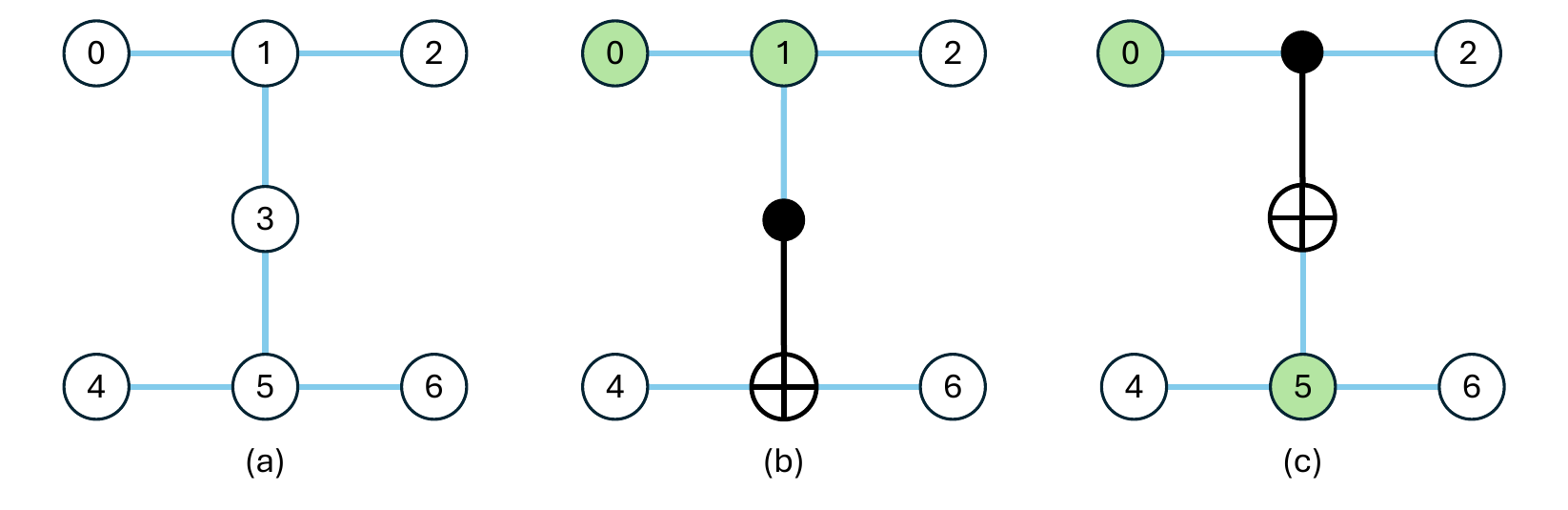}
\caption{(a) (a) Qubit layout on \textit{ibm\_perth}. Numbered vertices show qubit locations, edges show connections where CNOT gates are available.  (b) Shaded green qubits show the position of the pair targeted for process tomography in the $(0,1)$ run. For the $\delay{\text{CNOT}}$ process, the CNOT is applied on pair $(3,5)$ as shown. (c) During tomography of the $\delay{\text{CNOT}}$ process on qubits $(0,5)$, the CNOT is applied between qubits $1$ and $3$.}\label{fig:idlingPerthLayout}
\end{figure}
\begin{figure}[htbp]
\centering
\includegraphics[width=0.9\linewidth,trim={7cm 7cm 7cm 7cm},clip]{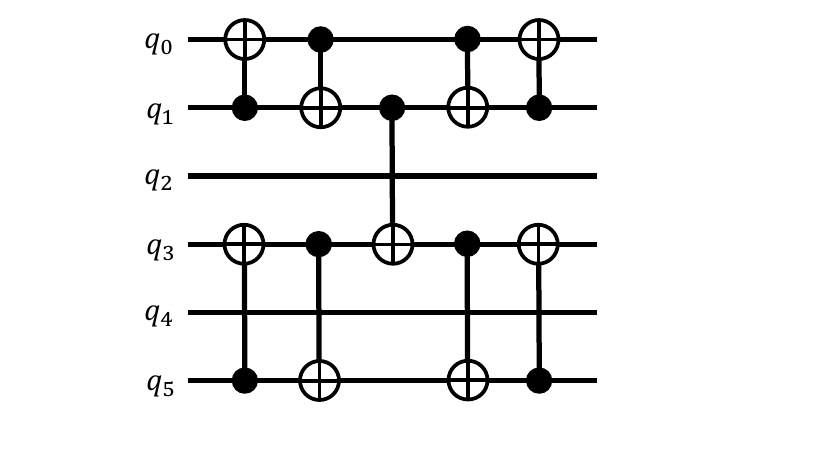}
\caption{Circuit decomposition for CNOT between qubits $0$ and $5$ on \textit{ibm\_perth}.}\label{fig:cnotdecomp}
\end{figure}
\begin{table}[htbp]
\centering
\subfloat[Qubit pair (0,1)]{
	\begin{tabular}{c|c|c}
         Process     & $\norm*{E - \Eideal}$ & $\norm{E-\exp(L)}$ \\
\hline
$\text{CNOT}$ & $0.28$ & $0.23$ \\
$T\otimes \1$ & $0.25$ & $0.22$ \\
$\1\otimes T$ & $0.25$ & $0.23$ \\
$\mathcal{D}$ & $0.57$ & $0.22$ \\
$\mathcal{D}^5$ & $2.08$ & $0.18$ \\
$\mathcal{D}_{\text{CNOT}}$ & $1.08$ & $0.21$ \\
$\mathcal{D}_{\text{CNOT}}^5$ & $4.07$ & $0.22$ \\
$\mathcal{D}_{X}$ & $0.33$ & $0.24$ \\
$\mathcal{D}_{X}^5$ & $0.75$ & $0.21$ \\
$\sqrt{X}\otimes \1$ & $0.27$ & $0.22$ 
\end{tabular}
}
\subfloat[Qubit pair (0,5)]{
	\begin{tabular}{c|c|c}
         Process     & $\norm*{E - \Eideal}$ & $\norm{E-\exp(L)}$ \\
\hline
$\text{CNOT}$ & $0.76$ & $0.24$ \\
$T\otimes \1$ & $0.30$ & $0.23$ \\
$\1\otimes T$ & $0.28$ & $0.25$ \\
$\mathcal{D}$ & $0.77$ & $0.21$ \\
$\mathcal{D}^5$ & $3.23$ & $0.23$ \\
$\mathcal{D}_{\text{CNOT}}$ & $0.66$ & $0.26$ \\
$\mathcal{D}_{\text{CNOT}}^5$ & $2.63$ & $0.18$ \\
$\mathcal{D}_{X}$ & $0.35$ & $0.24$ \\
$\mathcal{D}_{X}^5$ & $0.55$ & $0.20$ \\
$\sqrt{X}\otimes \1$ & $0.31$ & $0.24$ 
        \end{tabular}
}
\caption{Distances between the tomographic channel estimates and corresponding ideal transfer matrices $\Eideal$ and estimated Markovian channels $\exp(L)$, respectively, from idling analysis on \textit{ibm\_perth}, on qubit pairs (a) (0,1), and (b) (0,5). For each process $P^*$, the tomographically estimated transfer matrix estimate $E = B (g^*)^{-1} P^* B^{-1}$ is evaluated in the final gauge $B$ obtained from the flip-flop algorithm.}
\label{tab:perthidlingdistances}
\end{table}

\begin{figure}[htbp]
\centering
\begin{subfigure}{0.92\linewidth}
    \includegraphics[width=0.59\textwidth,trim={-0.2cm 0 0 0},clip]{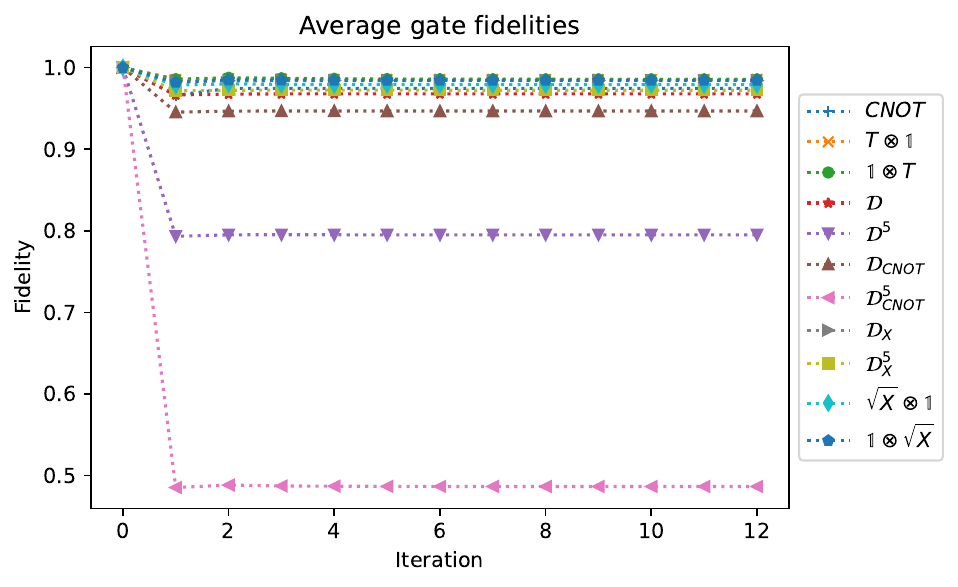}
\end{subfigure}
\begin{subfigure}{0.48\linewidth}
    \includegraphics[width=\linewidth,trim={0 0 5cm 0},clip]{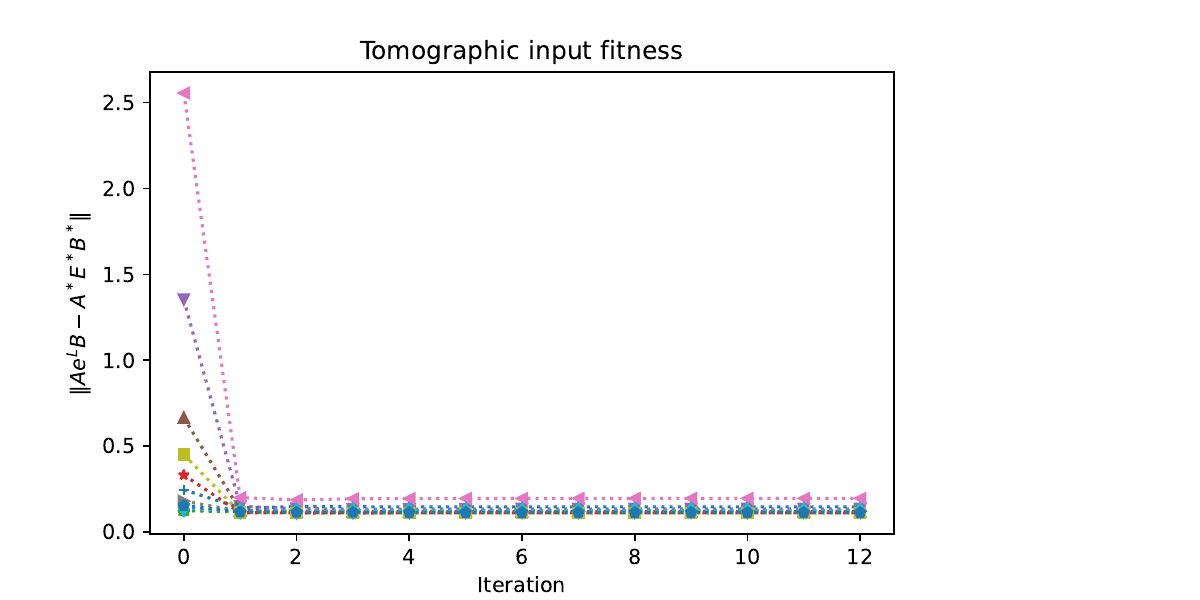}
\end{subfigure}
\begin{subfigure}{0.48\linewidth}
    \includegraphics[width=\linewidth]{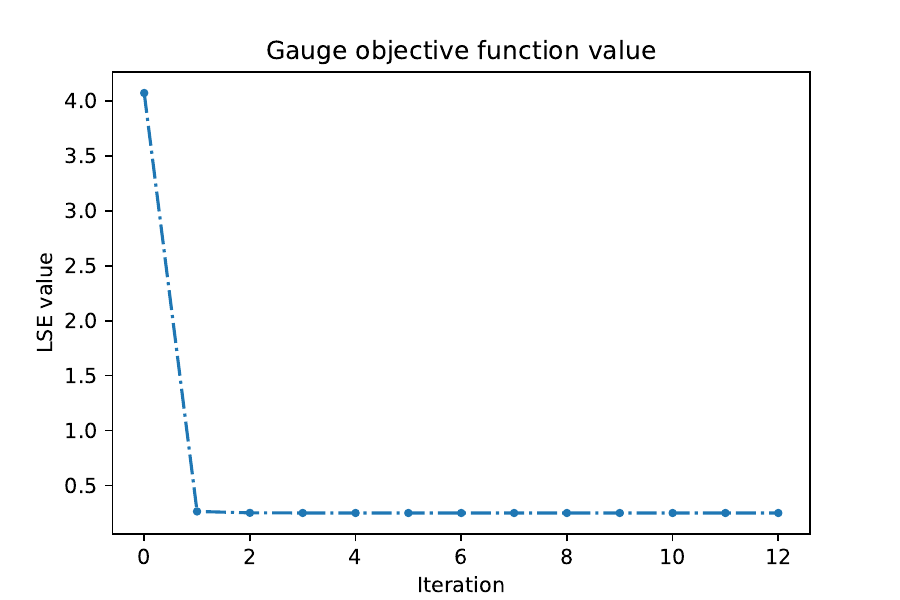}
\end{subfigure}
\caption{Gate Set Flip-Flop figures of merit for the idling experiment on qubits $(0,1)$ from \textit{ibm\_perth}. (Top panel) The average gate fidelity between the ideal gate and the estimated Markovian process at each iteration, computed using~\Cref{eq:expAvgFid}. (Bottom left panel) Distance between raw tomographic data and the equivalent data for the estimated Markovian process assuming the current gauge. (Bottom right panel) Gauge objective function as defined in ~\Cref{eq:h_def}.}\label{fig:idling01merit}
\end{figure}
\begin{figure}[htbp]
\centering
\begin{subfigure}{0.92\linewidth}
    \includegraphics[width=0.6\textwidth,trim={0 0 0 0},clip]{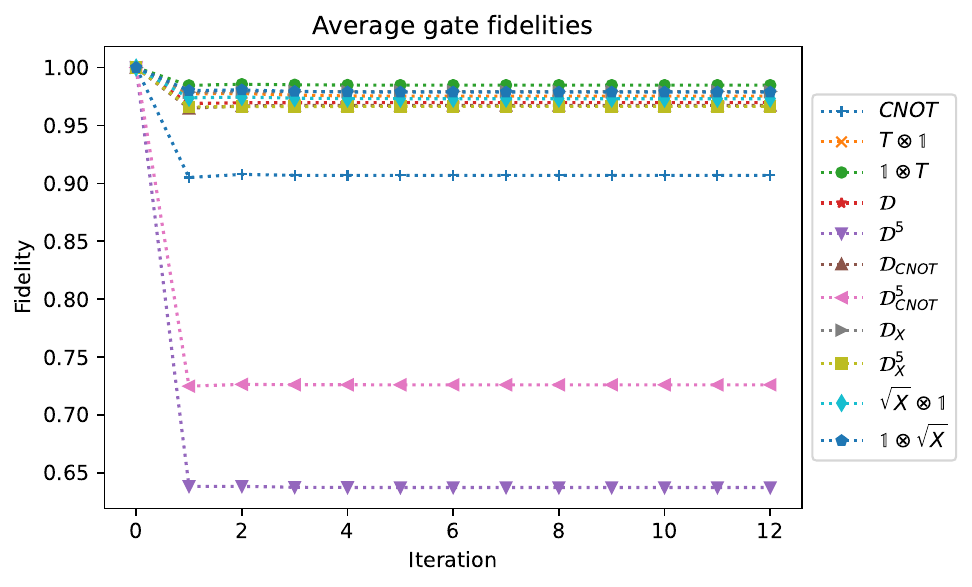}
\end{subfigure}
\begin{subfigure}{0.48\linewidth}
    \includegraphics[width=\linewidth,trim={0 0 5cm 0},clip]{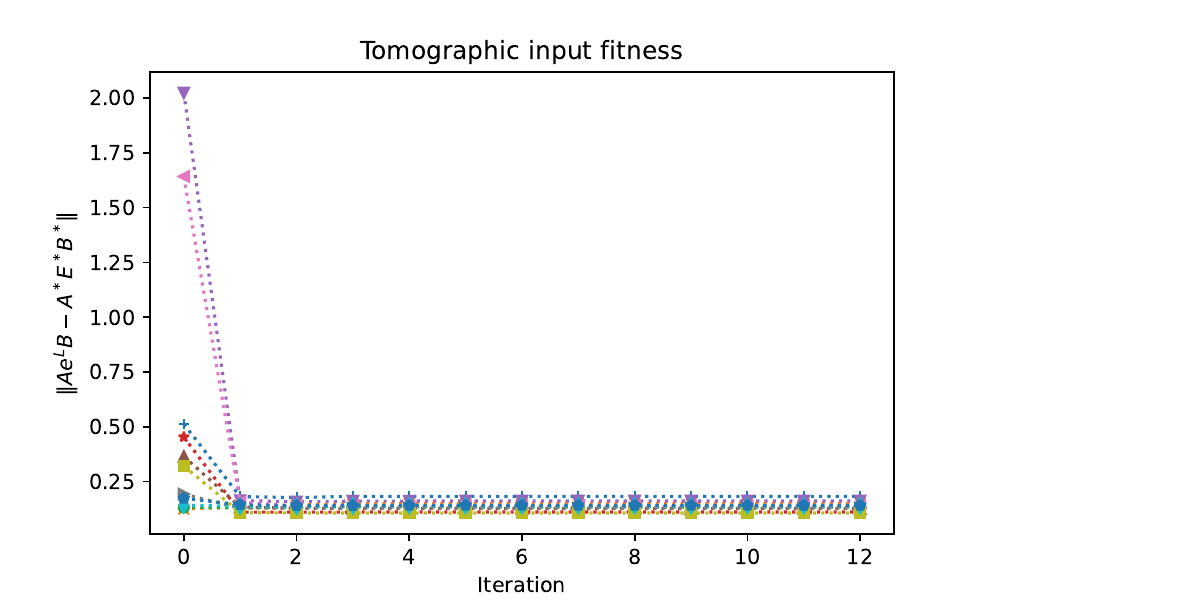}
\end{subfigure}
\begin{subfigure}{0.48\linewidth}
    \includegraphics[width=\linewidth]{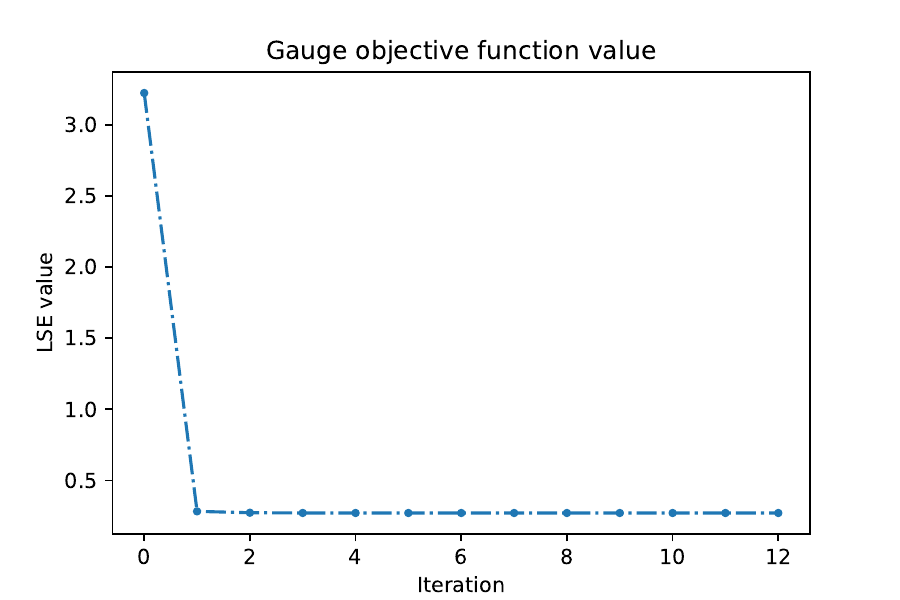}
\end{subfigure}
\caption{Gate Set Flip-Flop figures of merit for the idling experiment on qubits $(0,5)$ from \textit{ibm\_perth}. (Top panel) The average gate fidelity between the ideal gate and the estimated Markovian process at each iteration, computed using~\Cref{eq:expAvgFid}. (Bottom left panel) Distance between raw tomographic data and the equivalent data for the estimated Markovian process assuming the current gauge. (Bottom right panel) Gauge objective function as defined in ~\Cref{eq:h_def}.}\label{fig:idling05merit}
\end{figure}
\begin{figure}[htbp]
\centering
\includegraphics[width=\linewidth,trim={0 0 0 0},clip]{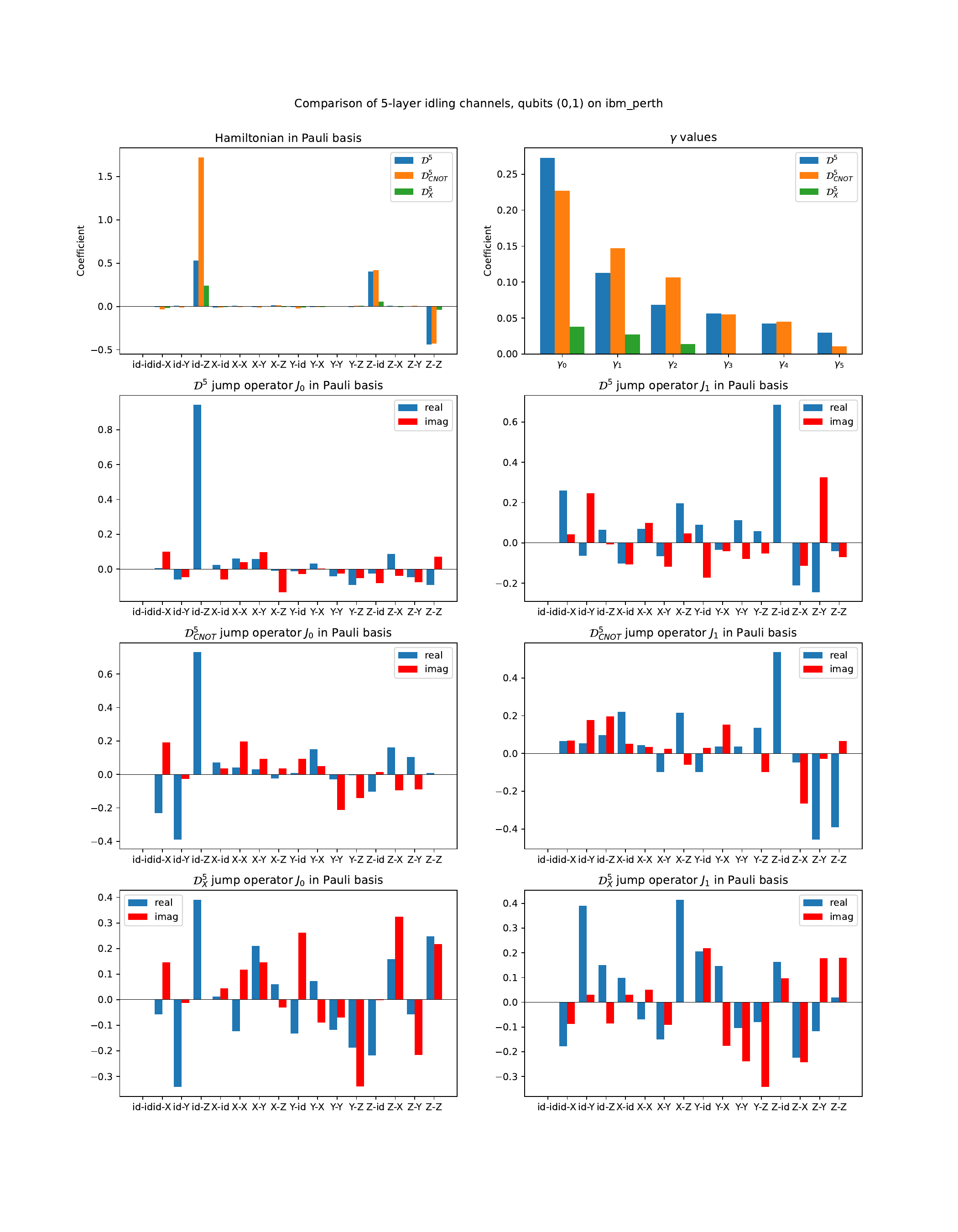}
\caption{Here we plot the canonical decompositions (see~\Cref{subsec:Lindblad}) of the estimated Lindbladian for three variants of the idling process on the adjacent qubit pair $(0,1)$ (see~\Cref{fig:idlingPerthLayout}) on \textit{ibm\_perth}. The ideal process (i.e. the generator of the identity channel) is not shown, since the Hamiltonian and jump operators would all be zero.The estimates are output from the Gate Set Flip-Flop algorithm once converged. (Blue) The whole device is left to idle for 5 repetitions of the $380\,\mathrm{ns}$ delay instruction; (Orange) Pair $(0,1)$ idles for the same length of time while a CNOT is applied in parallel to pair $(3,5)$ for 5 repetitions (see~\Cref{fig:idlingPerthLayout} (b)); (Green) Pair $(0,1)$ idles while the $X$ gate is applied 5 times to all other qubits.}\label{fig:01idlingLindblads}
\end{figure}
\begin{figure}[htbp]
\centering
\includegraphics[width=\linewidth,trim={0 0 0 0},clip]{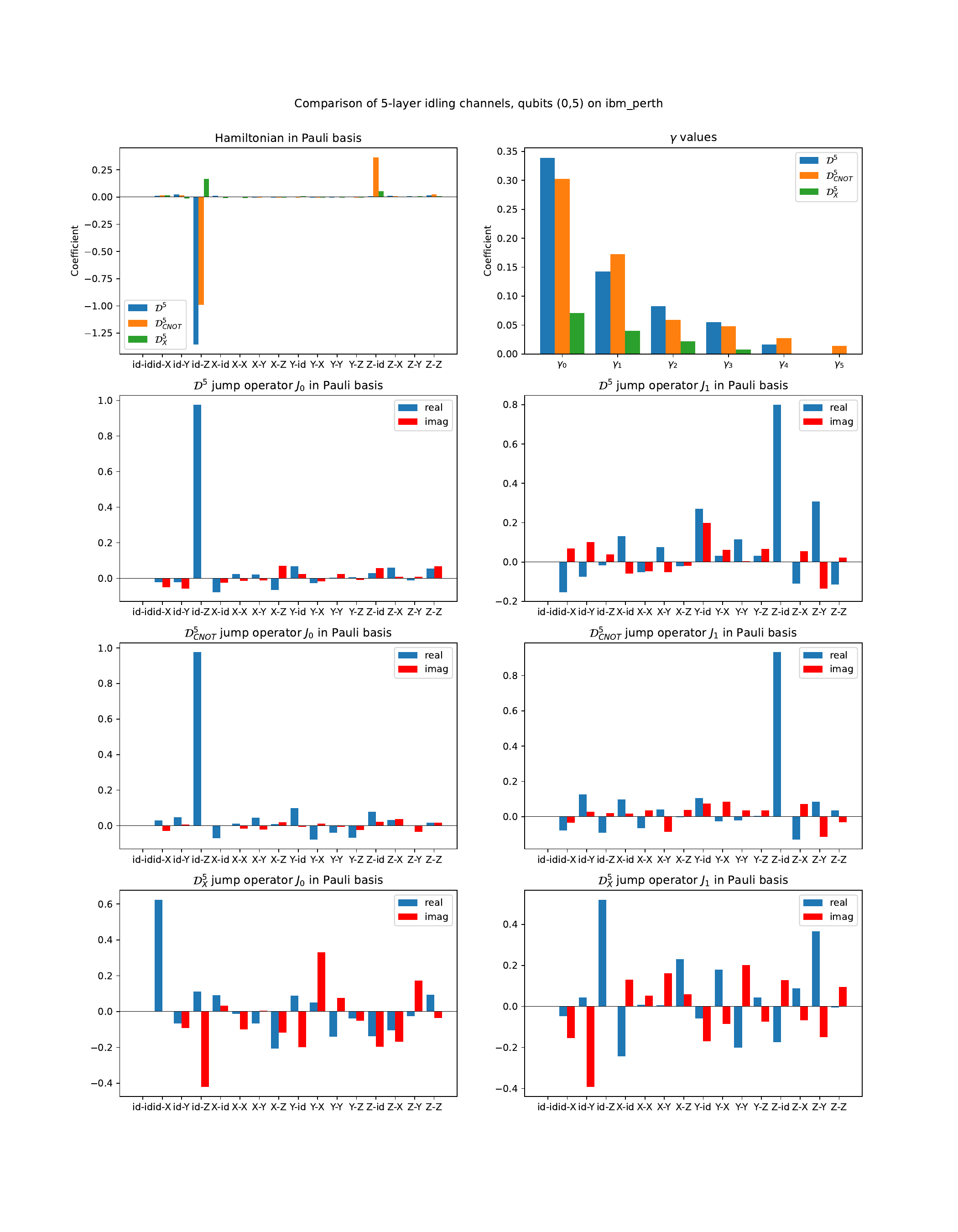}
\caption{Canonical decompositions (see~\Cref{subsec:Lindblad}) of idling processes for the separated qubit pair $(0,5)$ on \textit{ibm\_perth} (see~\Cref{fig:idlingPerthLayout}), estimated using Gate Set Flip-Flop. (Blue) The whole device is left to idle for 5 repetitions of the $380\,\mathrm{ns}$ delay instruction; (Orange) Pair $(0,5)$ idles for the same length of time while a CNOT is applied in parallel to pair $(2,3)$ for 5 repetitions (see~\Cref{fig:idlingPerthLayout} (c)); (Green) Pair $(0,5)$ idles while the $X$ gate is applied 5 times to all other qubits.}\label{fig:05idlingLindblads}
\end{figure}

\FloatBarrier

\section{Conclusions} \label{sec:conc}
In this work, we have presented two algorithms for the problem of fitting Lindbladian models to tomographic data: the Convex Solve method and the Alternating Projections method, which constitutes an improved variant of the logarithm search algorithms proposed in \cite{Onorati2023fittingquantumnoise}. By making use of an initial best guess for the model describing the target process, we avoid the global random basis search that was proposed in \cite{Onorati2023fittingquantumnoise} for the case where the ground truth channel has a degenerate spectrum, massively reducing the run-time for the logarithm search method in these difficult cases. 
On the other hand, we have argued that in many cases where gate noise is weak, we can sidestep the problems associated with degenerate spectrum altogether. In particular, we argue that problematic eigenspace splitting occurs only when some of the eigenvalues of the transfer matrix are real and negative. Otherwise, under modest assumptions that are often reasonable for quantum logic gates on current hardware, it is possible to find a good solution by Convex Solve on the principal branch alone (that is, with a single call to the SDP defined in~\cite{Onorati2023fittingquantumnoise}). We proved that this holds in the absence of statistical error and provided both numerical and experimental evidence that it remains an effective strategy in more realistic settings. Nevertheless, there exist important cases where eigenspace splitting does occur, such as CNOT, ISWAP, and tensor products of certain elementary gates. In these cases, we demonstrated numerically that the Alternating Projections method was effective for synthetic data with a variety of realistic gate noise models. We have therefore demonstrated that logarithm search methods for Lindbladian fitting can be made practical for the important case of dynamics induced by two-qubit quantum logic gates on noisy hardware. 

We also introduced the Gate Set Flip-Flop algorithm. This protocol augments the Lindbladian fitting procedure with techniques from gate set tomography to filter out state preparation and measurement errors. In our implementation, we use either Alternating Projections or Convex Solve as the Lindbladian fitting subroutine, depending on the type of process, but we note that any other Lindbladian fitting method such as direct numerical search~\cite{Samach_2022,dobrynin2024} could be substituted. We demonstrated that Flip-Flop paired with Alternation Projections/Convex Solve consistently improves the Lindbladian fit after a small number of iterations in the presence of realistic SPAM errors and statistical noise. Finally, we demonstrated that the algorithm can also be effective in practice for real experimental data, by applying it to tomographic data collected from IBM superconducting-qubit hardware~\cite{IBMQ}. 

Two avenues suggest themselves for further study. For simplicity, in this work we have considered individual tomographic snapshots in isolation, rather than analysing sequences of snapshots as a time-series. In particular, for gate sets that included repetitions of the same gate, or increased pulse duration of the same type of gate, our Gate Set Flip-Flop algorithm fits separate time-independent Lindbladians for each snapshot, whereas it would make sense to instead treat the snapshots as a time series generated by the same underlying Lindbladian. Indeed, the precursor work ~\cite{Onorati2023fittingquantumnoise} provided variants of the algorithm that could be applied to time series data for both time-independent and time-dependent Lindbladians. An obvious next step, then, would be to show how our Alternating Projections and Flip-Flop algorithms could be adapted to deal with time series data with possibly time-dependent Lindbladians. We leave this extension for future work.

Another limitation of our method as presented is that it is only practical for application to subsystems of small numbers of qubits. In this paper, we have focused on two-qubit gates. It is not necessarily completely infeasible to apply the method without modification to a three-qubit gate set, but it would certainly lead to much longer run-times, and going beyond this is likely not possible due to the exponential overhead in the number of parameters needed to fully characterise an $n$-qubit process. This limitation is shared by any fully general method for Lindbladian fitting, and indeed full QPT and GST. As noted in~\Cref{sec:related_works}, scalable methods for Lindbladian estimation do exist, but these must rely on strong assumptions about the noise model in order to reduce the number of parameters to be determined. A recent example is the local Pauli-Lindblad model employed in the protocols of~\cite{IBMPEC,IBMPEA}. An interesting direction for future work would be to investigate whether the techniques developed in the present work could be extended to larger systems by restricting the class of Lindbladians.

\section*{Acknowledgements}
This work was supported by Innovate UK [grant numbers 44167, 76963].

\section*{Ethics declaration}
The authors have filed a patent application EP25163986.0 on the novel methods for Lindbladian fitting introduced in this manuscript.

\printbibliography

\newpage

\phantomsection
\addcontentsline{toc}{section}{Appendix}
\appendix

\begin{center} {\LARGE \textbf{Appendix} } \end{center} 

\section{Prior work}\label{sec:prior work}
\subsection{Logarithm Search Methods}

Logarithm search methods \cite{Markdynamics08,CEW09,Onorati2023fittingquantumnoise} attempt to certify or estimate a Lindbladian model by taking the matrix logarithm of an input transfer matrix. These techniques must account for the fact that the complex matrix logarithm is not unique. For a transfer matrix $E$ with a non-degenerate spectrum there is a countably infinite family of matrices that exponentiate to the given $E$. In the degenerate case, there is a continuous degree of freedom in the choice of eigenbasis, which leads to an uncountable infinity of matrix logarithms. 
Therefore, assessing whether a channel is compatible with a time-independent Markovian channel means checking whether any branch of the matrix logarithm contains a matrix close to a Lindbladian. It was shown in \cite{Markdynamics08, CEW09} that for channels with non-degenerate spectrum, this problem can be cast as a mixed-integer semidefinite program. In principle, this can be solved efficiently for fixed Hilbert space dimension, though in practice searching over a small number of branches is often sufficient. These works did not deal with how to find the closest Markovian model for tomographic data where the noisy input channel is non-degenerate, but arises from a `true' channel with a degenerate spectrum. 

This case was tackled in \cite{Onorati2023fittingquantumnoise}, where matrix perturbation theory was used to develop explicit algorithms for Lindbladian estimation from noisy tomographic data.\footnote{\cite{Onorati2023fittingquantumnoise} also extended the method to deal with time series of tomographic snapshots and time-dependent Lindbladians.} It was shown that missing to consider that a non-degenerate channel can result from perturbation of a degenerate channel can lead to the algorithm of \cite{Markdynamics08, CEW09} failing. 
To counter this, \cite{Onorati2023fittingquantumnoise} developed a strategy to cluster eigenvalues according to some chosen precision to assess whether the true channel could be degenerate and construct a suitable eigenbasis accordingly. 
The algorithm then carries out convex optimisations to determine the closest Lindbladian over a finite number of branches of the matrix logarithm, constructed according to the chosen basis. However, in the degenerate case each choice of basis leads to a different set of matrix logarithms. Since the correct basis is not known a priori, the algorithm relies on repeated random sampling of bases. In principle, given enough random samples, one should eventually cover the parameter space with sufficient granularity to find a good approximation to the global minimum. In practice, the parameter space already becomes extremely large for the two-qubit case; for the example of a noisy ISWAP gate studied in \cite{Onorati2023fittingquantumnoise}, the run-time of the algorithm was around two weeks on a standard desktop computer. In fact, there exist certain important two-qubit cases where it is known that a good solution exists, but the optimisation landscape is particularly unforgiving, so that the algorithm is not able to find a close Markovian channel in any reasonable time, even allowing for heavy parallelisation. Another limitation of the method of \cite{Onorati2023fittingquantumnoise} is that its only mechanism to deal with SPAM noise is via an error tolerance parameter. Failure to properly account for SPAM can lead to apparent non-Markovianity or time-dependence in the input channel, preventing a good fit to a Lindbladian model and overestimating gate infidelity. Prior to the current work, logarithm search methods were untested on data collected from real quantum computing hardware.

\subsection{Direct Numerical Estimation}

A simple example of direct numerical estimation would be taking $f(x)=\norm{\exp(L(x))-E}$, where $L(x)$ is a Lindbladian parametrised by the optimisation variables $x$, $E$ is the input data, and a solver (e.g. MATLAB \texttt{fminsearch}) is called to minimise $f$ with respect to $x$. One key observation is that the set of all $d^2\times d^2$ Lindbladian matrices can be exactly parametrised by $d^2(d^2-1)$ real variables. Furthermore, gradients of the objective function can often be obtained mechanically via automatic differentiation, and in such cases, gradient-based solvers (e.g. MATLAB \texttt{fminunc}) can also be used. Finally, a good initial guess required by the solvers is often available based on the experimenter's prior knowledge about the input $E$. An early example of this approach estimated the Markovian evolution for a 2-qubit nuclear magnetic resonance (NMR) system~\cite{boulant2003robust} by minimising a time-series generalisation of the objective $f$ using MATLAB \texttt{fminsearch}, which implements the classic Nelder-Mead simplex search algorithm \cite{NelderMead1965}. More recently, state-of-the-art demonstrations of Lindbladian tomography have used an objective function based on maximum likelihood estimation (MLE)~\cite{Samach_2022,dobrynin2024}. This has been applied to a 2-qubit idling process on a superconducting-qubit device in \cite{Samach_2022}. Quantum process tomography is carried out for a series of different time delays, to build up a time series of tomographic snapshots. SPAM errors are filtered out by preparing a fiducial state and immediately measuring in the standard basis. The resulting SPAM-filtered channel estimates are then fit into a log-likelihood function that is maximised using MATLAB \texttt{fminsearch} to extract the noisy Lindblad generator of the idling channel. An evolution of this MLE technique was proposed in~\cite{dobrynin2024}. Under the assumption of small noise the log-likelihood function is linearised to simplify the task into a convex-optimisation problem. The procedure is also combined with the compressed sensing technique to reduce the number of required measurements. It is worth emphasising that despite being observed to be effective in practice in prior works, local direct numerical optimisation approaches (just like ours) are heuristic in nature and have little formal guarantee of returning optimal solutions.

\subsection{Ad-hoc Methods}

Another approach to Lindbladian estimation is to rely on some assumption about the system being investigated, so that the Lindbladian model is restricted in some way, and the number of parameters to estimate is reduced. Such techniques have been applied in NMR~\cite{NMRestimationChilds2001}, quantum optics~\cite{Quantum_optics_D_Ariano_2002} and superconducting-qubit~\cite{Postmarkovian_Zhang_2022} settings for small systems. We note that in the latter example the authors use Lindbladian fitting as one part of a protocol used to analyse non-Markovian dynamics via post-Markovian master equations~\cite{PMME_Shabani_2005}. Characterisation of such non-Markovian processes is beyond the scope of the present work. One restricted form of Lindbladian that has recently proved useful in the context of error mitigation on quantum computing devices with limited connectivity is the so-called sparse Pauli-Lindblad model~\cite{IBMPEC,IBMPEA}. The method assumes that a single layer of two-qubit gates has an associated Markovian noise process generated solely by local Pauli operators, where the locality is with respect to the connectivity of the device. This yields a Lindbladian that can be specified by a number of parameters that grows only linearly with the system size. The parameters can then be estimated efficiently by fitting the decay of Pauli observables after repeated applications of the gate layer. The scalability of this method makes it appealing for practical implementation, and indeed it has been demonstrated successfully on intermediate-sized superconducting-qubit devices as a step in probabilistic error cancellation~\cite{IBMPEC} and zero noise extrapolation~\cite{IBMPEA} error mitigation protocols. It should be noted, however, that the assumption of a Pauli-Lindblad model is a rather strong one, resting on the assumption that the noisy gate layer can be well-modelled by the ideal gate layer composed with a stochastic Pauli channel with local generators. The stochastic Pauli channel assumption was justified in~\cite{IBMPEC,IBMPEA} by the fact that gate layers are implemented under a particular randomised compiling protocol known as Pauli twirling~\cite{twirlingBennett_1996,twirlingknill2004faulttolerant,twirlingKern_2005,twirlingGeller_2013,twirlingWallman_2016}. This, as well as the noise characterisation step, in turn relied on the fact that the two-qubit gates being analysed were Clifford gates, and so preserve the Pauli group. These considerations mean that the method of~\cite{IBMPEC,IBMPEA} is only applicable to particular quantum computing devices with single- and two-qubit gates that are already relatively well calibrated, in which case it can provide a good effective noise model that could be used in certain error mitigation techniques. It is not aimed at estimating the underlying noise process for specific few-qubit processes.

\section{Lindbladian Canonical Decomposition}\label{app:lindblad_decomp}
Given a Lindbladian $L$ in the elementary basis representation, it is possible to solve for a canonical decomposition comprised of a Hamiltonian $H$, jump operators $\{J_\alpha\}$, and real positive scalars $\{\gamma_\alpha\}$ such that the elementary basis representation of the Lindbladian form (see \Cref{eq:Lindbladian_form}) defined by $H$, $\{J_\alpha\}$, and $\{\gamma_\alpha\}$ is $L$. Let $n=\log_2(d)$ and let $P_1,\ldots,P_{d^2-1}$ denote the non-identity $n$-qubit Pauli operators. Since $H$ is traceless, $H$ is determined by its expansion in the Pauli basis
$$H=\sum_{k=1}^{d^2-1}\frac{\tr(P_kH)}{d}P_k.$$
Using the formula \Cref{eq:Lindblad_matrix}, we get
$$L=i(I\otimes H^T-H\otimes I)+\sum_{\alpha}\gamma_\alpha\left[J_\alpha\otimes \overline{J_\alpha}-\frac{1}{2}(J_\alpha^\dagger J_\alpha\otimes I+I\otimes J_\alpha^T\overline{J_\alpha})\right].$$
By the fact that $H$ and $\{J_\alpha\}$ are traceless, for every $k\in\{1,\ldots,d^2-1\}$,
$$\tr(L(P_k\otimes I))=-i\tr(P_kH)d-\frac{1}{2}\sum_{\alpha}\gamma_\alpha\tr(P_kJ_\alpha^\dagger J_\alpha)d.$$
Thus,
\begin{align*}
&\tr(L(P_k\otimes I))-\overline{\tr(L(P_k\otimes I))}=-2i\tr(P_kH)d\\
\implies&\tr(P_kH)=\frac{i(\tr(L(P_k\otimes I))-\overline{\tr(L(P_k\otimes I))})}{2d}.
\end{align*}
Similarly, for every $j,k\in\{1,\ldots,d^2-1\}$,
$$\tr(L(P_j\otimes \overline{Q_k}))=\sum_{\alpha}\gamma_\alpha\tr(P_jJ_\alpha)\tr(Q_kJ_\alpha^\dagger)$$
and we can organise these values into a $(d^2-1)\times(d^2-1)$ matrix
$$C=\begin{bmatrix}\tr(L(P_1\otimes \overline{Q_1})) & \cdots & \tr(L(P_1\otimes \overline{Q_{d^2-1}}))\\
\vdots & \ddots & \vdots\\
\tr(L(P_{d^2-1}\otimes \overline{Q_1})) & \cdots & \tr(L(P_{d^2-1}\otimes \overline{Q_{d^2-1}}))\end{bmatrix}.$$
Since the jump operators are orthonormal, 
$$C=\sum_{\alpha}\gamma_\alpha\begin{bmatrix}\tr(P_1 J_\alpha) \\ \vdots \\ \tr(P_{d^2-1} J_\alpha)\end{bmatrix}\begin{bmatrix}\tr(P_1 J_\alpha^\dagger) & \cdots & \tr(P_{d^2-1} J_\alpha^\dagger)\end{bmatrix}$$
is a spectral decomposition of $C$. Thus, the positive scalars $\{\gamma_\alpha\}$ and the jump operators $\{J_\alpha\}$ can be retrieved by computing a spectral decomposition of $C$.

\section{Analysing the Convex Solve Method for the Lindbladian Fitting Problem Without Statistical Error}\label{app:trivial}

Recall that the input to the Lindbladian fitting problem is a $d^2\times d^2$ transfer matrix $E$, and the goal is to search for an $L\in\mathcal{L}$ that minimises $\lVert e^L-E\rVert$ where $\mathcal{L}$ is the set of all Lindbladians. In \Cref{subsec:Lindblad}, we have restricted ourselves to special cases of the problem satisfying two additional assumptions. Namely, we assume there exists an unknown Lindbladian $L^*$ such that for $E^*=e^{L^*}$, $\lVert E-E^*\rVert\leq c_1$ for some small constant $c_1$, and there exists a known Lindbladian $L^{\text{ideal}}$ such that $\lVert L^*-L^{\text{ideal}}\rVert\leq c_2$ for some small constant $c_2$. In this appendix, we further assume $E=E^*$ (i.e. $c_1=0$), so the goal becomes outputting an $L\in\mathcal{L}\cap\log(E)$ which need not necessarily be $L^*$ and where $\log(E)=\{A:e^A=E\}$.

We first establish a key lemma. For every diagonalisable matrix $A$, let $\rho(A)=\max\{|\lambda|:\text{$\lambda$ is an eigenvalue of $A$}\}$ denote its spectral radius. For every invertible matrix $B$, let $\kappa(B)=\lVert B\rVert\lVert B^{-1}\rVert$ denote its condition number.

\begin{lemma}
Write $L^{\text{ideal}}=VDV^{-1}$ for some diagonal matrix $D$ and invertible matrix $V$. If $\lVert L^*-L^{\text{ideal}}\rVert<\frac{\pi-\rho(L^{\text{ideal}})}{\kappa(V)}$, then for every eigenvalue $\mu$ of $L^*$, $|\Im(\mu)|<\pi$.
\label{lemma:trivial}
\end{lemma}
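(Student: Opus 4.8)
The plan is to recognise this as an instance of the Bauer--Fike eigenvalue perturbation theorem applied to the diagonalisable matrix $L^{\text{ideal}} = VDV^{-1}$, with $L^*$ viewed as the perturbation $L^{\text{ideal}} + \Delta$ where $\Delta = L^* - L^{\text{ideal}}$. Fix an eigenvalue $\mu$ of $L^*$. The crux is to establish that there exists an eigenvalue $\lambda$ of $L^{\text{ideal}}$ with $|\mu - \lambda| \le \kappa(V)\,\lVert \Delta \rVert$. Granting this, the lemma follows immediately: since every eigenvalue $\lambda$ of $L^{\text{ideal}}$ satisfies $|\Im(\lambda)| \le |\lambda| \le \rho(L^{\text{ideal}})$, the triangle inequality gives
\begin{equation}
|\Im(\mu)| \;\le\; |\Im(\lambda)| + |\mu - \lambda| \;\le\; \rho(L^{\text{ideal}}) + \kappa(V)\,\lVert \Delta \rVert \;<\; \rho(L^{\text{ideal}}) + \bigl(\pi - \rho(L^{\text{ideal}})\bigr) \;=\; \pi,
\end{equation}
where the strict step uses the hypothesis $\lVert \Delta \rVert < (\pi - \rho(L^{\text{ideal}}))/\kappa(V)$. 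Note in passing that the hypothesis is only non-vacuous when $\rho(L^{\text{ideal}}) < \pi$, which it silently forces by allowing $\Delta$ arbitrarily small.

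For the perturbation bound I would give the standard short argument rather than merely cite it. If $\mu$ is itself an eigenvalue of $L^{\text{ideal}}$, take $\lambda = \mu$; otherwise $D - \mu I$ is invertible, and from the singularity of $L^* - \mu I$ --- equivalently of $V^{-1}(L^* - \mu I)V = (D - \mu I) + V^{-1}\Delta V = (D - \mu I)\bigl(I + (D - \mu I)^{-1} V^{-1}\Delta V\bigr)$ --- one concludes that $I + (D - \mu I)^{-1}V^{-1}\Delta V$ is singular, hence $\lVert (D - \mu I)^{-1}V^{-1}\Delta V\rVert_2 \ge 1$. Submultiplicativity then yields $1 \le \lVert (D - \mu I)^{-1}\rVert_2\,\kappa_2(V)\,\lVert \Delta \rVert_2$, and since $D$ is diagonal $\lVert (D - \mu I)^{-1}\rVert_2 = 1/\min_{\lambda}|\mu - \lambda|$ over eigenvalues $\lambda$ of $L^{\text{ideal}}$, giving $\min_\lambda|\mu - \lambda| \le \kappa_2(V)\,\lVert\Delta\rVert_2$.

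The only real subtlety, and the step I would be most careful about, is the norm bookkeeping: the argument above is naturally carried out in the spectral norm, whereas $\kappa(V)$ and $\lVert\Delta\rVert$ in the statement refer to the Frobenius norm used throughout the paper. This is harmless because $\lVert M\rVert_2 \le \lVert M\rVert_{\mathrm F}$ for every matrix $M$, so $\kappa_2(V) \le \kappa(V)$ and $\lVert\Delta\rVert_2 \le \lVert\Delta\rVert$, whence $\min_\lambda |\mu - \lambda| \le \kappa_2(V)\lVert\Delta\rVert_2 \le \kappa(V)\lVert\Delta\rVert$ --- exactly the bound invoked in the first paragraph. Everything else is routine, so I expect no further difficulty.
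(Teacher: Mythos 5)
Your proposal is correct and follows essentially the same route as the paper: apply the Bauer--Fike theorem to $L^{\text{ideal}}=VDV^{-1}$ with perturbation $L^*-L^{\text{ideal}}$, then conclude via the triangle inequality (the paper bounds $|\Im(\mu)|\leq|\mu|\leq|\lambda|+|\mu-\lambda|$, you bound $|\Im(\mu)|\leq|\Im(\lambda)|+|\mu-\lambda|$, an immaterial variation). Your explicit handling of the spectral-versus-Frobenius norm issue via $\lVert M\rVert_2\leq\lVert M\rVert_{\mathrm F}$ is a careful touch the paper leaves implicit, and your self-contained Bauer--Fike derivation is sound.
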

\begin{proof}
Let $\mu$ be an eigenvalue of $L^*$. By the Bauer-Fike theorem, there exists an eigenvalue $\lambda$ of $L^{\text{ideal}}$ such that
\begin{equation}
|\mu|-|\lambda|\leq |\mu-\lambda|\leq \kappa(V)\lVert L^*-L^{\text{ideal}}\rVert<\pi -\rho(L^{\text{ideal}}).
\end{equation}
Thus,
\begin{equation}
|\Im(\mu)|\leq |\mu|< \pi-\rho(L^{\text{ideal}})+|\lambda|\leq \pi.
\end{equation}
\end{proof}

Note that for $L^\text{ideal}$ generating a unitary gate, all the eigenvalues of the ideal transfer matrix $E^\text{ideal}=e^{L^{\text{ideal}}}$ have modulus $1$, which implies $\rho(L^\text{ideal})\leq \pi$ for $L^\text{ideal}$ in the principal branch. It is worth emphasising that \Cref{lemma:trivial} is not asymptotic in nature, and its assumption, which only depends on $L^\text{ideal}$, holds for realistic values of $c_2$ for many practically relevant $L^{\text{ideal}}$. For example, for the identity gate with $L^\text{ideal}=0$, we have $\rho(L^{\text{ideal}})=0$ and $\kappa(V)=1$, so $c_2<\pi$ suffices. As another example, for the $T\otimes I$ gate, we have $\rho(L^{\text{ideal}})=\frac{\pi}{4}$ and $\kappa(V)=1$, so we can choose $c_2<\frac{3}{4}\pi$. Unfortunately, for both CNOT and ISWAP, we have $\rho(L^{\text{ideal}})=\pi$, so \Cref{lemma:trivial} does not apply. 

When $E$ has degenerate eigenvalues, a degenerate eigenspace of $E$ could split into different eigenspace of $L^*$ due to the non-uniqueness of complex logarithms, which induces continuous degrees of freedom in specifying an $A\in\log(E)$. We call this phenomenon eigenspace splitting. It should be clear that conditioned on eigenspace splitting not occurring for $L^*$, the Convex Solve method will eventually locate $L^*$ in $\log(E)$ since in this case, the Convex Solve method correctly enumerates over the remaining discrete degrees of freedom. Our main observation states that if the assumption of \Cref{lemma:trivial} holds, then eigenspace splitting cannot occur for $L^*$. We can now prove \Cref{thm:trivial} from the main text, which stated that provided $L^*$ and $L^{\text{ideal}}$ satisfy $\lVert L^*-L^{\text{ideal}}\rVert<\frac{\pi-\rho(L^{\text{ideal}})}{\kappa(V)}$, then given the exact transfer matrix $E=\exp{L^*}$, the Convex Solve method will output a solution $L\in\mathcal{L}\cap\log(E)$.

\begin{proof}[Proof of \Cref{thm:trivial}]
First consider the case where $E$ has $d^2$ distinct eigenvalues. Then \Cref{lemma:trivial} directly implies that $L^*$ is the principal branch of $\log(E)$. 

Now suppose $E=E^*$ contains degenerate eigenvalues. Suppose there exist $a,b\in\RR$ and integer $k$ with $|k|\geq 1$ such that $L^*$ has two eigenvalues $a+bi$ and $a+bi+2\pi k i$ which both exponentiate to the eigenvalue $e^{a+bi}$ for $E$. Then either $|b|\geq\pi$ or $|b+2\pi k|\geq\pi$, which contradicts \Cref{lemma:trivial}. Therefore, $L^*$ and $E$ have the same set of eigenspace projectors and $L^*$ is the principal branch of $\log(E)$.
\end{proof}

Although \Cref{thm:trivial} does not apply if $E^{\text{ideal}}$ has real negative eigenvalues, it remains true that if eigenspace splitting does not occur for $L^*$, then the Convex Solve method succeeds on $E$. Next, we describe two mechanisms by which eigenspace splitting could occur when $E^{\text{ideal}}$ contains real negative eigenvalues for arbitrarily small $\lVert L^*-L^{\text{ideal}}\rVert$. 

Consider a situation where $-1$ is a degenerate eigenvalue of $E^{\text{ideal}}$ with degeneracy at least $4$ (for example, the ideal transfer matrix of CNOT has a $6$-dimensional degenerate $-1$ eigenspace). Then $\pi i$ and $-\pi i$ are eigenvalues of $L^{\text{ideal}}$ each with degeneracy at least $2$. Then for every $\varepsilon>0$, it is possible for $L^*$ to have eigenvalues $(\pi+\varepsilon) i$, $(-\pi-\varepsilon)i$, $(\pi-\varepsilon)i$, and $(-\pi+\varepsilon) i$. Now, $(\pi+\varepsilon)i$ and $(-\pi+\varepsilon)i$ exponentiate to the same eigenvalue for $E^*$ and similarly for $(\pi-\varepsilon)i$ and $(-\pi-\varepsilon)i$. Thus, a degenerate eigenspace of $E^*$ corresponding to the eigenvalue $e^{(\pi+\varepsilon)i}$ gets split into different eigenspaces for $L^*$ corresponding to the eigenvalues $(\pi+\varepsilon)i$ and $(-\pi+\varepsilon)i$. Note that since $\varepsilon$ is arbitrary, this can happen for arbitrary $\lVert L^*-L^{\text{ideal}}\rVert$, so no condition on $\lVert L^*-L^{\text{ideal}}\rVert$ alone can guarantee eigenspace splitting does not occur for noisy CNOT gates. We note that the behaviour described in this paragraph has never been observed in practice during our numerical testing. 

Another failure mode for the Convex Solve method occurs when $E$ persists to have real negative eigenvalues. This can happen when precisely one out of the two qubits is noisy. We view such scenarios to be highly non-generic. 

In the real world, the input $E$ obtained from quantum process tomography suffers from finite statistical error controlled by the number of sampling shots. In particular, $E\neq E^*$ and $E$ will almost never correspond to a Markovian channel. At best, we can assume $E$ is the transfer matrix of a CPTP channel since it is cheap to project $E$ to be CPTP. Hence, the actual input $E$ to our problem is always the transfer matrix of a non-Markovian channel, and the goal is to find a Lindbladian $L$ that best fits $E$ with the promise that $E$ is close to some unknown Markovian channel $E^*$. The presence of statistical noise in the input $E$ is necessary for the Lindbladian fitting problem to be non-trivial for the family of Markovian channels we consider as statistical noise causes the input to be non-Markovian.

\section{Detailed Description of Alternating Projections}\label{app:TTEAPtech}
\SetKwFunction{FnAP}{APCore}
In this appendix, we give a more complete description of our Alternating Projections method for solving the Lindbladian fitting problem. Here we report the specific implementation used to obtain the numerical results given in the main text, but we note that our approach is heuristic, and it is not necessary to be prescriptive about technical details such as the random perturbation and logarithm search steps. For clarity we separate out the outer loop (\cref{alg:APouter}) that handles this preprocessing of the input tomographic data, and the core subroutine (\cref{alg:APcore}) that alternates between projections of the Lindbladian eigenvectors onto the approximate eigenspaces of the tomographic data, and projection of the resultant approximate matrix logarithm $\bar{A}$ onto the set of Lindbladians. We assume the input transfer matrix $E$ and initial guess Lindbladian $L_0$ are both diagonalisable. In the presence of statistical error, the eigenvalues of $E$ will be distinct, but this is not a requirement of the algorithm.

\begin{algorithm}\small
 \SetKwInOut{Input}{input}
 \SetKwInOut{Output}{output}
 \SetKw{KwMin}{minimise}
 \SetKw{KwSubjTo}{subject to}
 \SetKw{KwBreak}{break}
\SetKwFunction{FnAP}{APCore}
 \Input{Estimated (diagonalisable) transfer matrix $E$; initial guess Lindbladian $L_0$; precision $\beta\geq0$; maximum search depth $T$; number of random starts $N$}
 \Output{Estimated Lindbladian $L^{\text{est}}$ such that $\exp L^{\text{est}} \approx E $}
Compute set of eigenvalue-eigenvector pairs $\{(\mu_j\, ,\Ket{r_j})\}_{j=1}^{d^2}$ for $E$, and compute set of corresponding left eigenvectors $\{\Bra{l_j}\}_j$ such that $\BraKet{l_i}{r_j} = \delta_{i,j}$\\
 \For{$ j = 1$ \KwTo $d^2$}{$\lambda_j \gets$ logarithm of $\mu_j$ such that $|\Im(\lambda_j)|\leq\pi$}
\label{step:cluster}Perform eigenvalue clustering: $\mu_i, \mu_j$ are taken to be in the same cluster if $|\mu_i-\mu_j|\leq\beta$. Suppose $n$ clusters are detected. The $k$-th cluster is defined by an index set $C_k = \{i_{k,1},\ldots,i_{k,w_k}\}$, where $w_k =|C_k|$ corresponds to the rank of the associated subspace so that for each pair
$(\mu_j\, ,\Ket{r_j})$, the index $j$ is in exactly one of the index sets\\
For each $k$, let $\Pi_k = \sum_{j \in C_k} \KetBra{r_j}{l_j}$\\
\label{step:constructM}Enumerate hermiticity-compatible branches of the matrix logarithm $M$\\ \tcp{See main text for construction of $M$} 
 \For{$m \in M$}{
	$\hat{\lambda}_{i_{k,a}} \gets \lambda_{i_{k,a}} + 2 \pi m_{i_{k,a}} i$\label{step:eigshift}\\
	\For{$p = 1$ \KwTo $N$}{
	Generate random perturbation $\tilde{L}_0$ of $L_0$ \label{step:perturb}\\ \tcp{in our implementation, we alternate between $\tilde{L}_0=L_0+D$ and $\tilde{L}_0=L_0+H^{\otimes d}DH^{\otimes d}$ for random diagonal matrix $D$ with small norm.}
	\label{step:APcore}$L_{m,p} \gets $ \FnAP{$\tilde{L}_0$, $T$, $\{ \hat{\lambda}_j \}_{j=1}^{d^2}$, $\{\Pi_k\}_{k=1}^{n}$, $\{C_k \}_{k=1}^{n}$}\\
	}
}
$L_{\text{est}} \gets \text{argmin}_{L_{m,p}}\norm*{e^{L_{m,p}}-E}$\\
 \KwRet $L_{\text{est}}$
 \caption{Alternating Projections outer loop}\label{alg:APouter}
\end{algorithm}

\begin{algorithm}\small
\SetKwFunction{FnAP}{APCore}
 \SetKwInOut{Input}{input}
 \SetKwInOut{Output}{output}
 \SetKw{KwMin}{minimise}
 \SetKw{KwSubjTo}{subject to}
 \SetKw{KwBreak}{break}
\SetKwProg{Fn}{Function}{:}{}
 \Input{Approximate Lindbladian matrix $\tilde{L}_0$; maximum iteration depth $T$; list of log-eigenvalues $\hat{\lambda}_j$ of tomographic data $E$; set of $n$ orthogonal projectors $\Pi_k$ satisfying $\sum_{k=1}^{n} \mathrm{rank(\Pi_k)} = d^2$; set of index sets $\{C_k \}_{k=1}^n$ corresponding to eigenvalue clustering}
 \Output{A candidate Lindbladian $\tilde{L}_T$}
\Fn{\FnAP{$\tilde{L}_0$, $T$, $\{ \lambda_j \}_{j=1}^{d^2}$, $\{\Pi_k\}_{k=1}^{n}$, $\{C_k \}_{k=1}^{n}$}}{
	$\hat{D}\gets \mathrm{Diag}(\hat{\lambda}_1,\ldots,\hat{\lambda}_{d^2})$\\
	\For{$t=0$ \KwTo $T-1$}{
		Compute eigenvalue-eigenvector pairs $(\sigma_1,\Ket{\tilde{v}_1}),\ldots,(\sigma_{d^2},\Ket{\tilde{v}_{d^2}})$ for $\tilde{L}_t$\\
		Assign each $(\sigma_j,\Ket{\tilde{v}_j})$ to exactly one subspace projector $\Pi_k$ by solving minimum-cost maximum-flow problem to minimise $\lVert \Ket{\tilde{v}_j}-\Pi_k\Ket{\tilde{v}_j}\rVert$ for each pairing. For each $k$, let this assignment define an index set $C'_k = \{i'_{k,a} \}_{a=1}^{w_k}$, so that $j$ is in $C'_k$ if $\Ket{\tilde{v}_j}$ is assigned to the $k$-th subspace\label{step:mincostmaxflow}\;
		Initialise $d^2 \times d^2$ matrix $K$\\
		\For{$k=1$ \KwTo $n$}{
			Find minimum-cost perfect matching $W_k$ between data log-eigenvalues $\hat{\lambda}_{i_{k,a}}$ where ${i_{k,a}} \in C_k$  and Lindbladian eigenvalues $\sigma_{i'_{k,b}}$ where $i'_{k,b} \in C'_k$ according to cost function $\abs*{\hat{\lambda}_{i_{k,a}} - \sigma_{i'_{k,b}}}$\\
			For each $\{\hat{\lambda}_{i_{k,a}}, \sigma_{i'_{k,b}}\} \in W_k$, set the $i_{k,a}$-th column of $K$ to $\Pi_k\Ket{\tilde{v}_{i'_{k,b}}}$\\
		}
		$\bar{A} \gets K\hat{D}K^{-1}$\\
		Let $L_{\text{opt}}$ be the solution to the convex optimisation problem:\\
 			\quad \quad \quad \KwMin $\norm*{L - \bar{A}}$  \\
 			\quad \quad \quad \KwSubjTo $L$ Lindbladian \\
 		\uIf{
$\norm*{e^{L_{\text{opt}}}- E}<\norm*{e^{\tilde{L}_t} - E}$
 			}{
 				$\tilde{L}_{t+1} \gets L_{\text{opt}}$\\
 			}
 		\uElse{
 		$\tilde{L}_T \gets \tilde{L}_t$\\
           \KwBreak}
	}
\KwRet $\tilde{L}_T$
}
 \caption{Alternating Projections core subroutine}\label{alg:APcore}
\end{algorithm}
The enumeration of relevant branches of the logarithm in~\cref{step:constructM} of~\cref{alg:APouter} goes as follows.  We assume for simplicity that $|\Im(\lambda)|\leq\pi$ for every eigenvalue $\lambda$ of $L^{\text{ideal}}$ i.e. $L^{\text{ideal}}$ is in the principal branch of $\log(E^{\text{ideal}})$. We then start with an initial set of branches
$$M'=\{(m_1,\ldots,m_{d^2})\in\mathbb{Z}^{d^2}:m_j\in\{-1,0,1\}\text{ for every } j\in\{1,\ldots,d^2\}\},$$ where each entry corresponds to a shift in the $j$-th eigenvalue by $2 \pi m_j i$, as per~\cref{step:eigshift} of~\cref{alg:APouter}.
Recall that for $\hat{\lambda}_1,\ldots,\hat{\lambda}_{d^2}$ to be the set of eigenvalues of a $d^2\times d^2$ Lindbladian matrix, they need to come in complex conjugate pairs \cite{Onorati2023fittingquantumnoise}. $M$ is defined to be the subset of elements of $M'$ satisfying this condition. This constraint alone greatly reduces the number of branches of $M'$ to try, and the search over $m\in M$ is embarrassingly parallelisable.

The minimum-cost maximum-flow problem in~\cref{step:mincostmaxflow} of~\cref{alg:APcore} is formulated as follows. We construct a directed graph with $d^2+n+2$ vertices, where $n$ is the number of approximate eigenspaces determined by the eigenvalue-clustering procedure in~\cref{step:cluster} of~\cref{alg:APouter}. There is a vertex corresponding to each eigenvector $\Ket{\tilde{v}_j}$, $j\in\{1,\ldots,d^2\}$ and a vertex corresponding to each eigenspace projector $\Pi_k$, $k\in\{1,\ldots,n\}$. For convenience, we shall use $\tilde{v}_j$ and $\Pi_k$ to denote these vertices. In addition, there is a special source vertex $s$ and a special sink vertex $t$. Every directed edge in the graph has an integer capacity and an integer weight. For every $j\in\{1,\ldots,d^2\}$, there is an edge from $s$ to $\tilde{v}_j$ with capacity $1$ and weight $0$. For every $j\in\{1,\ldots,d^2\}$ and $k\in\{1,\ldots,n\}$, there is an edge from $\tilde{v}_j$ to $\Pi_k$ with capacity $1$ and weight\footnote{The purpose of the constant $10^4$ here is to amplify differences between floating point values so that any pair of edges for which $\lVert \tilde{v}_j-\Pi_k\tilde{v}_j\rVert$ differs will have distinct integer weights. This is to avoid numerical problems that can afflict flow problem solvers when edges have floating point weights. The precise value of the large constant is unimportant.} given by $\lfloor 10000\lVert \tilde{v}_j-\Pi_k\tilde{v}_j\rVert\rfloor$.  For every $k\in\{1,\ldots,n\}$, there is an edge from $\Pi_k$ to $t$ with capacity the rank of $\Pi_k$ and weight $0$. The capacity constraints on the last set of edges encode the requirement that an eigenspace projector $\Pi_k$ with rank $l$ needs to accept exactly $l$ eigenvectors. Then, an assignment of the eigenvectors to the eigenspace projectors that minimises the sum of the weights can be read off from a minimum-cost maximum $(s,t)$-flow of the graph.

By diagonalising $L^{\text{ideal}}$ as $L^{\text{ideal}}=VDV^{-1}$, assuming $\lVert L^*-L^{\text{ideal}}\rVert<\frac{2\pi}{\kappa(V)}$, and together with the simplifying assumption that $L^\text{ideal}$ is in the principal branch of $\log(E^{\text{ideal}})$, the same argument in the proof of \Cref{lemma:trivial} implies it suffices to consider $|m_j|\leq 1$ for every $j\in\{1,\ldots,d^2\}$. The effect of the randomness introduced in~\cref{step:perturb} of~\cref{alg:APouter} is to induce a slightly different eigendecomposition of the initial guess Lindbladian in each call to \FnAP in~\cref{step:APcore}. This randomness in the starting point $\tilde{L}_0$ can lead to a different candidate Lindbladian model $\tilde{L}_T$ output for each call. Therefore we repeat this procedure $N$ times for each branch and accept the candidate model that best fits the tomographic data.

\end{document}